\documentclass[english]{article}

\usepackage[utf8]{inputenc}

\usepackage{geometry}
\geometry{verbose,tmargin=3cm,bmargin=3cm,lmargin=2cm,rmargin=2cm}

\usepackage{enumitem}
\setenumerate[1]{leftmargin=3em}

\usepackage{amssymb}
\usepackage{amsmath}
\usepackage{amsthm}
\usepackage{amsopn}

\usepackage{mathrsfs}
\usepackage{mdframed}
\usepackage{tikz-cd}

\usepackage{esint}

\usepackage{thmtools, thm-restate}
\usepackage{hyperref}

\usepackage{verbatim}
\usepackage{mathtools}
\usepackage{fullpage}
\usepackage{dsfont}

\usepackage{physics}
\usepackage{setspace}
\usepackage{epstopdf}
\usepackage[T1]{fontenc}
\usepackage{graphicx}

\usepackage{scalerel,stackengine}

\usepackage{ulem}

\usepackage{authblk}

\usepackage{soul}

\theoremstyle{remark}
\newtheorem{Remark}{Remark}[section]

\theoremstyle{plain}
\newtheorem{Theorem}{Theorem}[section]
\newtheorem{Lemma}{Lemma}[section]
\newtheorem{Corollary}{Corollary}[section]
\newtheorem{Definition}{Definition}[section]
\newtheorem{Proposition}{Proposition}[section]

\newtheorem{assump}{Assumption}

\DeclareRobustCommand{\rchi}{{\mathpalette\irchi\relax}}
\newcommand{\irchi}[2]{\raisebox{\depth}{$#1\chi$}}

\setcounter{secnumdepth}{2}
\setcounter{tocdepth}{2}

\def \qed {\hfill \vrule height6pt width 6pt depth 0pt}

\newcommand{\R}{\mathbb{R}}
\newcommand{\N}{\mathbb{N}}

\newcommand{\p}{\partial}

\usepackage{ulem}

\allowdisplaybreaks

\numberwithin{equation}{section}

\stackMath
\newcommand\reallywidehat[1]{%
	\savestack{\tmpbox}{\stretchto{%
			\scaleto{%
				\scalerel*[\widthof{\ensuremath{#1}}]{\kern-.6pt\bigwedge\kern-.6pt}%
				{\rule[-\textheight/2]{1ex}{\textheight}}
			}{\textheight}%
		}{0.5ex}}%
	\stackon[1pt]{#1}{\tmpbox}%
}

\renewcommand{\tilde}{\widetilde}
\renewcommand{\leq}{\leqslant}
\renewcommand{\geq}{\geqslant}

\title{Combined mean-field and semiclassical limits\\ of large fermionic systems\footnote{Chen, L., Lee, J. \& Liew, M. Combined Mean-Field and Semiclassical Limits of Large Fermionic Systems. J Stat Phys 182, 24 (2021). https://doi.org/10.1007/s10955-021-02700-w}}
\author[1]{Li Chen\thanks{chen@math.uni-mannheim.de}}
\author[2]{Jinyeop Lee\thanks{jinyeoplee@kias.re.kr}}
\author[1]{Matthew Liew\thanks{mliew@mail.uni-mannheim.de}}
\affil[1]{Institut f\"ur Mathematik, Universit\"at Mannheim}
\affil[2]{School of Mathematics, Korea Institute for Advanced Study}
\date{}

\begin{document}
	\setcounter{page}{1}
	\maketitle
	
	\begin{abstract}
		We study {the} time dependent Schr\"odinger equation for large spinless fermions with the semiclassical scale $\hbar = N^{-1/3}$ in three dimensions. By using the Husimi measure defined by coherent states, we rewrite the Schr\"odinger equation into a BBGKY type of hierarchy for the $k$ particle Husimi measure. Further estimates are derived to obtain the weak compactness of the Husimi measure, and in addition uniform estimates for the remainder terms in the hierarchy are derived in order to show that in the semiclassical regime the weak limit of the Husimi measure is exactly the solution of the Vlasov equation. \\
		Keywords: \textit{Large fermionic system, Husimi measure, semiclassical limit, BBGKY, Wasserstein distance, Vlasov equation}
	\end{abstract}

\section{Introduction}
{In this paper, we aim to study the combined mean-field and semiclassical limit of $N$-fermions from time-dependent Schr\"odinger equation to Vlasov equation. 
	The following anti-symmetric subspace of $L^2(\R^{3N})$ is considered for fermions,
	\begin{equation*}
	L^2_a(\R^{3N}) := \left\{ \Psi \in L^2(\R^{3N}) : \Psi (q_{\pi(1)}, \dots, q_{\pi(N)}) = \varepsilon(\pi)  \Psi (q_1, \dots, q_N) \right\}.
	\end{equation*}
	It is known that a system of fermions initially confined in a volume of order one have kinetic energy of order $N^{5/3}$ due to the Pauli principle. Therefore, to balance the order, the scale of the interaction term should be of order $N^{-1/3}$, 	we refer to \cite{benedikter2014mean,Benedikter2016book} for more details about this scaling. After a time rescaling of $N^{1/3}$ the Sch\"odinger equation for $N$-fermions is written into	\begin{equation*}
	N^{\frac{1}{3}}\mathrm{i} \partial_t \Psi_{N,t} =\left[ - \frac{1}{2} \sum_{j=1}^N  \Delta_{q_j} + \frac{1}{2N^\frac{1}{3}} \sum_{i \neq j}^N V(q_i - q_j)\right] \Psi_{N,t}.
	\end{equation*}
	By denoting the semiclassical scale $\hbar = N^{-1/3}$ and multiplying both sides by $\hbar^2$, one can recover the $N^{-1}$, the coupling constant for the mean field interaction. Hence one arrives at {the} following many body Schr\"odinger equation
	\begin{equation}\label{Schrodinger_1}
	\begin{cases}
	\displaystyle \mathrm{i} \hbar \partial_t \Psi_{N,t} = \left[ - \frac{\hbar^2}{2} \sum_{j=1}^N  \Delta_{q_j} + \frac{1}{2N} \sum_{i \neq j}^N V(q_i - q_j) \right]  \Psi_{N,t}=:H_N\Psi_{N,t}, \\
	\Psi_{N,0} = \Psi_N,
	\end{cases}
	\end{equation}
	where $\Psi_{N,t} \in L^2_a(\R^{3N})$, $\Psi_N$ is the initial data in $L_a^2(\R^{3N})$, and $V$ is the interacting potential.
	
	The limit from many body Schrödinger equation to the Vlasov equation has been studied extensively in the literature. Narnhofer and Sewell \cite{Narnhofer1981} and Spohn \cite{Spohn1981} are the first to prove this limit with the potential $V$ assumed to be analytic and $C^2$ respectively.
	
	For large $N$, in the mean field limit regime, the solution of many body fermionic Schr\"odinger equation can be approximated by the solution of the following nonlinear Hartree-Fock equation,
	\begin{equation*}
	\begin{cases}
	\mathrm{i} \hbar \p_t \omega_{N,t} = \left[ -\hbar^2 \Delta + (V*\varrho_t) - X_t, \omega_{N,t} \right], \\
	\omega_{N,0} = \omega_N,
	\end{cases}
	\end{equation*}
	where $\omega_{N,t}$ is the one-particle density matrix, $\varrho_t(q) = N^{-1} \omega_{N,t} (q;q)$ and $X_{N,t}$ is a small term having the kernel $X_{t}(x,y) = N^{-1} V(x-y)  \omega_{N,t} (x;y)$.
	In \cite{ELGART20041241}, { for the initial data being a Slater determinant, the approximation} has been proved for short time for analytic interaction potential by using BBGKY hierarchy, while \cite{benedikter2014mean} proved {the approximation with convergence rate} for arbitrary time and weakened potential in the framework of second quantization. Similar results have been extended for mixed states in \cite{benediktermixed} and for relativistic case in \cite{benedikter2014rel}.
	Recently, with the help of Fefferman-de la Llave decomposition \cite{Fefferman1986,hainzl2002general}, weaker assumptions on the interaction potential have been considered. Specifically, Coulomb potential has been considered in \cite{Porta2017}, inverse power law in \cite{saffirio2017mean}. Further relevant literature on the fermionic case for the {mean-field} limit problem of Schr\"odinger equation can be found in \cite{BACH20161,Frohlich2011,petrat2014derivation,Petrat2017,Petrat2016ANM}.

	In parallel, the mean field limit for the bosonic case from many body Schr\"odinger system to nonlinear Hartree equation was proved in \cite{erdos2001derivation} for Coulomb potential. Also for Coulomb potential, the convergence with rate $N^{1/2}$ has been obtained in \cite{rodnianski2009quantum}. 
	Later, it has been optimized to the optimal convergence rate $N^{-1}$ in \cite{Chen2011}, and furthermore for stronger singular potentials in \cite{Chen2018}.

	The semiclassical limit from Hartree-Fock equation to Vlasov equation has been obtained in the literature by using Wigner-Weyl transformation of the one-particle density matrix $\omega_{N,t}$ defined by 
	\begin{equation}\label{Wigner}
	W_{N,t} (q,p) = \left( \frac{\hbar}{2\pi} \right)^3 \int \mathrm{d}y\ e^{-\mathrm{i}p\cdot y} \omega_{N,t} \left( x+\frac{\hbar}{2}y; x-\frac{\hbar}{2}y \right),
	\end{equation}
	which has been intensively studied in the semiclassical limit of quantum mechanics by Lions and Paul in \cite{Lions1993}. 
	In \cite{benedikter2016Hartree} the authors compared the inverse Wigner transform of the Vlasov solution and the solution of Hartree-Fock and get the convergence rate in the trace norm as well as Hilbert-Schmidt norm with the regular assumptions on the initial data. The works in this direction have also been extended for inverse power law potential \cite{Saffirio2019}, convergence rate in Schatten norm in \cite{lafleche2020strong}, and Coulomb potential and mixed states in \cite{saffirio2019Hartree}. The convergence of relativistic Hartree dynamic to relativistic Vlasov equation has also been considered in \cite{Dietler2018}. Further convergence results from Hartree to Vlasov can be found in \cite{amour2013classical,amour2013,Gasser1998,Markowich1993}.
	
	It is known that Wigner transform \eqref{Wigner} is not a true probability density as it may be negative in certain phase-space. In fact, \cite{Hudson1974,PhysRevA.79.062302,doi:10.1063/1.525607} concludes that the Wigner measure is non-negative if and only if the pure quantum states are Gaussian, whilst  \cite{doi:10.1063/1.531326} state that the Wigner measure is non-negative if the state is a convex combination of coherent states. Nevertheless, it has been shown that if one convolutes the Wigner measure with a Gaussian function in phase-space, it will yield a non-negative probability measure known as Husimi measure \cite{Fournais2018,Combescure2012,Zhang2008}. In fact, from \cite[p.21]{Fournais2018}, the Husimi measure is given by
	\begin{equation}\label{gaussian_m_vs_W}
	m_{N,t}^{(k)}  = \frac{N(N-1)\cdots (N-k+1)}{N^k} W^{(k)}_{N,t} * \mathcal{G}^\hbar,
	\end{equation}
	where $1 \leq k \leq N$, $\mathcal{G}^\hbar = (\pi \hbar)^{-3k} \exp \big(-\hbar^{-1} (\sum_{j=1}^k |q_j|^2 + |p_j|^2) \big)$ and $W^{(k)}_{N,t}$ is the Wigner transform of $k$-particle density matrix. 
	
	In the recent development, the convergence to Vlasov equation in the semiclassical Wasserstein pseudo-distance has been proved in \cite{Golse2017,Golse2019,golse:hal-01334365,Lafleche2019GlobalSL,Lafleche2019PropagationOM}.  The semiclassical Wasserstein pseudo-distance is computed between the Husimi measure and Vlasov solution.
	
	One can also show the combined limit by first taking the semiclassical limit and then the mean field limit from many particle Schr\"odinger to Vlasov via the Liouville equations, and the corresponding BBGKY hierarchy\footnote{See Figure \ref{nice_figure}}. This has been done in \cite{Golse2017}.
}	

\begin{figure}
	\begin{center}
		\begin{tikzcd}[column sep=3cm,row sep=3cm,cells={nodes={draw=black}}]
		\text{$N$-fermionic Schr\"odinger} \arrow[r, "N \to \infty"] \arrow[d,"\hbar \to 0"] \arrow[rd,"\hbar=N^{-1/3}\to 0"{sloped,auto,pos=0.5},red]
		& \text{Hartree Fock} \arrow[d, "\hbar \to 0"] \\
		\text{Liouville}\arrow[r, "N \to \infty"]& \text{Vlasov}
		\end{tikzcd}
	\end{center}
	\caption{Relations of $N$-fermionic Schrödinger systems to other mean-field equations \cite{Golse2016,Golse2017}.}\label{nice_figure}
\end{figure}
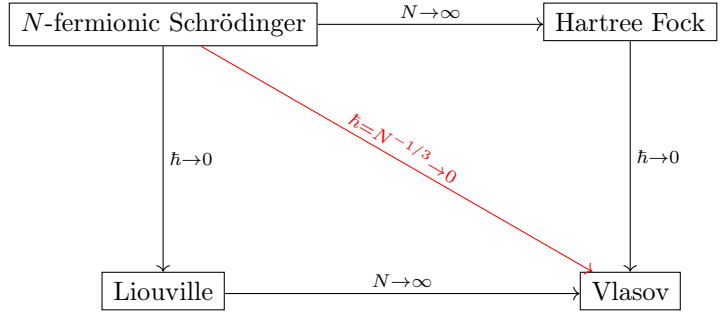

{
	Our goal, therefore, is to obtain the Vlasov equation from Schr\"odinger equation directly, as shown in the  diagonal line of Figure \ref{nice_figure}, by taking $N \to \infty$ and $\hbar \to 0$ simultaneously. In order to do this,} it is convenient for us to introduce the second quantization framework in our study { of} the quantum many-body systems. In particular, we utilize the notations in \cite{benedikter2014mean,Benedikter2016book,Chen2011} where the fermionic Fock space is defined as
\[
\mathcal{F}_a = \bigoplus_{n \geq 0} L^2_a (\R^{3n}, (\mathrm{d}x)^{\otimes n}),
\]
where we denote $(\mathrm{d}x)^{\otimes n} = \mathrm{d}x_1 \cdots \mathrm{d}x_n$. The creation and annihilation operator in terms of their respective distributive forms,
\begin{equation}\label{distrb_annil_creation_def}
a^*(f) = \int \mathrm{d}x\ a^*_x f(x), \quad a(f) = \int \mathrm{d}x\ a_x \overline{f(x)}.
\end{equation}

Due to the canonical anti-commutator relation (CAR) in the fermionic regime, we have that for all $f,g \in H^1(\R^3)$
\begin{equation}
\{ a(f), a^*(g)\} = \left< f,g\right>, \quad \{ a^*(f), a^*(g)\} = \{ a(f), a(g)\} = 0,
\end{equation}
where $\{A, B\} = AB + BA$ is the anti-commutator. In particular, the CAR for operator kernels hold as follow
\begin{equation}\label{eq:CAR}
\{ a_x, a^*_y\} = \delta_{x=y}, \quad \{a^*_x, a^*_y\} = \{ a_x, a_y\} = 0.
\end{equation}
This CAR in distributive form will be frequently used in our computations.

As in \cite{benedikter2014mean}, we may write the corresponding Hamiltonian in terms of the operator valued distribution in $\mathcal{F}_a$ by
\begin{equation}\label{Fock_Hamil}
\mathcal{H}_N = \frac{\hbar^2}{2} \int \mathrm{d}x\ \nabla_x a^*_x \nabla_x a_x + \frac{1}{2N} \iint \mathrm{d}x\mathrm{d}y\ V(x-y) a^*_x a^*_y a_y a_x.
\end{equation}
Therefore, we rewrite the Schr\"odinger equation in Fock space as follows,
\begin{equation}\label{Schrodinger_fock}
\begin{cases}
\displaystyle \mathrm{i} \hbar \partial_t \psi_{N,t} =\mathcal{H}_N \psi_{N,t}, \\
\psi_{N,0} = \psi_N,
\end{cases}
\end{equation}
for all $\psi_{N,t} \in \mathcal{F}_a^{(N)} $ and $t \in [0,T]$, where $\psi_N\in \mathcal{F}^{(N)}_a$ with $\|\psi_N\|=1$. The solution to the above Cauchy problem is $\psi_{N,t} = e^{-\frac{i}{\hbar} \mathcal{H} t} \psi_{N}$, with a given initial data $\psi_{N}$.

\begin{Remark}
	It should be noted the states $\psi_{N,t}$ in our analysis stays in the $N$th-sector of $\mathcal{F}_a$ due to the definition of Husimi measure which will be given later. Therefore, denoting $\mathcal{F}_a^{(n)}$ to be the $n$-th sector in $\mathcal{F}_a$, we say that $\psi_{N,t} \in \mathcal{F}_a^{(N)}$ for all $t \geq 0$.
\end{Remark}

Furthermore, we use the definition of the number and kinetic energy operators as follows,
\begin{equation}\label{def_NumOp_KE}
\mathcal{N} =  \int \mathrm{d}x\ a^*_x a_x \quad \text{and}\quad \mathcal{K} =  \hbar^2 \int \mathrm{d}x\ \nabla_x a^*_x \nabla_x a_x,
\end{equation}
respectively. We further explore the properties of the operators in \eqref{def_NumOp_KE} in section \ref{subset_Numop}. 

Next, we shall introduce the Husimi measure. In fact, our notation follows closely with the notations in Fournais, Lewin and Solovej  \cite{Fournais2018} where it deals with large fermionic particles in stationary case. The main tool in their analysis is the use of coherent state, a subtle tool that proves extremely useful in our work as well.

For any real-valued normalized function $f$, the coherent state is given by,\footnote{The function $f$ can be any real-valued function.\cite{Fournais2018} For this paper, we set $f$ to be compactly supported. See Assumption \ref{assume_f_compact}.}
\begin{equation}\label{coherent_def}
f^{\hbar}_{q, p} (y) := \hbar^{-\frac{3}{4}} f \left(\frac{y-q}{\sqrt{\hbar}} \right) e^{\frac{\mathrm{i}}{\hbar} p \cdot y },
\end{equation}
Similar to \cite{Antonio2016} and \cite{Fournais2018}, the $k$-particle Husimi measure is defined as, for any $1 \leq k \leq N$
\begin{equation}\label{husimi_def}
m^{(k)}_{N} (q_1, p_1, \dots, q_k, p_k) := \left< \psi_{N},  a^*(f^{\hbar}_{q_1, p_1})\cdots  a^*(f^{\hbar}_{q_k, p_k} )a(f^{\hbar}_{q_k, p_k})\cdots  a(f^{\hbar}_{q_1, p_1} ) \psi_{N} \right>,
\end{equation}
where $\psi_{N} \in \mathcal{F}_a^{(N)}$ is the $N$-fermionic states, $a(f^\hbar_{q,p})$ and $a^*(f^\hbar_{q,p})$ are the annihilation and creation operators respectively. Husimi measure defined in \eqref{husimi_def} measures how many particles, in particularly fermions, are in the $k$ semiclassical boxes with length scaled of $\sqrt{\hbar}$ centered in its respectively phase-space pair, $(q_1, p_1), \dots, (q_k, p_k)$. 

In the context of this paper, we use $m_{N,t}^{(k)}$ to be the time dependent Husimi measure defined by the solution of the Schr\"odinger equation $\psi_{N,t}$. By using operator kernels defined in \eqref{distrb_annil_creation_def},  we may rewrite the Husimi measure as follows
\begin{equation}\label{husimi_def_1}
\begin{aligned}
&m^{(k)}_{N,t} (q_1, p_1, \dots, q_k, p_k)\\
&:= \dotsint (\mathrm{d}w\mathrm{d}u)^{\otimes k} \left( f^\hbar_{q,p}(w) \overline{f^\hbar_{q,p}(u)} \right)^{\otimes k}  \left< \psi_{N,t},  a^*_{w_1} \cdots a^*_{w_k} a_{u_k} \cdots a_{u_1} \psi_{N,t} \right>,
\end{aligned}
\end{equation}
where the tensor products indicate
\[
(\mathrm{d}w\mathrm{d}u)^{\otimes k} := \mathrm{d}w_1 \mathrm{d}u_1 \cdots \mathrm{d}w_k \mathrm{d}u_k
\]
and
\[
\quad \left( f^\hbar_{q,p}(w) \overline{f^\hbar_{q,p}(u)} \right)^{\otimes k} := \prod_{j=1}^k f^\hbar_{q_j,p_j}(w_j) \overline{f^\hbar_{q_j,p_j}(u_j)}.
\]

Note that the function $f$ here is a very well localized function in practice \cite{Fournais2018}, therefore we may take the following assumption
\begin{assump}\label{assume_f_compact}
	The real-valued function $f \in L^2\cap W^{1,\infty}(\R^3)$ satisfies $\norm{f}_2 = 1$, and has compact support.
\end{assump}

Additionally, we assume that the interaction potential to satisfy

\begin{assump}\label{assume_V}
	$V$ is a real-valued function such that $V(-x) = V(x)$ and $V\in W^{2,\infty}(\R^3)$.
\end{assump}

As is well known that in the mean field semiclassical regime, the dynamic of \eqref{Schrodinger_1} can be approximated by a one particle Vlasov equation. Namely, for all $q, p \in \R^3$
\begin{equation}\label{classical_Vlasov}
\partial_t m_t(q, p)  + p \cdot \nabla_{q} m_t(q, p)  = \nabla \big( V * \rho_t\big)(q) \cdot \nabla_{p}  m_t(q, p),
\end{equation}
with initial data $m_0(q,p)$,
where $m_t(q,p)$ is the time dependent one particle probability density function, and  $\rho_t (q) = \int m_t(q, p)  \mathrm{d}p$.
Although \eqref{classical_Vlasov} is a non-linear equation, such equation would be more suitable to analyze than the increasingly large systems of Schr\"odinger equation. The well-posedness of the above Vlasov problem is given by Drobrushin \cite{Dobrushin1979} for smooth $V$. 

Now, we are ready to state the our main results.

\begin{Theorem} \label{thmmain}	Let Assumptions \ref{assume_f_compact} and $\ref{assume_V}$ hold, $\psi_{N,t}$ be the solution of Schr\"odinger equation \eqref{Schrodinger_fock}, $m^{(k)}_{N,t}$ be the Husimi measure defined in \eqref{husimi_def_1}. If 
	$m^{(1)}_N$, the $1$-particle Husimi measure of the initial data $\psi_N$, { satisfies}
	\begin{equation}\label{momentinitial}
	\iint \mathrm{d}q_1\mathrm{d}p_1 (|{p}_1|^2+|{q}_1|)m^{(1)}_{N}(q_1,p_1)\leq C.
	\end{equation}
	Then, for all $t\geq 0$, the $k$-particle Husimi measure at time $t$, $m^{(k)}_{N,t}$ has a weakly convergent subsequence which converges to $m_{t}^{(k)}$ in $L^1(\R^6)$, where $m_{t}^{(k)}$ is a weak solution of the following infinite hierarchy in the sense of distribution, i.e. it satisfies for all $k\geq 1$ that
	\begin{align}\label{infinitehierarchy}
	&\p_t m_{t}^{(k)}(q_1,p_1,\dots,q_k,p_k)+ \vec{p}_k \cdot \nabla_{\vec{q}_k}m_{t}^{(k)}(q_1,p_1,\dots,q_k,p_k)  \\
	&= \frac{1}{(2\pi)^3}  \nabla_{\vec{p}_k} \cdot \iint \mathrm{d}q_{k+1}\mathrm{d}p_{k+1} \nabla V(q_j - q_{k+1})m_{t}^{(k+1)}(q_1,p_1,\dots,q_{k+1},p_{k+1}).\nonumber
	\end{align}
\end{Theorem}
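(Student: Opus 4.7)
The plan is to derive a BBGKY-type hierarchy directly for the Husimi measures $m_{N,t}^{(k)}$, establish uniform-in-$N$ estimates yielding a weakly convergent subsequence in $L^1$, and verify in the joint limit $N\to\infty$, $\hbar = N^{-1/3}\to 0$ that any limit point satisfies \eqref{infinitehierarchy} in the distributional sense. Concretely, I would first differentiate \eqref{husimi_def_1} in time using \eqref{Schrodinger_fock} to obtain
\[
\p_t m_{N,t}^{(k)} = \frac{1}{\mathrm{i}\hbar}\dotsint(\mathrm{d}w\mathrm{d}u)^{\otimes k}\bigl(f^\hbar_{q,p}(w)\overline{f^\hbar_{q,p}(u)}\bigr)^{\otimes k}\bigl\langle\psi_{N,t},\,[a^*_{w_1}\cdots a^*_{w_k}a_{u_k}\cdots a_{u_1},\,\mathcal{H}_N]\,\psi_{N,t}\bigr\rangle,
\]
and reduce the commutator via the CAR \eqref{eq:CAR}: the kinetic part $\mathcal{K}/2$ in \eqref{Fock_Hamil} contributes a $k$-particle second-quantized observable, while the two-body part contributes a $(k+1)$-particle one. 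The semiclassical identity $-\mathrm{i}\hbar\nabla_w f^\hbar_{q,p}(w) = p\,f^\hbar_{q,p}(w) - \mathrm{i}\sqrt{\hbar}(\nabla f)^\hbar_{q,p}(w)$, expressing that $f^\hbar_{q,p}$ is an approximate eigenvector of $-\mathrm{i}\hbar\nabla$ with eigenvalue $p$ up to an $O(\sqrt{\hbar})$ error, combined with integration by parts in $q_j$, extracts from the kinetic commutator the transport term $\vec{p}_k\cdot\nabla_{\vec{q}_k}m_{N,t}^{(k)}$ appearing on the left-hand side of \eqref{infinitehierarchy} plus a remainder whose $L^1$-mass is controlled by $\sqrt{\hbar}\,\langle\psi_{N,t},\mathcal{K}\psi_{N,t}\rangle/N$. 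For the interaction, Taylor expansion of $V(w_j-w_{k+1})$ around $V(q_j-q_{k+1})$ on the coherent-state scale $\sqrt{\hbar}$ produces
\[
\frac{1}{(2\pi)^3}\sum_{j=1}^{k}\nabla_{p_j}\cdot\iint\mathrm{d}q_{k+1}\mathrm{d}p_{k+1}\,\nabla V(q_j-q_{k+1})\,m_{N,t}^{(k+1)},
\]
up to an $O(\hbar)$ remainder controlled by $\|\nabla^2 V\|_{L^\infty}$ through Assumption \ref{assume_V}.

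Next I would establish uniform a priori bounds. Since $\|\psi_N\|=1$ and $\psi_N\in\mathcal{F}_a^{(N)}$, the CAR yields $\|m_{N,t}^{(k)}\|_{L^1}\le 1$ directly from \eqref{husimi_def}. Conservation of $\langle\psi_{N,t},\mathcal{H}_N\psi_{N,t}\rangle$ together with $V\in L^\infty$ from Assumption \ref{assume_V} bounds the per-particle kinetic energy $\langle\psi_{N,t},\mathcal{K}\psi_{N,t}\rangle/N$ and hence the momentum moment $\iint|p|^2\,m_{N,t}^{(1)}$, while a Gronwall argument on the derived hierarchy propagates the position moment $\iint|q|\,m_{N,t}^{(1)}$ from \eqref{momentinitial}. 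The resulting equi-integrability and tightness permit Dunford--Pettis to produce a subsequence $m_{N,t}^{(k)}\rightharpoonup m_t^{(k)}$ weakly in $L^1(\R^{6k})$ for each fixed $k$ and $t$. To upgrade pointwise-in-$t$ convergence to $C([0,T]; L^1\text{-weak})$, I would use the derived hierarchy itself to bound $\p_t m_{N,t}^{(k)}$ uniformly in an appropriate negative Sobolev norm and apply an Arzelà--Ascoli/Aubin--Lions-type argument, finally diagonalizing in $k$ to obtain a single subsequence valid for all $k$.

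Testing the $N$-dependent hierarchy against a smooth compactly supported $\varphi(q_1,p_1,\ldots,q_k,p_k)$, the transport and coupling terms pass to the limit using the weak $L^1$ convergence of $m_{N,t}^{(k)}$ and $m_{N,t}^{(k+1)}$: $\vec{p}_k\cdot\nabla_{\vec{q}_k}\varphi$ is controlled by the propagated momentum moment, and $\sum_j\nabla V(q_j-q_{k+1})\cdot\nabla_{p_j}\varphi$ is bounded and continuous by Assumption \ref{assume_V}. Each remainder carries a prefactor $\sqrt{\hbar}$ or $\hbar$ multiplied by a quantity uniformly bounded through the kinetic-energy and moment estimates from the previous step, so vanishes as $N\to\infty$.

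The hard part will be making the kinetic remainder rigorous and uniform. A single derivative falling on $f^\hbar_{q,p}$ costs $\hbar^{-1/2}$ in $L^\infty$, so a naive bound on $[\mathcal{K},\,a^*_{w_1}\cdots a_{u_1}]$ is of size $\hbar^{-1}$; combined with the $1/(\mathrm{i}\hbar)$ prefactor this formally diverges like $\hbar^{-2}$. The saving must come from the cancellation built into the semiclassical identity above, the $p\,f^\hbar$ part reassembling into the transport while the $-\mathrm{i}\sqrt{\hbar}(\nabla f)^\hbar$ part yields a residue of genuine order $\sqrt{\hbar}$ that must be re-expressed as a Husimi-type object on the coherent state associated to $\nabla f$ and bounded using the per-particle kinetic energy from Section \ref{subset_Numop}. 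Carrying out this bookkeeping uniformly in $N$ at the scale $\hbar=N^{-1/3}$, and similarly controlling the coherent-state spreading in the interaction commutator without losing a power of $\hbar^{-1}$, is the delicate technical core of the argument.
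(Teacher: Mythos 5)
Your overall architecture coincides with the paper's: differentiate \eqref{husimi_def_1} in time, reduce the commutators with the CAR, extract the transport term from the exact cancellation in $\Delta_w f^{\hbar}_{q,p}$ versus $\Delta_u \overline{f^{\hbar}_{q,p}}$, expand the potential around $\nabla V(q_j-q_{k+1})$ after re-inserting the coherent-state resolution of the identity, prove $L^\infty$/$L^1$ bounds and moment bounds, apply Dunford--Pettis, diagonalize in $k$, and pass to the limit in the weak formulation. Your use of energy conservation plus $V\in L^\infty$ to bound $\langle\psi_{N,t},\mathcal{K}\psi_{N,t}\rangle/N$ is a legitimate (even slightly cleaner, time-uniform) alternative to the paper's Gronwall argument in Lemma \ref{kinetic_finite}, and your Aubin--Lions step for time-regularity is a reasonable supplement the paper does not spell out.

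The genuine gap is in the remainder estimates, which you correctly flag as the core but for which the mechanism you propose does not close. Written in kernel form, each of the $2k$ (or $2k+2$) coherent-state factors carries a normalization $\hbar^{-3/4}$, so the remainders in \eqref{bbgky_remainder_k} come with an overall prefactor of order $\hbar^{1-\frac{3}{2}k}$ (kinetic) or $\hbar^{3-\frac{3}{2}k}$ (interaction) multiplying integrals of $|\langle a_{w_k}\cdots a_{w_1}\psi_{N,t}, a_{u_k}\cdots a_{u_1}\psi_{N,t}\rangle|$ over all $w$'s and $u$'s; the kinetic-energy and moment bounds alone do not absorb the $\hbar^{-\frac{3}{2}k}$. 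The paper recovers this loss by three ingredients you do not supply: (i) a non-stationary-phase estimate (Lemma \ref{estimate_oscillation}) obtained by integrating by parts $s$ times in the $p$-variables against the test function, which gains $\hbar^{(1-\alpha)s}$ on the region where $w_n-u_n\notin\Omega_\hbar$ as defined in \eqref{estimate_oscillation_omega}; (ii) the localized number operator bound of Lemma \ref{N_hbar}, $\dotsint(\mathrm{d}q\,\mathrm{d}x)^{\otimes k}\langle\psi,\prod_n\rchi_{|x_n-q_n|\leq\sqrt{\hbar}R}\,a^*_{x_1}\cdots a_{x_1}\psi\rangle\leq\hbar^{-\frac{3}{2}k}$, which exploits the compact support of $f$ on scale $\sqrt{\hbar}$; and (iii) the $H^1\hookrightarrow L^6$ embedding to gain $\hbar^{(\alpha+\frac12)\ell}$ from the near-diagonal region $|w_n-u_n|\lesssim\hbar^{\alpha+\frac12}$. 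Balancing these yields only $\hbar^{\frac12-\delta}$ (Propositions \ref{LemRk}, \ref{lemma_vla_k1_convg}, \ref{mk_BBFKY_to_infty}), not the $O(\hbar)$ you assert for the interaction remainder; moreover your Taylor-expansion error for the potential must be measured against the \emph{reconstructed} $(k+1)$-particle Husimi measure, i.e.\ one must control both $\nabla V(su_j+(1-s)w_j-y)-\nabla V(q_j-y)$ and $\nabla V(q_j-y)-\nabla V(q_j-q_{k+1})$ separately using the Lipschitz bound on $\nabla V$ from Assumption \ref{assume_V} together with (i)--(iii). Finally, for $k\geq 2$ the CAR reduction produces the additional exchange-type remainder $\widehat{\mathcal{R}}_k$ built from $V(u_j-u_i)-V(w_j-w_i)$, absent from your outline; it is harmless ($O(\hbar^{3-\delta})$) but must be accounted for.
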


{ By using \cite[Theorem 7.12]{Villani2003}, we have the following corollary,}

\begin{Corollary}\label{cor:1-Wasserstein}	Suppose assumptions \ref{assume_f_compact} and $\ref{assume_V}$ hold. Assume further that the initial data of \eqref{infinitehierarchy} can be factorized, i.e. for all $k\geq 1$, 
	\begin{equation}
	\label{IC}
	\|m^{(k)}_{N}-m_0^{\otimes k}\|_{L^1}\rightarrow 0, \quad \mbox{ as } N\rightarrow\infty.
	\end{equation}
	Then, if the infinite hierarchy \eqref{infinitehierarchy} has a unique solution and $m_t$ { is} the solution to the classical Vlasov equation in \eqref{classical_Vlasov}, it holds that
	\[
	W_1 \left(m^{(1)}_{N,t}\; , \; m_t \right) \longrightarrow 0, \quad \mbox{ as } N \to \infty,
	\]
	for $t \geq 0$.
\end{Corollary}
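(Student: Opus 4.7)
The plan is to combine the subsequential weak $L^1$ convergence supplied by Theorem \ref{thmmain} with a uniqueness argument for the infinite hierarchy \eqref{infinitehierarchy}, and then to upgrade weak $L^1$ convergence of the one-particle marginal to convergence in the $1$-Wasserstein distance using Villani's Theorem 7.12.

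First, I would apply Theorem \ref{thmmain} to extract, from any subsequence of $\{m^{(k)}_{N,t}\}_N$, a further subsequence converging weakly in $L^1(\R^{6k})$ to a limit $\widetilde m_t^{(k)}$ which is a distributional solution of \eqref{infinitehierarchy}. Passing to the weak limit in the factorization assumption \eqref{IC}, the limit at $t=0$ is $m_0^{\otimes k}$. On the other hand, a direct substitution shows that if $m_t$ solves the classical Vlasov equation \eqref{classical_Vlasov}, then $m_t^{\otimes k}(q_1,p_1,\dots,q_k,p_k):=\prod_{j=1}^k m_t(q_j,p_j)$ is a weak solution of the infinite hierarchy with initial data $m_0^{\otimes k}$: differentiating the product in $t$, using \eqref{classical_Vlasov} for each factor, and recognising $(V*\rho_t)(q_j)=\iint V(q_j-q_{k+1})m_t^{(k+1)}\,\mathrm{d}q_{k+1}\mathrm{d}p_{k+1}$ reproduces the right-hand side of \eqref{infinitehierarchy}, up to the prefactor conventions. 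Invoking the assumed uniqueness of solutions to \eqref{infinitehierarchy} with factorized initial data, we conclude $\widetilde m_t^{(k)}=m_t^{\otimes k}$. Because the limit is independent of the chosen subsequence, the full sequence $m^{(k)}_{N,t}$ converges weakly in $L^1$ to $m_t^{\otimes k}$. In particular, $m^{(1)}_{N,t}\rightharpoonup m_t$ weakly in $L^1(\R^6)$.

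To upgrade weak convergence to convergence in $W_1$, I would verify the hypotheses of \cite[Theorem 7.12]{Villani2003}: weak convergence of probability measures together with convergence of the first moments implies $W_1$-convergence. Weak convergence is already established, so the remaining task is to show
\[
\iint (|q_1|+|p_1|)\,m^{(1)}_{N,t}(q_1,p_1)\,\mathrm{d}q_1\mathrm{d}p_1 \;\longrightarrow\; \iint (|q_1|+|p_1|)\,m_t(q_1,p_1)\,\mathrm{d}q_1\mathrm{d}p_1.
\]
The strategy is to propagate the initial bound \eqref{momentinitial} to a uniform-in-$N$ estimate at time $t$, using the Schrödinger evolution \eqref{Schrodinger_fock} and Assumption \ref{assume_V} (which gives $\nabla V$ bounded, hence $V*\rho$ globally Lipschitz). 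Controlling the time derivative of $\iint (|q_1|^2+|p_1|^2) m^{(1)}_{N,t}$ by itself (via a Gronwall argument on a BBGKY-type identity) yields a uniform bound on the second moment, which by the de la Vallée-Poussin criterion ensures uniform integrability of $|q_1|+|p_1|$ against the family $\{m^{(1)}_{N,t}\}_N$. Combining uniform integrability with weak $L^1$ convergence gives the required convergence of first moments, and hence the conclusion $W_1(m^{(1)}_{N,t},m_t)\to 0$.

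The main obstacle is the last step: propagating moments uniformly in $N$ and $\hbar$. Formally one proceeds as for the Vlasov equation, but here the evolution is the many-body Schrödinger one and the quantity is defined through coherent states, so the computation must be carried out directly on $\psi_{N,t}$ and controlled by the number and kinetic energy operators of \eqref{def_NumOp_KE}, with error terms involving the semiclassical cutoff $f$ from Assumption \ref{assume_f_compact}. Once this uniform moment bound is in hand, the rest of the argument is a clean application of uniqueness for the hierarchy and of Villani's characterisation of $W_1$-convergence.
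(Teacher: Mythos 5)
Your proposal is correct and follows essentially the same route as the paper: extract subsequential weak $L^1$ limits solving the infinite hierarchy, identify them with $m_t^{\otimes k}$ via the assumed uniqueness together with the fact that tensor powers of the Vlasov solution solve the hierarchy, and then invoke \cite[Theorem 7.12]{Villani2003} to pass from weak convergence of $m^{(1)}_{N,t}$ to convergence in $W_1$. The uniform-in-$N$ moment propagation that you flag as the main remaining obstacle is already established in the paper (Lemma \ref{kinetic_finite} and Proposition \ref{2nd_moment_finite}), so no additional work is needed there.
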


{
	\begin{Remark}
		In the pioneering work by Spohn \cite{Spohn1981}, 
		he considered
		\begin{eqnarray*}
			&&r_n^{(N)}(\xi_1,\eta_1,\dots,\xi_N,\eta_N,t) \\
			&=& \operatorname{tr}\left[ e^{-\mathrm{i}H_Nt}|\Psi_N\rangle\langle\Psi_N| e^{\mathrm{i}H_Nt} \prod_{j=1}^N \exp{(\mathrm{i} (N^{-1/3}\xi_j p_j + \eta_j x_j))} \right]
		\end{eqnarray*}
		with $p_j= -\mathrm{i} \nabla_j$ and obtained the following Vlasov hierarchy,
		\begin{align*}
		&\frac{\partial}{\partial t} r_n^{(N)} (\xi_1,\eta_1,\dots,\xi_n,\eta_n,t) = \sum_{j=1}^{n} \eta_j \frac{\partial}{\partial \xi_j} r_n^{(N)} (\xi_1,\eta_1,\dots,\xi_n,\eta_n,t)\\
		&+\sum_{j=1}^n \int \hat{V}(\mathrm{d}k) k\cdot \xi_j r_{n+1}^{(N)} (\xi_1,\eta_1,\dots,\xi_j,\eta_j+k,\dots\xi_n,\eta_n,0,-k,t),
		\end{align*}
		which is slightly different from Vlasov hierarchy for Husimi measure given in \eqref{infinitehierarchy}, or the version in \eqref{BBGKY_k} before taking the limit.
		The benefit of the hierarchy in  \eqref{BBGKY_k} is that one observes directly the mean field and semiclassical structure in the remainder terms. The explicit formulation is helpful in getting estimates for the remainder terms in \eqref{BBGKY_k}. Moreover if one can handle singular potentials (or even the Coulomb potential) for both terms separately, one expects that this new approach can be applied to obtain the limit from many body Schr\"ordinger to Vlasov with singular potentials in the future. Since the mean field limit with singular potential has been studied with convergence rate, for example in \cite{Benedikter2016book}, then we can utilize similar ideas to handle one of the remainder term which includes the mean field structure. In parallel, we can apply the techniques in semiclassical limit, for example in \cite{Saffirio2019}, to get estimates for the other remainder term. 
	\end{Remark}
}

\begin{Remark}
	{ Although the results in this article does not yield a convergent rate}, the main purpose of this article is to present an alternative approach and framework, namely to rewrite the Schr\"odinger equation into a BBGKY type of hierarchy, and to derive estimates for the remainder terms that appear in the new hierarchy. 
\end{Remark}

\begin{Remark}
	In Corollary \ref{cor:1-Wasserstein}, the convergence is stated in terms of $1$-Wasserstein distance. For completeness, we give its definition as defined in \cite{Villani2003}
	\begin{equation}
	W_1 (\mu, \nu) := \max_{\pi \in \Pi (\mu,\nu)} \int |x-y|\ \mathrm{d}\pi(x,y),
	\end{equation}
	where $\mu$ and $\nu$ are probability measures and $\Pi(\mu,\nu)$ the set of all probability measures with marginals $\mu$ and $\nu$. The Wasserstein distance, also known as Monge-Kantorovich distance, is a distance 
	{on the set of probability measures.}
	In fact, if we interpret the metric in $L^p$ space as the distance that measures two densities ``vertically'', the Wasserstein distance measures the distance between two densities ``horizontally''\cite{Santambrogio2015}.
\end{Remark}	

\begin{Remark}
	The assumptions for initial data \eqref{momentinitial} and \eqref{IC} can be realized by choosing $\psi_{N}$ to be the Slater-determinant. That is, for all orthonormal basis $\{ \varphi_j \}_{j=1}^\infty$, the initial data is given as
	\begin{equation}\label{initial_is_slater}
	\psi_{N}(q_1,\dots,q_N) = \frac{1}{\sqrt{N!}} \mathrm{det} \{ \varphi_j(q_i)  \}_{1 \leq i,j \leq N},
	\end{equation}
\end{Remark}

\begin{Remark} Assumptions  \ref{assume_f_compact} and $\ref{assume_V}$ are expected to be weakened to the situation that $f\in H^1(\R^3)$, $|x|f(x)\in L^2(\R^3)$, and $V$ to be Coulomb potential. These will be our future projects.
\end{Remark} 

\begin{Remark}
	In this context, we have applied the BBGKY hierarchy, the intermediate mean field approximation Hartree Fock system has not been benefited. With Hartree Fock approximation, one can do direct factorization in the equation for $m^{(1)}_{N,t}$. In this direction, we expect to derive the rate of convergence in an appropriate distance between the Husimi measure and the solution of the Vlasov equation.
\end{Remark}

The arrangement of the paper is the following. In section 2, we give the main strategy of the proof. Followed by the reformulation of Schr\"odinger equation into a hierarchy of the Husimi measure, a sequence of necessary estimates on number operators, the localized number operators, and the kinetic energy operator are given, which will be contributed to do compactness argument for the Husimi measure. We leave the computation of the hierarchy to section \ref{proof_of_reformulation}. Furthermore, the uniform estimates for remainder terms in the hierarchy, which is another main contribution of this article, are provided in section \ref{proof_of_estimations}.

\section{Proof strategy through BBGKY type hierarchy for Husimi measure}
We first start from the many particle Schr\"odinger equation and derive an approximated hierarchy of time dependent Husimi measure by direct computation. 
Compare to the BBGKY hierarchy of Liouville equation in the classical sense, it has two families of remainder terms, which are determined by the $N$ particle wave function from Schr\"odinger equation. 
In order to take a convergent subsequence of the $k$-particle Husimi measure, we derive the uniform estimates for number operator and the kinetic energy. 
Together with an additional estimate for localized number operator, we can show that the remainder terms are of order $\hbar^{\frac{1}{2}-\delta}$, for arbitrary small $\delta$. Then the desired result will be obtained by the uniqueness of solution to the infinite hierarchy.

\subsection{Reformulation: Hierarchy of time dependent Husimi measure}\label{sec2.1}
In this subsection, we begin by examining the dynamics of $k$-particle Husimi measure by using the $N$-body fermionic Schr\"odinger. The proofs of the following propositions are provided in section \ref{proof_of_reformulation}.

\begin{Proposition} \label{lemma_vla_k1}
	Suppose $\psi_{N,t} \in \mathcal{F}_a^{(N)}$ is anti-symmetric $N$-particle state satisfying the Schr\"odinger equation in \eqref{Schrodinger_fock}. Moreover, if $V(-x)=V(x)$ then we have the following equation for $k=1$,
	\begin{equation} \label{BBGKY_k1}
	\begin{aligned}
	&\p_t m^{(1)}_{N,t}(q_1,p_1) + p_1 \cdot \nabla_{q_1} m^{(1)}_{N,t}(q_1,p_1)\\
	&= \frac{1}{(2\pi)^3}\nabla_{p_1} \cdot  \iint \mathrm{d}q_2\mathrm{d}p_2 \nabla V(q_1-q_2) m^{(2)}_{N,t}(q_1,p_1,q_2,p_2)
	+ \nabla_{q_1}\cdot \mathcal{R}_1 +\nabla_{p_1}\cdot \widetilde{\mathcal{R}}_1,
	\end{aligned}
	\end{equation}
	where the remainder terms $\mathcal{R}_1$ and $\widetilde{\mathcal{R}}_1$, are given by
	\begin{equation}\label{bbgky_remainder_1}
	\begin{aligned}
	\mathcal{R}_1 :=  &  \hbar \Im \left< \nabla_{q_1} a (f^\hbar_{q_1,p_1}) \psi_{N,t}, a (f^\hbar_{q_1,p_1}) \psi_{N,t} \right>,\\
	\widetilde{\mathcal{R}}_1  := &\frac{1}{(2\pi)^3} \cdot \Re \iint \mathrm{d}w\mathrm{d}u \iint \mathrm{d}y\mathrm{d}v \iint \mathrm{d}q_2 \mathrm{d}p_2 \int_0^1 \mathrm{d}s\\
	&\hspace{1cm}\nabla V\big(su+(1-s)w - y \big) f_{q_1,p_1}^\hbar (w) \overline{f_{q_1,p_1}^\hbar (u)} f_{q_2,p_2}^\hbar (y) \overline{f_{q_2,p_2}^\hbar (v)}  \left< a_y a_w \psi_{N,t}, a_v a_u \psi_{N,t} \right>\\
	& - \frac{1}{(2\pi)^3}\iint \mathrm{d}q_2\mathrm{d}p_2 \nabla V(q_1-q_2) m^{(2)}_{N,t}(q_1,p_1,q_2,p_2),
	\end{aligned}
	\end{equation}
\end{Proposition}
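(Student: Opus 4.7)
The plan is to differentiate the Husimi measure in time and substitute the Fock-space Schr\"odinger equation, reducing the computation to a single commutator with $\mathcal{H}_N$. Writing $m^{(1)}_{N,t}(q_1,p_1) = \langle \psi_{N,t}, a^*(f^\hbar_{q_1,p_1}) a(f^\hbar_{q_1,p_1}) \psi_{N,t}\rangle$ and using \eqref{Schrodinger_fock} gives
\begin{equation*}
\mathrm{i}\hbar\, \partial_t m^{(1)}_{N,t}(q_1,p_1) = \langle \psi_{N,t}, [a^*(f^\hbar_{q_1,p_1}) a(f^\hbar_{q_1,p_1}), \mathcal{H}_N] \psi_{N,t}\rangle.
\end{equation*}
I would split $\mathcal{H}_N = \tfrac{1}{2}\mathcal{K} + \mathcal{V}$ with $\mathcal{V} = \tfrac{1}{2N}\iint \mathrm{d}x\mathrm{d}y\, V(x-y) a^*_x a^*_y a_y a_x$ and treat the two commutators separately, extracting in each case the leading Vlasov-type term and identifying the discrepancy with the corresponding remainder in \eqref{bbgky_remainder_1}.

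For the kinetic commutator, the standard second-quantization rule reduces $[a^*(f^\hbar_{q_1,p_1}) a(f^\hbar_{q_1,p_1}), \tfrac{1}{2}\mathcal{K}]$ to $\tfrac{\hbar^2}{2}\bigl(a^*(f^\hbar_{q_1,p_1}) a(-\Delta f^\hbar_{q_1,p_1}) - a^*(-\Delta f^\hbar_{q_1,p_1}) a(f^\hbar_{q_1,p_1})\bigr)$. The key coherent-state identity
\begin{equation*}
\hbar\, \nabla_y f^\hbar_{q,p}(y) = -\hbar\, \nabla_q f^\hbar_{q,p}(y) + \mathrm{i} p\, f^\hbar_{q,p}(y)
\end{equation*}
allows me to rewrite each $y$-gradient coming from $-\hbar^2\Delta$ as a $q_1$-gradient plus multiplication by $p_1$. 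After dividing by $\mathrm{i}\hbar$ and extracting the real part (which is forced since $\partial_t m^{(1)}_{N,t}$ is real), the $p_1$-factor produces the transport term $p_1 \cdot \nabla_{q_1} m^{(1)}_{N,t}$, while the leftover gradient piece matches exactly the divergence $\nabla_{q_1} \cdot \mathcal{R}_1$ with $\mathcal{R}_1 = \hbar\, \Im \langle \nabla_{q_1} a(f^\hbar_{q_1,p_1})\psi_{N,t}, a(f^\hbar_{q_1,p_1})\psi_{N,t}\rangle$.

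For the interaction commutator, I would use the CAR identities $[a^*_w a_u, a^*_x] = \delta(u-x) a^*_w$ and $[a^*_w a_u, a_x] = -\delta(w-x) a_u$ to compute $[a^*_w a_u, a^*_x a^*_y a_y a_x]$. After integrating against $V(x-y) f^\hbar_{q_1,p_1}(w) \overline{f^\hbar_{q_1,p_1}(u)}$ and exploiting both the symmetry $V(x-y) = V(y-x)$ and the fermionic anti-commutation, the four resulting terms combine into an integrand carrying the difference $V(u-y) - V(w-y)$ weighted by a four-point correlator of the form $\langle \psi_{N,t}, a^*_y a^*_w a_u a_y \psi_{N,t}\rangle$. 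Rewriting this difference through the fundamental theorem of calculus,
\begin{equation*}
V(u-y) - V(w-y) = -\int_0^1 \nabla V\bigl(s u + (1-s) w - y\bigr) \cdot (w-u)\, \mathrm{d}s,
\end{equation*}
and using the coherent-state relation
\begin{equation*}
(w-u)\, f^\hbar_{q_1,p_1}(w) \overline{f^\hbar_{q_1,p_1}(u)} = -\mathrm{i}\hbar\, \nabla_{p_1}\!\left[f^\hbar_{q_1,p_1}(w) \overline{f^\hbar_{q_1,p_1}(u)}\right],
\end{equation*}
I can pull $\nabla_{p_1}$ out of the $w,u$ integrals. Inserting the coherent-state resolution of the identity $(2\pi\hbar)^{-3}\iint \mathrm{d}q_2 \mathrm{d}p_2\, f^\hbar_{q_2,p_2}(y) \overline{f^\hbar_{q_2,p_2}(v)} = \delta(y-v)$ on the variable $y$ introduces the second phase-space pair $(q_2,p_2)$, and the prefactor $(N(2\pi\hbar)^3)^{-1}$ simplifies to $(2\pi)^{-3}$ by the relation $\hbar^3 = 1/N$. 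Replacing $\nabla V(s u + (1-s) w - y)$ by its value $\nabla V(q_1 - q_2)$ at the semiclassical center yields precisely the mean-field force term $\tfrac{1}{(2\pi)^3}\nabla_{p_1}\cdot\iint \mathrm{d}q_2 \mathrm{d}p_2\, \nabla V(q_1-q_2)\, m^{(2)}_{N,t}$, and the discrepancy of this replacement collects into $\widetilde{\mathcal{R}}_1$ as in \eqref{bbgky_remainder_1}.

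The main obstacle is the careful combinatorial bookkeeping of the four CAR terms arising from $[a^*_w a_u, a^*_x a^*_y a_y a_x]$: they must be combined using both the symmetry of $V$ and the fermionic anti-symmetry to reach the clean Taylor-remainder form in $V(u-y) - V(w-y)$, and the real/imaginary part structure has to be tracked consistently with the overall factor of $1/(\mathrm{i}\hbar)$. A secondary but delicate accounting step is the matching of powers of $\hbar$, $N$, and $2\pi$ so that the prefactor $(2\pi)^{-3}$ emerges correctly on both the leading force term and $\widetilde{\mathcal{R}}_1$; once this is settled, the identification of the leading contribution with the two-particle Husimi measure follows directly from the definition \eqref{husimi_def_1}.
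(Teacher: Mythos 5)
Your proposal is correct and follows essentially the same route as the paper: differentiating the Husimi measure via the commutator with $\mathcal{H}_N$, normal-ordering with the CAR to produce the $\delta$-contractions, converting $y$-gradients of the coherent state into $\nabla_{q_1}$ and $p_1$ factors for the kinetic part, and for the interaction part combining the contraction terms into $V(u-y)-V(w-y)$, applying the mean value theorem together with $(w-u)f^\hbar_{q_1,p_1}(w)\overline{f^\hbar_{q_1,p_1}(u)}=-\mathrm{i}\hbar\nabla_{p_1}\bigl[f^\hbar_{q_1,p_1}(w)\overline{f^\hbar_{q_1,p_1}(u)}\bigr]$, inserting the coherent-state resolution of the identity on $a_y\psi_{N,t}$, and using $\hbar^3=N^{-1}$ to produce the $(2\pi)^{-3}$ prefactor before adding and subtracting the mean-field force term to define $\widetilde{\mathcal{R}}_1$. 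All the key identities you invoke match those used in the paper, so the remaining work is exactly the bookkeeping you identify.
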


\begin{Proposition} \label{lemma_vla_bbgky_hierarchy}
	For every $1 \leq i,j \leq k$ and $q_j,p_j \in \R^3$, denote $\vec{q}_k = (q_1,\dots,q_k)$ and  $\vec{p}_k = (p_1,\dots,p_k)$. Under the assumption in Proposition \ref{lemma_vla_k1}, then for $1 < k \leq N$, we have the following hierarchy
	\begin{equation} \label{BBGKY_k}
	\begin{aligned}
	&\p_t m_{N,t}^{(k)}(q_1,p_1,\dots,q_k,p_k)+ \vec{p}_k \cdot \nabla_{\vec{q}_k}m_{N,t}^{(k)}(q_1,p_1,\dots,q_k,p_k)  \\
	&= \frac{1}{(2\pi)^3} \nabla_{\vec{p}_k} \cdot \iint \mathrm{d}q_{k+1}\mathrm{d}p_{k+1} \nabla V(q_j - q_{k+1})m_{N,t}^{(k+1)}(q_1,p_1,\dots,q_{k+1},p_{k+1})\\
	&\hspace{1cm} + \nabla_{\vec{q}_k} \cdot \mathcal{R}_k +  \nabla_{\vec{p}_k} \cdot \widetilde{\mathcal{R}}_k + \widehat{\mathcal{R}}_k,\\
	\end{aligned}
	\end{equation}
	where the remainder terms are denoted as
	\begin{equation}\label{bbgky_remainder_k}
	\begin{aligned}
	\mathcal{R}_k :=&  \hbar \Im   \left< \nabla_{\vec{q}_k}\big( a (f^\hbar_{q_k,p_k}) \cdots a (f^\hbar_{q_1,p_1})\big) \psi_{N,t},  a (f^\hbar_{q_k,p_k}) \cdots a (f^\hbar_{q_1,p_1}) \psi_{N,t} \right> ,\\
	(\widetilde{\mathcal{R}}_k)_j :=&  \frac{1}{(2\pi)^{3}} \Re \dotsint (\mathrm{d}w\mathrm{d}u)^{\otimes k} \int \mathrm{d}y \left[ \int_0^1 \mathrm{d}s \nabla V(su_j + (1-s)w_j -y) \right] \left( f_{q,p}^\hbar (w) \overline{f_{q,p}^\hbar (u)} \right)^{\otimes k} \\
	& \iint \mathrm{d}\widetilde{q} d \widetilde{p}\ f^\hbar_{\widetilde{q}, \widetilde{p}} (y)\int \mathrm{d}v\ \overline{f^\hbar_{\widetilde{q}, \widetilde{p}}(v)}  \left< a_{w_k} \cdots a_{w_1} a_y \psi_{N,t}, a_{u_k} \cdots a_{u_1} a_v \psi_{N,t} \right>\\
	& - \frac{1}{(2\pi)^3}  \iint \mathrm{d}q_{k+1}\mathrm{d}p_{k+1} \nabla V(q_j - q_{k+1})m_{N,t}^{(k+1)}(q_1,p_1,\dots,q_{k+1},p_{k+1}),\\
	\widehat{\mathcal{R}}_k := 	& \frac{\hbar^2}{2}  \Im\dotsint (\mathrm{d}w\mathrm{d}u)^{\otimes k} \sum_{j\neq i}^k  \bigg[ V(u_j-u_i) - V(w_j -w_i) \bigg]  \left( f_{q,p}^\hbar (w) \overline{f_{q,p}^\hbar (u)} \right)^{\otimes k}\\
	& \left< a_{w_k} \cdots a_{w_1} \psi_{N,t}, a_{u_k} \cdots a_{u_1} \psi_{N,t} \right>\\
	\end{aligned}
	\end{equation}
\end{Proposition}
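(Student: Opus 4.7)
The strategy mirrors the proof of the $k=1$ case (Proposition~\ref{lemma_vla_k1}) and just extends the bookkeeping to general $k$. Starting from \eqref{husimi_def_1} and differentiating in time against the Fock-space Schr\"odinger equation \eqref{Schrodinger_fock},
\[
\partial_t m^{(k)}_{N,t}=\tfrac{1}{i\hbar}\dotsint (\mathrm{d}w\,\mathrm{d}u)^{\otimes k}\bigl(f^\hbar_{q,p}(w)\overline{f^\hbar_{q,p}(u)}\bigr)^{\otimes k}\bigl\langle\psi_{N,t},\,[a^*_{w_1}\cdots a^*_{w_k}a_{u_k}\cdots a_{u_1},\,\mathcal{H}_N]\,\psi_{N,t}\bigr\rangle,
\]
and I would split $\mathcal{H}_N=\mathcal{K}+\mathcal{V}$ according to \eqref{Fock_Hamil}, showing that the kinetic commutator produces the transport term together with $\nabla_{\vec{q}_k}\cdot\mathcal{R}_k$, while the interaction commutator produces the mean-field integral together with $\nabla_{\vec{p}_k}\cdot\widetilde{\mathcal{R}}_k+\widehat{\mathcal{R}}_k$.

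\textbf{Kinetic commutator.} Using the CAR \eqref{eq:CAR} to move the single-body factor $\nabla_x a^*_x\nabla_x a_x$ past the string $a^*_{w_1}\cdots a_{u_1}$, the commutator reduces, after contracting $\delta$-functions, to $\sum_{j=1}^k\tfrac{\hbar^2}{2}(\Delta_{w_j}-\Delta_{u_j})$ acting inside the correlator. Integrating by parts to transfer these Laplacians onto the coherent-state factors and invoking the identity
\[
\nabla_y f^\hbar_{q,p}(y)=-\nabla_q f^\hbar_{q,p}(y)+\tfrac{i}{\hbar}\,p\,f^\hbar_{q,p}(y),
\]
which is immediate from \eqref{coherent_def}, converts each $y$-derivative into a $q$-derivative plus a multiplication by momentum. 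After taking the imaginary part (which combines the $a^*_{w_j}$ and $a_{u_j}$ halves), the $\tfrac{ip}{\hbar}$-contributions telescope into the transport term $\vec{p}_k\cdot\nabla_{\vec{q}_k}m_{N,t}^{(k)}$, while the remaining $\nabla_{q_j}$ contributions assemble into $\nabla_{\vec{q}_k}\cdot\mathcal{R}_k$ with $\mathcal{R}_k$ as in \eqref{bbgky_remainder_k}.

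\textbf{Interaction commutator.} Here one computes $[a^*_{w_1}\cdots a_{u_1},\,\tfrac{1}{2N}\iint V(x-y)a^*_x a^*_y a_y a_x]$ by a repeated use of CAR. The resulting Wick contractions naturally fall into two families. The ``internal'' contractions, where both $\delta$-functions bind to the prescribed arguments $w_1,\dots,w_k,u_1,\dots,u_k$, assemble after symmetrizing in $x,y$ using $V(-x)=V(x)$ into the factor $V(u_j-u_i)-V(w_j-w_i)$, producing exactly $\widehat{\mathcal{R}}_k$. The ``external'' contractions leave a free variable $y$ over which one integrates against $a^*_y a_y$, producing terms of the shape $\int\mathrm{d}y\,[V(u_j-y)-V(w_j-y)]\,a^*_y a_y$. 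Inserting the coherent-state resolution of identity $\int\tfrac{\mathrm{d}q_{k+1}\mathrm{d}p_{k+1}}{(2\pi\hbar)^3}|f^\hbar_{q_{k+1},p_{k+1}}\rangle\langle f^\hbar_{q_{k+1},p_{k+1}}|=\mathrm{Id}$ rewrites the free $y$-integration as one against a $(k+1)$-point correlator. The Taylor expansion
\[
V(u_j-y)-V(w_j-y)=(u_j-w_j)\cdot\int_0^1\nabla V\bigl(su_j+(1-s)w_j-y\bigr)\,\mathrm{d}s
\]
isolates the potential gradient, and the key coherent-state identity $(u-w)f^\hbar_{q,p}(w)\overline{f^\hbar_{q,p}(u)}=i\hbar\,\nabla_p[f^\hbar_{q,p}(w)\overline{f^\hbar_{q,p}(u)}]$ (which one checks directly from \eqref{coherent_def}) turns the $(u_j-w_j)$ factor into an outside $\nabla_{p_j}$ after an integration by parts. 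Adding and subtracting the leading-order term $\nabla V(q_j-q_{k+1})m^{(k+1)}_{N,t}$ (corresponding to freezing $u_j,w_j,y$ at their classical centers) identifies the mean-field contribution, with the error becoming $(\widetilde{\mathcal{R}}_k)_j$.

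\textbf{Main obstacle.} The principal difficulty is the combinatorial bookkeeping of the CAR: each exchange of $a_y,a^*_y$ with the fixed string produces a sign, and one must carefully classify the Wick contractions into ``internal'' and ``external'' families, ensuring that nothing is double-counted and that all prefactors (including the $\tfrac{1}{2N}$ from $\mathcal{V}$, the $(2\pi\hbar)^{-3}$ from the coherent-state completeness, and the normalization in \eqref{gaussian_m_vs_W}) combine correctly into the clean $\tfrac{1}{(2\pi)^3}$ on the right-hand side of \eqref{BBGKY_k}. Once this bookkeeping is done, the explicit formulas for $\mathcal{R}_k$, $\widetilde{\mathcal{R}}_k$, and $\widehat{\mathcal{R}}_k$ read off directly, and the same pattern as for $k=1$ closes the proof.
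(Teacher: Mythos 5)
Your proposal follows essentially the same route as the paper's proof: differentiate the $k$-particle Husimi measure against the Schr\"odinger dynamics, use the CAR to reduce the kinetic part to $\sum_j(\Delta_{w_j}-\Delta_{u_j})$ acting in the correlator (which, via the identity $\nabla_y f^\hbar_{q,p}=(-\nabla_q+\tfrac{i}{\hbar}p)f^\hbar_{q,p}$, yields the transport term plus $\nabla_{\vec q_k}\cdot\mathcal{R}_k$), and split the interaction commutator into the doubly-contracted terms giving $\widehat{\mathcal{R}}_k$ and the singly-contracted terms which, after the mean value theorem, the identity $(u_j-w_j)f^\hbar\overline{f^\hbar}=i\hbar\nabla_{p_j}(f^\hbar\overline{f^\hbar})$, and the coherent-state resolution of identity applied to $a_y\psi_{N,t}$, produce the mean-field term plus $\nabla_{\vec p_k}\cdot\widetilde{\mathcal{R}}_k$. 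This is exactly the paper's argument, and the "main obstacle" you identify (the sign and contraction bookkeeping) is indeed where the bulk of the paper's computation lies.
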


\subsection{\textit{A priori} estimates} 

In the next steps, we derive estimates in order to have compactness of each $k$-particle Husimi measure, as well as to prove that the remainder terms converge to zero in the sense of distribution. The estimates are derived directly from the solutions of the $N$-fermionic Schr\"odinger equation.

\subsubsection{Properties of coherent states and Husimi measure}
Here we give the properties of coherent states and Husimi measure provided in \cite{Fournais2018}, which will be frequently needed in our computation. Firstly, we observe that the coherent state has a projection property, that is

\begin{Lemma}[Projection of the coherent state, \cite{Fournais2018}]\label{coherent_projections} For every real-valued function $f$ satisfying $\norm{f}_2 = 1$ and the coherent states $f^\hbar_{q,p}$ defined as in \eqref{coherent_def}, we have that
	\begin{equation}\label{projection_f}
	\frac{1}{(2\pi \hbar)^3} \iint \mathrm{d}q \mathrm{d}p \ket{f^{\hbar}_{q, p}}\bra{f^{\hbar}_{q, p} } = \frac{1}{(2\pi \hbar)^3} \iint \mathrm{d}q \mathrm{d}p \left< f^{\hbar}_{q, p}, \cdot \right> f^{\hbar}_{q, p} (y) = \mathds{1}.
	\end{equation}
\end{Lemma}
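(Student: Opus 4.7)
The plan is to verify the stated resolution of the identity by testing both sides against arbitrary vectors in $L^2(\R^3)$. Denote the right-hand operator by $\Pi := (2\pi\hbar)^{-3}\iint \mathrm{d}q\,\mathrm{d}p\, |f^\hbar_{q,p}\rangle\langle f^\hbar_{q,p}|$, understood as a quadratic form. It suffices to show $\langle \psi, \Pi \psi\rangle = \|\psi\|_2^2$ for every $\psi \in L^2(\R^3)$, since polarization then yields $\langle \phi, \Pi \psi\rangle = \langle \phi, \psi\rangle$ for all $\phi, \psi$, which is the operator identity $\Pi = \mathds{1}$.

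To carry this out, first expand the inner product using the definition in \eqref{coherent_def}: $\langle f^\hbar_{q,p}, \psi\rangle = \hbar^{-3/4} \int \mathrm{d}y\, f\bigl((y-q)/\sqrt{\hbar}\bigr) e^{-\mathrm{i} p\cdot y/\hbar}\, \psi(y)$, a short-time Fourier transform of $\psi$ with window $f(\cdot/\sqrt{\hbar})$ centered at $q$. Squaring and writing the resulting quantity as a four-fold integral over $(q,p,y,y')$, the $p$-integration is the key step: formally
\[
\int \mathrm{d}p\, e^{-\mathrm{i} p\cdot (y-y')/\hbar} = (2\pi\hbar)^3 \delta(y-y'),
\]
which collapses the $y'$ integration onto $y'=y$. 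To justify this rigorously, I would first restrict to Schwartz $\psi$ (so absolute convergence permits Fubini freely) and apply the Plancherel identity to the $p$-variable, then extend to $L^2$ by density using the quadratic-form bound established below.

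After the collapse, the remaining integral factors into a pure $q$-integral and a pure $y$-integral:
\[
\langle \psi, \Pi \psi\rangle = \frac{(2\pi\hbar)^3}{(2\pi\hbar)^3}\,\hbar^{-3/2} \iint \mathrm{d}q\,\mathrm{d}y\, \Bigl|f\bigl(\tfrac{y-q}{\sqrt{\hbar}}\bigr)\Bigr|^2 |\psi(y)|^2.
\]
Changing variables $u = (y-q)/\sqrt{\hbar}$ in the $q$-integral produces $\hbar^{3/2} \|f\|_2^2 = \hbar^{3/2}$, which cancels the $\hbar^{-3/2}$ prefactor and leaves exactly $\int \mathrm{d}y\, |\psi(y)|^2 = \|\psi\|_2^2$. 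This establishes the quadratic identity; polarization finishes the proof.

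The only genuine difficulty is the interchange of integrations and the handling of the oscillatory $p$-integral, which is not absolutely convergent. I would handle this either by the Schwartz-and-density argument above, or alternatively by inserting a Gaussian regulator $e^{-\varepsilon |p|^2}$ in $p$, performing the now-convergent Gaussian integrals, and passing to the limit $\varepsilon \to 0^+$ via monotone/dominated convergence. Beyond this technical point, everything reduces to the Plancherel identity and the normalization $\|f\|_2 = 1$.
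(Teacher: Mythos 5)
Your argument is correct, and the paper itself offers no proof to compare against: Lemma \ref{coherent_projections} is simply quoted from \cite{Fournais2018}, where the standard proof is exactly the Plancherel computation you carry out. The constants check out: the $p$-integration produces $(2\pi\hbar)^3\delta(y-y')$, cancelling the prefactor, and the $q$-integration after the substitution $u=(y-q)/\sqrt{\hbar}$ produces $\hbar^{3/2}\norm{f}_2^2$, cancelling the $\hbar^{-3/2}$ from the two factors of $\hbar^{-3/4}$; the normalization $\norm{f}_2=1$ then gives $\langle\psi,\Pi\psi\rangle=\norm{\psi}_2^2$, and polarization of the symmetric form yields $\Pi=\mathds{1}$. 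One small streamlining you might note: since the integrand $|\langle f^\hbar_{q,p},\psi\rangle|^2$ is nonnegative, Tonelli lets you integrate in $p$ first for each fixed $q$, where $\langle f^\hbar_{q,p},\psi\rangle$ is $(2\pi\hbar)^{3/2}$ times the $\hbar$-weighted Fourier transform (Definition \ref{def_fourier_h}) of the $L^1\cap L^2$ function $y\mapsto \hbar^{-3/4}f((y-q)/\sqrt{\hbar})\psi(y)$; the $L^2$ Plancherel theorem then replaces the delta-function manipulation outright, so no Schwartz-density or Gaussian-regulator step is actually needed for the quadratic form.
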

Secondly, the properties of the $k$-particle Husimi measure $m^{(k)}_{N}$ is given as follows
\begin{Lemma}[Properties of $k$-particle Husimi measure, \cite{Fournais2018}]\label{prop_kHusimi}
	Suppose for $\psi_{N} \in \mathcal{F}_a^{(N)}$ is normalized. Then, the following properties hold true for $m^{(k)}_{N}$:
	\begin{enumerate}
		\item  $m^{(k)}_{N}(q_1,p_1,\dots,q_k,p_k)$ is symmetric,
		\item  $\frac{1}{(2\pi)^{3k}} \dotsint (\mathrm{d}q\mathrm{d}p)^{\otimes k} m^{(k)}_{N}(q_1,p_1,\dots,q_k,p_k) = \frac{N(N-1)\cdots (N-k+1)}{N^k}$,
		\item $\frac{1}{(2\pi \hbar)^{3}} \iint \mathrm{d}q_k  \mathrm{d}p_k\ m^{(k)}_{N}(q_1,p_1,\dots,q_k,p_k) = (N-k+1) m^{(k-1)}_{N}(q_1,p_1,\dots,q_{k-1},p_{k-1}) $, and
		\item $ 0 \leq  m^{(k)}_{N}(q_1,p_1,\dots,q_k,p_k) \leq 1$ a.e.,
	\end{enumerate}
	where $1 \leq k \leq N$.
\end{Lemma}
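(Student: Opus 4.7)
My plan is to verify each of the four properties directly from the Fock-space definition \eqref{husimi_def} using only the CAR \eqref{eq:CAR}, the projection identity of Lemma \ref{coherent_projections}, and the Pauli principle for fermionic creation operators. The symmetry in (1) is the quickest: swapping two indices $i, j$ anticommutes two creation operators (producing $-1$) and two annihilation operators (another $-1$), and the signs cancel. For the non-negativity and upper bound in (4), I would set $\Phi := a(f^\hbar_{q_k,p_k}) \cdots a(f^\hbar_{q_1,p_1}) \psi_N$ and observe via $(AB)^* = B^*A^*$ that $m^{(k)}_N = \|\Phi\|^2 \geq 0$. The bound $m^{(k)}_N \leq 1$ then follows from the standard operator-norm estimate $\|a(g)\| \leq \|g\|_2$, a direct consequence of $\{a(g),a^*(g)\} = \|g\|_2^2$, since Assumption \ref{assume_f_compact} forces $\|f^\hbar_{q,p}\|_2 = \|f\|_2 = 1$ for every $(q,p)$ while $\|\psi_N\|=1$.

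The substantial step is property (3). The idea is to rewrite Lemma \ref{coherent_projections} in the distributional form \eqref{distrb_annil_creation_def} to obtain
\[
\frac{1}{(2\pi\hbar)^3} \iint \mathrm{d}q_k\, \mathrm{d}p_k\, a^*(f^\hbar_{q_k,p_k})\, a(f^\hbar_{q_k,p_k}) = \iint \mathrm{d}x\, \mathrm{d}y\, \delta(x-y)\, a^*_x a_y = \mathcal{N}.
\]
Integrating only $q_k, p_k$ in $m^{(k)}_N$ therefore produces
\[
\left\langle \psi_N,\; a^*(f^\hbar_{q_1,p_1}) \cdots a^*(f^\hbar_{q_{k-1},p_{k-1}})\; \mathcal{N}\; a(f^\hbar_{q_{k-1},p_{k-1}}) \cdots a(f^\hbar_{q_1,p_1}) \psi_N \right\rangle.
\]
Since each annihilation decreases the particle number by one and $\psi_N \in \mathcal{F}_a^{(N)}$, the vector $a(f^\hbar_{q_{k-1},p_{k-1}}) \cdots a(f^\hbar_{q_1,p_1}) \psi_N$ lies in $\mathcal{F}_a^{(N-k+1)}$, so $\mathcal{N}$ acts on it as multiplication by $N-k+1$. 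Pulling this scalar out of the inner product yields exactly $(N-k+1)\, m^{(k-1)}_N$.

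Property (2) then follows by iterating (3) exactly $k$ times with base case $m^{(0)}_N = \|\psi_N\|^2 = 1$, producing $\frac{1}{(2\pi\hbar)^{3k}} \int m^{(k)}_N = N(N-1)\cdots(N-k+1)$. Since $\hbar = N^{-1/3}$, converting the weight from $(2\pi\hbar)^{-3k}$ to $(2\pi)^{-3k}$ multiplies the result by $\hbar^{3k} = N^{-k}$ and gives the claimed ratio. The only mildly delicate point in the whole argument is keeping the number-operator eigenvalue identification in (3) honest---namely, that $\mathcal{N}$ acts as the scalar $N-k+1$ after exactly $k-1$ annihilations---but this is immediate once we track the particle-number sector, and is cleaner than writing out the commutators $[\mathcal{N}, a(f)] = -a(f)$ by hand.
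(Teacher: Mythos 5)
Your proof is correct. The paper itself offers no proof of this lemma --- it is imported verbatim from \cite{Fournais2018} --- and your argument (antisymmetry of the transposition signs for (1), the resolution of identity of Lemma \ref{coherent_projections} turning the innermost pair $a^*(f^\hbar_{q_k,p_k})a(f^\hbar_{q_k,p_k})$ into $\mathcal{N}$, which acts as the scalar $N-k+1$ on the $(N-k+1)$-particle sector for (3), iteration plus $\hbar^{3}=N^{-1}$ for (2), and $m^{(k)}_N=\|\Phi\|^2$ together with $\|a(g)\|\leq\|g\|_2$ for (4)) is exactly the standard one used in that reference, so there is nothing to reconcile.
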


\begin{Remark}\label{prop_kHusimi_t}
	Note that as $\norm{\psi_N} = \norm{\psi_{N,t}}$, Lemma \ref{prop_kHusimi} is also valid if we replaced the stationary wave-function $\psi_N$, to a time-dependent $\psi_{N,t}$, for $t \geq 0$.
	Moreover, 
	it can be obtained that
	for any fixed positive integer $1\leq k \leq N$,
	\begin{equation}\label{Linftykestimate}
	0 \leq  m^{(k)}_{N,t} \leq 1 \quad \mbox{ a.e. in } \R^{6k}. 
	\end{equation}
\end{Remark}

{
	Following \cite{Fournais2018}, we define the $\hbar$-weighted Fourier transformation as follows,
	
	\begin{Definition}[$\hbar$-weighted Fourier transform]\label{def_fourier_h}
		Let $F$ be any real-valued function in $L^2(\R^3)$. We define the $\hbar$-weighted Fourier transform of $f$ to be,
		\[
		\mathcal{F}_\hbar [f](p) := \frac{1}{(2\pi \hbar)^\frac{3}{2}} \int_{\R^3}\mathrm{d}x\ f(x) e^{- \frac{\mathrm{i}}{\hbar} p \cdot x},
		\]
		and its inverse transform by $\mathcal{F}^{-1}_\hbar$. 
	\end{Definition}
	From the Definition \ref{def_fourier_h}, we have the following identity,
	\begin{equation}\label{hbar_fourier}
	\int_{\R^3} \mathrm{d}y\ G(y) F(y) =  \int_{\R^3} \mathrm{d}y\ G(y) \frac{1}{(2\pi \hbar)^3}\iint_{\R^{3\cdot 2}} \mathrm{d}p_2\mathrm{d}v\ F(v) e^{\frac{\mathrm{i}}{\hbar}p_2\cdot(y-v)},
	\end{equation}
	for any $G,F \in L^2({\R^3})$. In other words, the Dirac-delta distribution is given by
	\begin{equation}
	\delta_y (v) =   \frac{1}{(2\pi \hbar)^3}\int_{\R^{3}}\mathrm{d}p_2\  e^{\frac{\mathrm{i}}{\hbar}p_2\cdot(y-v)}.
	\end{equation}
}

\subsubsection{Number operator and localized number operator}\label{subset_Numop}
In this part, we give the bounds of number operators and its corresponding localized version, both of which are used extensively in estimating the remainder terms in \eqref{BBGKY_k1} and \eqref{BBGKY_k}.
\begin{Lemma}\label{estimate_NumOpk}
	Let $\psi_{N,t} \in \mathcal{F}^{(N)}_a$ be the solution to Schrödinger equation in \eqref{Schrodinger_1} with initial data $\|\psi_N\|=1$, the number operator $\mathcal{N}$ defined in \eqref{def_NumOp_KE}. Then, for finite $1 \leq k \leq N$, we have
	\[
	\left< \psi_{N,t}, \frac{\mathcal{N}^k}{N^k} \psi_{N,t} \right>  = 1.
	\]
\end{Lemma}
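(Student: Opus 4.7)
The plan is to exploit the fact that the Schrödinger flow preserves particle number, so that $\psi_{N,t}$ stays in the $N$-th sector $\mathcal{F}_a^{(N)}$ for all $t\geq 0$, as already pointed out in the Remark following \eqref{Schrodinger_fock}. Once this is established, $\mathcal{N}$ acts on $\psi_{N,t}$ as multiplication by the integer $N$, and the statement reduces to the conservation of the $L^2$-norm of $\psi_{N,t}$.

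More concretely, I would first verify that $\mathcal{N}$ commutes with the Fock Hamiltonian $\mathcal{H}_N$ defined in \eqref{Fock_Hamil}. The CAR relations \eqref{eq:CAR} imply $[\mathcal{N}, a^*_x] = a^*_x$ and $[\mathcal{N}, a_x] = -a_x$, so the kinetic term $\int\mathrm{d}x\ \nabla_x a^*_x \nabla_x a_x$ and the interaction term $\iint \mathrm{d}x\mathrm{d}y\ V(x-y)a^*_x a^*_y a_y a_x$ both commute with $\mathcal{N}$ because each monomial contains an equal number of creation and annihilation operators. Consequently, $e^{-\mathrm{i}\mathcal{H}_N t/\hbar}$ leaves every sector $\mathcal{F}_a^{(n)}$ invariant. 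Since $\psi_N \in \mathcal{F}_a^{(N)}$ by assumption, this yields $\psi_{N,t} = e^{-\mathrm{i}\mathcal{H}_N t/\hbar}\psi_N \in \mathcal{F}_a^{(N)}$ for all $t$.

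Next, on the sector $\mathcal{F}_a^{(N)}$ the operator $\mathcal{N} = \int\mathrm{d}x\ a^*_x a_x$ acts as multiplication by $N$; this is the standard spectral characterization of the sectors of fermionic Fock space and can be checked directly by letting $\mathcal{N}$ act on an $N$-particle wedge product. Hence $\mathcal{N}^k \psi_{N,t} = N^k \psi_{N,t}$, and therefore
\begin{equation*}
\left\langle \psi_{N,t}, \frac{\mathcal{N}^k}{N^k}\psi_{N,t}\right\rangle = \langle \psi_{N,t}, \psi_{N,t}\rangle = \|\psi_{N,t}\|^2 = \|\psi_N\|^2 = 1,
\end{equation*}
where the last equality uses unitarity of the Schrödinger evolution together with the normalization assumption $\|\psi_N\|=1$.

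There is essentially no analytic obstacle here; the only thing to take a little care with is the commutator computation with $\mathcal{H}_N$ and the identification of $\mathcal{N}$ as the sector-index operator. Both are routine in second-quantized notation, so the proof is short and mainly bookkeeping.
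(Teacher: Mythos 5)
Your proposal is correct and rests on the same key fact as the paper's proof, namely $[\mathcal{N},\mathcal{H}_N]=0$: the paper differentiates $\langle\psi_{N,t},\mathcal{N}^k\psi_{N,t}\rangle$ in time and integrates, whereas you upgrade the commutation to sector-invariance of the flow and then use that $\mathcal{N}$ acts as multiplication by $N$ on $\mathcal{F}_a^{(N)}$. Your version has the minor virtue of making explicit the eigenvalue identity $\mathcal{N}\psi_N=N\psi_N$ that the paper's proof uses implicitly when evaluating the expectation at $t=0$.
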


\begin{proof}
	Since $\psi_{N,t}$ satisfies the Schrödinger equation, then for $k\geq 1$,
	\[
	i \hbar \frac{\mathrm{d}}{\mathrm{d}t} \left< \psi_{N,t}, \mathcal{N}^k \psi_{N,t} \right> =  \left< \psi_{N,t}, [\mathcal{N}^k,\mathcal{H}_N] \psi_{N,t} \right> =  k \left< \psi_{N,t},\mathcal{N}^{k-1}[\mathcal{N},\mathcal{H}_N] \psi_{N,t} \right> = 0,
	\] 
	where we used the fact that $\mathcal{H}_N$ is self-adjoint and $[\mathcal{H}_N,\mathcal{N}]=0$.
	Therefore, integrating the above equation with respect to time, gives us
	\[
	\left< \psi_{N,t}, \frac{\mathcal{N}^k}{N^k} \psi_{N,t} \right>  = \left< \psi_{N}, \frac{\mathcal{N}^k}{N^k} \psi_{N} \right>= 1,
	\]
	for any $1 \leq k \leq N$.
\end{proof}

{
	\begin{Remark}
		The expectation of the number operator is the total mass of Husimi measure. In fact, observe that 
		\begin{align*}
		\left< \psi_{N,t}, \mathcal{N} \psi_{N,t} \right> = & \int \mathrm{d}x \left< \psi_{N,t}, a_x^* a_x \psi_{N,t} \right> = \int \mathrm{d}x \left< \psi_{N,t}, a_x^* \mathds{1} a_x \psi_{N,t} \right>, \\
		\intertext{Then, by \eqref{projection_f}}
		= & \frac{1}{(2\pi \hbar)^3} \iint \mathrm{d}q\mathrm{d}p \int \mathrm{d}x \left< \psi_{N,t}, a^*_x f_{q,p}^\hbar (x) \bigg(\int \mathrm{d}y\ a_y \overline{f_{q,p}^\hbar (y)} \bigg) \psi_{N,t} \right>\\
		= & \frac{1}{(2\pi \hbar)^3} \iint \mathrm{d}q\mathrm{d}p \left< \psi_{N,t}, a^*(f_{q,p}^\hbar) a(f_{q,p}^\hbar) \psi_{N,t} \right>\\
		= & \frac{1}{(2\pi \hbar)^3} \iint \mathrm{d}q\mathrm{d}p\ m^{(1)}_{N,t} (q,p)\\
		= & N,
		\end{align*}
		where we use Lemma \ref{prop_kHusimi} in the last equality. Moverover, if we repeat the projection above for $k$-times, we get
		\begin{equation}\label{MNkL1}
		\begin{aligned}
		&\frac{1}{(2\pi)^{3k}} \dotsint (\mathrm{d}q\mathrm{d}p)^{\otimes k} m^{(k)}_{N,t}(q_1,p_1,\dots,q_k,p_k)\\
		&{ \leq }\left< \psi_{N,t}, \frac{\mathcal{N}^k}{N^k} \psi_{N,t} \right> = 1,
		\end{aligned}
		\end{equation}
		where $1\leq k \leq N$ and $t \geq 0$.
	\end{Remark}
}
More importantly, we have the following estimates for localized number operators.

\begin{Lemma}[Bound on localized number operator]\label{N_hbar} Let $\psi_N \in \mathcal{F}^{(N)}_a$ such that $\norm{\psi_N} = 1$, and $R$ be the radius of a ball such that the volume is $1$. Then, for all $1\leq k \leq N$, we have
	\begin{align*}
	&\dotsint (\mathrm{d}q\mathrm{d}x)^{\otimes k} \left< \psi_N , 
	\left(\prod_{n=1}^k \rchi_{|x_n-q_n|\leq \sqrt{\hbar}R}\right)a^*_{x_1}\cdots a^*_{x_k} a_{x_k}\cdots a_{x_1} \psi_N  \right>
	\leq \hbar^{-\frac{3}{2}k},
	\end{align*}
	where $\rchi$ is a characteristic function
\end{Lemma}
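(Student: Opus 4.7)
The plan is to swap the order of integration by Fubini and first integrate out the center-of-ball variables $q_1,\dots,q_k$, then recognize the remaining operator as a product formula for the number operator acting on the $N$-particle sector.

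\medskip

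\noindent\textbf{Step 1 (integrate out $q_n$).} For each fixed $x_n$, the characteristic function $\rchi_{|x_n-q_n|\leq \sqrt{\hbar}R}$ is supported on the ball of radius $\sqrt{\hbar}R$ centered at $x_n$, so
\begin{equation*}
\int \mathrm{d}q_n\ \rchi_{|x_n-q_n|\leq \sqrt{\hbar}R} = \frac{4\pi}{3}(\sqrt{\hbar}R)^3 = \hbar^{3/2}\cdot \frac{4\pi R^3}{3}=\hbar^{3/2},
\end{equation*}
by the choice of $R$ making the unit ball have volume $1$. Since the operator $a^*_{x_1}\cdots a^*_{x_k}a_{x_k}\cdots a_{x_1}$ does not depend on the $q_n$'s, Fubini gives
\begin{equation*}
\mathrm{LHS} = \hbar^{3k/2}\dotsint (\mathrm{d}x)^{\otimes k}\left<\psi_N, a^*_{x_1}\cdots a^*_{x_k}a_{x_k}\cdots a_{x_1}\psi_N\right>.
\end{equation*}

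\medskip

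\noindent\textbf{Step 2 (reduce to number operator).} Using the CAR \eqref{eq:CAR} iteratively (moving one $a_{x_j}$ past the remaining creation operators and picking up the $\delta$-functions that lead to subtractions), one finds the operator identity
\begin{equation*}
\int (\mathrm{d}x)^{\otimes k}\ a^*_{x_1}\cdots a^*_{x_k}a_{x_k}\cdots a_{x_1} = \mathcal{N}(\mathcal{N}-1)\cdots (\mathcal{N}-k+1).
\end{equation*}
Because $\psi_N\in\mathcal{F}_a^{(N)}$, $\mathcal{N}\psi_N = N\psi_N$, so the expectation equals $N(N-1)\cdots(N-k+1)\leq N^k$.

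\medskip

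\noindent\textbf{Step 3 (combine with $\hbar=N^{-1/3}$).} Substituting yields
\begin{equation*}
\mathrm{LHS}\leq \hbar^{3k/2}\cdot N^k = \hbar^{3k/2}\cdot \hbar^{-3k} = \hbar^{-3k/2},
\end{equation*}
which is the desired bound.

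\medskip

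There is no genuine obstacle here: the statement reduces after two easy manipulations to the trivial identity $\mathcal{N}^k\psi_N=N^k\psi_N$ on the $N$-sector, combined with the scaling $\hbar=N^{-1/3}$. The only thing to keep track of is the choice of $R$ so that $\frac{4\pi R^3}{3}=1$, which exactly cancels the volume factor and produces the clean power $\hbar^{3/2}$ per coordinate.
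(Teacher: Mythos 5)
Your proof is correct and follows essentially the same route as the paper: integrate out the $q_n$ variables to produce the factor $\hbar^{3k/2}$, then bound the normal-ordered product by $\mathcal{N}^k$ (you make the falling-factorial identity explicit, the paper just uses the resulting inequality together with Lemma \ref{estimate_NumOpk}) and invoke $\hbar^3 = N^{-1}$.
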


\begin{proof}
	Consider first the case where $k=1$. For every $1\leq j \leq k$, we have
	\begin{align*}
	&\int \mathrm{d}x_j \left( \int \mathrm{d}q_j\ \rchi_{|x_j-q_j|\leq \sqrt{\hbar}R} \right) \left< \psi_N ,a^*_{x_j} a_{x_j} \psi_N  \right> \\
	&= \hbar^{\frac{3}{2}}  \left< \psi_N , \mathcal{N} \psi_N  \right> = \hbar^{\frac{3}{2}-3} \left< \psi_N , \frac{\mathcal{N}}{N} \psi_N  \right>
	\leq  \hbar^{-\frac{3}{2}},
	\end{align*}
	where we used Lemma \ref{estimate_NumOpk}. Analogously, for $2 \leq k \leq N$,
	\begin{align*}
	&\int (\mathrm{d}x)^{\otimes k}   \left( \prod_{n=1}^k \int \mathrm{d}q_n\ \rchi_{|x_n-q_n|\leq \sqrt{\hbar}R} \right) \left< \psi_N , 
	a^*_{x_1}\cdots a^*_{x_k} a_{x_k}\cdots a_{x_1} \psi_N  \right>\\
	&
	\leq \hbar^{\frac{3}{2}k} \left< \psi_N , 
	\mathcal{N}^k \psi_N  \right>
	= \hbar^{\frac{3}{2}k-3k} \left< \psi_N , 
	\frac{\mathcal{N}^k}{N^k} \psi_N  \right>
	\leq \hbar^{\frac{3}{2}k-3k},
	\end{align*}
	where we applied Lemma \ref{estimate_NumOpk} again. 
\end{proof}

\begin{Lemma}[Estimate of oscillation]\label{estimate_oscillation} For $\varphi(p) \in C^\infty_0 (\R^3)$ and 
	\begin{equation}\label{estimate_oscillation_omega}
	{\Omega_\hbar^\alpha} := \{x \in \R^3;\ \max_{1\leq j \leq 3} |x_j|\leq \hbar^\alpha \},
	\end{equation}
	it holds for every $\alpha \in (0,1)$, $s \in \N$, and $x \in \R^3\backslash {\Omega_\hbar^\alpha}$, 
	\begin{equation}\label{estimate_oscillation_0}
	\left|\int_{\R^3} \mathrm{d}p\  e^{\frac{\rm i}{\hbar}p\cdot x} \varphi(p)\right| \leq C \hbar^{(1-\alpha)s},
	\end{equation}
	where $C$ depends on the compact support and the $C^s$ norm of $\varphi$.
\end{Lemma}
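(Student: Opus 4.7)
This is a standard non-stationary phase estimate, and the plan is to prove it by repeated integration by parts in the coordinate of $p$ that pairs with the largest component of $x$.

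First, I would fix $x \in \R^3 \setminus \Omega_\hbar$ and choose an index $j \in \{1,2,3\}$ with $|x_j| = \max_{1 \leq \ell \leq 3} |x_\ell| > \hbar^\alpha$. The crucial observation is the differential identity
\begin{equation*}
e^{\frac{\mathrm i}{\hbar} p \cdot x} = \frac{\hbar}{\mathrm i\, x_j}\, \partial_{p_j} e^{\frac{\mathrm i}{\hbar} p \cdot x},
\end{equation*}
which is valid precisely because $x_j \neq 0$. This lets us transfer the oscillation onto $\varphi$.

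Next, I would iterate integration by parts $s$ times in the $p_j$ variable. Since $\varphi \in C_0^\infty(\R^3)$, all boundary terms vanish and the only contribution is
\begin{equation*}
\int_{\R^3} \mathrm d p\; e^{\frac{\mathrm i}{\hbar} p \cdot x}\, \varphi(p) = \left( \frac{\mathrm i\, \hbar}{x_j} \right)^{\! s} \int_{\R^3} \mathrm d p\; e^{\frac{\mathrm i}{\hbar} p \cdot x}\, \partial_{p_j}^s \varphi(p).
\end{equation*}
Taking absolute values, using $|x_j|>\hbar^\alpha$, and bounding the remaining integral trivially by $\|\partial_{p_j}^s \varphi\|_{L^\infty}\, |\supp \varphi|$, one obtains
\begin{equation*}
\left| \int_{\R^3} \mathrm d p\; e^{\frac{\mathrm i}{\hbar} p \cdot x}\, \varphi(p) \right| \leq \frac{\hbar^s}{|x_j|^s}\, \|\varphi\|_{C^s}\, |\supp \varphi| \leq C\, \hbar^{(1-\alpha)s},
\end{equation*}
with $C = \|\varphi\|_{C^s}\, |\supp \varphi|$, which is exactly \eqref{estimate_oscillation_0}.

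There is no substantial obstacle here; the argument is the classical non-stationary phase principle. The only mildly delicate point is making sure to choose the coordinate $j$ \emph{adapted to $x$}, rather than integrating by parts in some fixed direction, so that the gain $\hbar / |x_j| \leq \hbar^{1-\alpha}$ is available. The compact support of $\varphi$ is essential both to kill boundary terms at infinity and to turn the $L^\infty$ bound on $\partial_{p_j}^s \varphi$ into an $L^1$ bound with a constant depending only on $\supp \varphi$ and $\|\varphi\|_{C^s}$, as claimed in the statement.
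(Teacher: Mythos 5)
Your proposal is correct and follows essentially the same route as the paper: both select a coordinate $x_j$ with $|x_j|>\hbar^\alpha$ (guaranteed by $x\notin\Omega_\hbar$), integrate by parts $s$ times in the corresponding $p_j$ variable using the identity $e^{\frac{\mathrm i}{\hbar}p_jx_j}=\frac{\hbar}{\mathrm i x_j}\partial_{p_j}e^{\frac{\mathrm i}{\hbar}p_jx_j}$, and bound the result by $C\hbar^s/|x_j|^s\leq C\hbar^{(1-\alpha)s}$ with $C$ depending on the support and the $C^s$ norm of $\varphi$. No gaps.
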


\begin{proof}
	We will prove the lemma in a single-variable environment. That is, we let the momentum and space to be $p=(p_1, p_2, p_3)$ and $x = (x_1, x_2, x_3)$ such that $x_j, p_j \in \R$ for all $j \in \{1,2,3\}$. Then, for arbitrary $x \in \R^3\backslash {\Omega_\hbar^\alpha}$, one of the $x_j$s is bigger than $\hbar^\alpha$. Without loss of generality, we assume that $|x_1| > \hbar^\alpha$ and $x_2, x_3 \in \R$. 
	Let ${\rm supp}$ $\varphi\subset B_r(0)\subset\R^3$, we can rewrite the left hand of \eqref{estimate_oscillation_0} into the following,
	\begin{align*}
	& \left| \int^r_{-r} \mathrm{d}p_1 \int^r_{-r} \mathrm{d}p_2\int^r_{-r} \mathrm{d}p_3  e^{\frac{\rm i}{\hbar}(p_1x_1 + p_2x_2 + p_3x_3)} \varphi(p) \right|\\
	&=  \left| \int^r_{-r} \mathrm{d}p_2  e^{\frac{\rm i}{\hbar}p_2x_2}\int^r_{-r} \mathrm{d}p_3 e^{\frac{\rm i}{\hbar}p_3x_3}\int^r_{-r} \mathrm{d}p_1  e^{\frac{\rm i}{\hbar}p_1x_1} \varphi(p) \right|
	\end{align*}
	Observe that since
	\[
	-{\rm i} \frac{\hbar}{x_1}\frac{\mathrm{d}}{\mathrm{d}p_1} e^{\frac{\rm i}{\hbar}p_1x_1} = e^{\frac{\rm i}{\hbar}p_1x_1},
	\]
	we have after $s$ times integration by parts in $p_1$,
	\begin{align*}
	&\left| \int^r_{-r} \mathrm{d}p_1 \int^r_{-r} \mathrm{d}p_2\int^r_{-r} \mathrm{d}p_3  e^{\frac{\rm i}{\hbar}(p_1x_1 + p_2x_2 + p_3x_3)} \varphi(p) \right|\\
	&=  \left|\Big(-{\rm i} \frac{\hbar}{x_1} \Big)^s \int^r_{-r} \mathrm{d}p_2  e^{\frac{\rm i}{\hbar}p_2x_2}\int^r_{-r} \mathrm{d}p_3 e^{\frac{\rm i}{\hbar}p_3x_3}\int^r_{-r} \mathrm{d}p_1  e^{\frac{\rm i}{\hbar}p_1x_1} \partial_{p_1}^s\varphi(p) \right|\\
	&
	\leq C \frac{\hbar^s}{|x_1|^s}\leq C \hbar^{(1-\alpha)s},
	\end{align*}
	where $s$ indicates the number of time that integration by parts has been performed.
\end{proof}

\subsubsection{Finite moments of Husimi measure}
To prove that the second moment in $p$ of the Husimi measure is finite, we first show that the kinetic energy is bounded from above. Recall that the definition of the kinetic energy operator $\mathcal{K}$, i.e.,
\[
\mathcal{K} = \frac{\hbar^2}{2} \int \mathrm{d}x\ \nabla_x a^*_x \nabla_x a_x,
\]
and the kinetic energy associated with $\psi_N$ is given as $ \left<\psi_N, \mathcal{K} \psi_N \right>$. 

\begin{Lemma}\label{kinetic_finite}
	Assume $ V \in W^{1,\infty}$, then the kinetic energy is bounded in the following
	\begin{equation}\label{k_kinetic_bounded}
	\left<\psi_{N,t}, \frac{\mathcal{K}}{N} \psi_{N,t}\right> \leq 2\left<\psi_{N}, \frac{\mathcal{K}}{N} \psi_{N}\right>+Ct^2,
	\end{equation}
	where $C$ depends on $\|\nabla V\|_{\infty}$.
\end{Lemma}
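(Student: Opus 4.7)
The plan is to derive and solve a differential inequality for $K(t):=\tfrac{1}{N}\langle\psi_{N,t},\mathcal{K}\psi_{N,t}\rangle$. Writing $\mathcal{H}_N=\tfrac12\mathcal{K}+\mathcal{V}_N$ with $\mathcal{V}_N=\tfrac{1}{2N}\iint V(x-y)a^*_xa^*_ya_ya_x\,\mathrm{d}x\mathrm{d}y$, the Schr\"odinger equation \eqref{Schrodinger_fock} and the self-adjointness of $\mathcal{K}$ and $\mathcal{H}_N$ give
$$
\frac{\mathrm{d}}{\mathrm{d}t}K(t)=\frac{1}{Ni\hbar}\left\langle \psi_{N,t},[\mathcal{K},\mathcal{V}_N]\psi_{N,t}\right\rangle.
$$
Since $\psi_{N,t}\in\mathcal{F}_a^{(N)}$ for all $t$, the computation of the commutator can be carried out in first quantization, where $\mathcal{K}=-\hbar^{2}\sum_{k}\Delta_{k}$ and $\mathcal{V}_N=\tfrac{1}{2N}\sum_{i\neq j}V(x_i-x_j)$; only the terms with $k=i$ or $k=j$ survive in $[\mathcal{K},\mathcal{V}_N]$.

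The key observation, which avoids any dependence on $\|\Delta V\|_\infty$, is that the Leibniz identity gives $[\Delta_i,V(x_i-x_j)]=(\Delta V)+2\nabla V\cdot\nabla_i$, but in the expectation value an integration by parts converts $\langle\psi,(\Delta V)\psi\rangle$ into $-\langle\nabla_i\psi,\nabla V\psi\rangle-\langle\psi,\nabla V\cdot\nabla_i\psi\rangle$, leaving
$$
\left\langle\psi_{N,t},[\Delta_i,V(x_i-x_j)]\psi_{N,t}\right\rangle=2i\operatorname{Im}\left\langle\psi_{N,t},\nabla V(x_i-x_j)\cdot\nabla_i\psi_{N,t}\right\rangle,
$$
and analogously for $k=j$ with a sign flip. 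Therefore $K'(t)$ reduces to a sum of terms of the form $\frac{\hbar}{N^{2}}\operatorname{Im}\langle\psi_{N,t},\nabla V(x_i-x_j)\cdot\nabla_i\psi_{N,t}\rangle$ that only involve $\nabla V$.

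I then bound each of these terms by Cauchy--Schwarz: $|\langle\psi_{N,t},\nabla V(x_i-x_j)\cdot\nabla_i\psi_{N,t}\rangle|\leq\|\nabla V\|_\infty\|\nabla_i\psi_{N,t}\|$. By permutation antisymmetry, $\|\nabla_i\psi_{N,t}\|^{2}$ is independent of $i$ and equals $\tfrac{1}{\hbar^{2}N}\langle\psi_{N,t},\mathcal{K}\psi_{N,t}\rangle=K(t)/\hbar^{2}$. Summing over $i\neq j$ produces a factor $N(N-1)$ that cancels the $N^{-2}$ prefactor, and the $\hbar$ in front cancels the $\hbar^{-1}$ coming from $\sqrt{K(t)/\hbar^{2}}$, yielding
$$
|K'(t)|\leq 2\|\nabla V\|_\infty\sqrt{K(t)}.
$$

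Finally, setting $u(t):=\sqrt{K(t)}$, this reads $u'(t)\leq\|\nabla V\|_\infty$ (in the weak/upper-derivative sense where $u$ vanishes), so $u(t)\leq u(0)+\|\nabla V\|_\infty t$, and squaring with $(a+b)^{2}\leq 2a^{2}+2b^{2}$ gives $K(t)\leq 2K(0)+2\|\nabla V\|_\infty^{2}t^{2}$, which is the claimed \eqref{k_kinetic_bounded} with $C=2\|\nabla V\|_\infty^{2}$. The main subtlety is the second paragraph: the naive commutator identity produces a $\Delta V$ term one cannot control under the assumption $V\in W^{1,\infty}$, and only the integration by parts inside the inner product reveals the cleaner structure involving $\nabla V$ alone -- this is what makes the bound depend only on $\|\nabla V\|_\infty$ rather than on higher derivatives of $V$.
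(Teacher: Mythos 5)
Your proof is correct and follows essentially the same route as the paper: both derive the differential inequality $\frac{\mathrm{d}}{\mathrm{d}t}\left<\psi_{N,t},\frac{\mathcal{K}}{N}\psi_{N,t}\right>\leq C\|\nabla V\|_\infty\left<\psi_{N,t},\frac{\mathcal{K}}{N}\psi_{N,t}\right>^{1/2}$ from the commutator $[\mathcal{K},\mathcal{H}_N]$ and then integrate in time. The only difference is presentational: you compute the commutator in first quantization and use Leibniz plus integration by parts to eliminate the $\Delta V$ term, whereas the paper performs the equivalent manipulation in second quantization via the CAR, arriving at the same structure $\Im\left<\psi_{N,t},\nabla V\cdot\nabla\psi_{N,t}\right>$ that depends only on $\|\nabla V\|_\infty$.
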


\begin{proof}
	From the Schrödinger equation, we get
	\begin{equation}\label{kinetic_finite_1}
	i \hbar \frac{\mathrm{d}}{\mathrm{d}t} \left<\psi_{N,t} , \mathcal{K} \psi_{N,t} \right> = \left<\psi_{N,t}, [\mathcal{K},\mathcal{H}] \psi_{N,t} \right>.
	\end{equation}
	Note that since the commutator between kinetic and interaction term is given as
	\begin{align*}
	[\mathcal{K}, \mathcal{H}] = & \frac{\hbar^2}{4}\left[\int\mathrm{d}x\ \nabla_x a^*_x \nabla_x a_x , \iint \mathrm{d}y \mathrm{d}z\ V(y-z) a^*_y a^*_z a_z a_y \right] \\
	= & \frac{\hbar^2}{4} \iint \mathrm{d}x \mathrm{d}y \nabla_x V (x-y) \bigg( \nabla_x a^*_x a^*_y a_y a_x - a^*_x a^*_y a_y \nabla_x a_x \bigg)\\
	= &  \frac{\hbar^2}{2N} \Im \iint \mathrm{d}x \mathrm{d}y \nabla_x V (x-y) (\nabla_x a^*_x a^*_y a_y a_x )
	\end{align*}
	Then, from \eqref{kinetic_finite_1}, we have that
	\[
	\frac{1}{N} \frac{\mathrm{d}}{\mathrm{d}t}  \left<\psi_{N,t} , \mathcal{K} \psi_{N,t} \right> =  \frac{\hbar}{2N^2} \Im \iint \mathrm{d}x \mathrm{d}y\ \nabla_x V (x-y) \left<\psi_{N,t}, \nabla_x a^*_x a^*_y a_y a_x  \psi_{N,t} \right>.
	\]
	Now, observe that
	\begin{align*}
	& \left| \frac{\hbar}{2N^2} \iint \mathrm{d}x \mathrm{d}y\ \nabla_x V (x-y) \left<\psi_{N,t}, \nabla_x a^*_x a^*_y a_y a_x  \psi_{N,t} \right> \right| \\
	\leq &\frac{\hbar}{2N^2} \norm{\nabla V}_{L^\infty} \iint \mathrm{d}x \mathrm{d}y\ \norm{a_y \nabla_x a_x \psi_{N,t}}\norm{a_y a_x \psi_{N,t}}\\
	\leq & C \frac{\hbar}{2N^2} \left(\iint \mathrm{d}x \mathrm{d}y \left< \psi_{N,t},\nabla_x a^*_x a^*_y   a_y \nabla_x a_x \psi_{N,t} \right> \right)^\frac{1}{2} \left(\iint \mathrm{d}x \mathrm{d}y \left< \psi_{N,t}, a^*_x a^*_y   a_y  a_x \psi_{N,t} \right> \right)^\frac{1}{2} \\
	= & C  \left( \frac{\hbar^2}{N} \int \mathrm{d}x \left< \psi_{N,t},\nabla_x a^*_x \frac{\mathcal{N}}{N} \nabla_x a_x \psi_{N,t} \right> \right)^\frac{1}{2} \left< \psi_{N,t}, \frac{\mathcal{N}^2}{N^2} \psi_{N,t} \right>^\frac{1}{2}\\
	\leq & C \left( \left< \psi_{N,t},\frac{\mathcal{K}}{N} \psi_{N,t} \right> \right)^\frac{1}{2},
	\end{align*}
	Thus, we have 
	\[
	\frac{\mathrm{d}}{\mathrm{d}t}  \left< \psi_{N,t},\frac{\mathcal{K}}{N} \psi_{N,t} \right> \leq C  \left< \psi_{N,t},\frac{\mathcal{K}}{N} \psi_{N,t} \right>^\frac{1}{2}.
	\]
	Integrating both sides with respect to time $t$ and we obtain the desired inequality.

\end{proof}

\begin{Proposition} \label{2nd_moment_finite} For $t \geq 0$, assume \ref{assume_f_compact} and let $m^{(k)}_{N,t}$ to be the $k$-particle Husimi measure. Denoting the phase-space vectors $\vec{q}_k = (q_1, \dots, q_k)$ and $\vec{p}_k = (p_1, \dots, p_k)$, we have the following finite moments,
	\[
	\dotsint (\mathrm{d}q \mathrm{d}p)^{\otimes k}\ (|\vec{q}_k| + |\vec{p}_k|^2) m^{(k)}_{N,t}(q_1, \dots, p_k) \leq C(1+t^3)
	\]
	where $C$ is a constant dependent on $k$, $\iint \mathrm{d}q_1 \mathrm{d}p_1 (|q_1| + |p_1|^2) m^{(1)}_{N}(q_1, p_1) $, and $\|\nabla V\|_\infty$.
\end{Proposition}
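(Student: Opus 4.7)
The plan is to establish the moment bounds at the level $k=1$ first, and then lift to arbitrary $k$ via the symmetry of $m^{(k)}_{N,t}$ (Lemma \ref{prop_kHusimi}(1)) and the partial integration identity (Lemma \ref{prop_kHusimi}(3)). Concretely, $|\vec{p}_k|^2 = \sum_{j=1}^k |p_j|^2$ and $|\vec{q}_k| \leq \sum_{j=1}^k |q_j|$, so by symmetry the integrals reduce to $k\int |p_1|^2 m^{(k)}_{N,t}$ and $k\int |q_1| m^{(k)}_{N,t}$, respectively. Integrating out $(q_2,p_2),\dots,(q_k,p_k)$ by $k-1$ applications of Lemma \ref{prop_kHusimi}(3) and using $\hbar^3=1/N$, the factors $(2\pi\hbar)^3(N-j+1)$ collapse to constants depending only on $k$, yielding bounds in terms of $\int|p_1|^2 m^{(1)}_{N,t}$ and $\int|q_1| m^{(1)}_{N,t}$.

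For the momentum moment at $k=1$, I would unpack $m^{(1)}_{N,t}(q,p)=\langle\psi_{N,t},a^*(f^\hbar_{q,p})a(f^\hbar_{q,p})\psi_{N,t}\rangle$ and evaluate the phase-space integral $\iint dq\,dp\,|p|^2 f^\hbar_{q,p}(x)\overline{f^\hbar_{q,p}(y)}$. The $p$-integral gives a distributional Laplacian $-\hbar^2(2\pi\hbar)^3\Delta_\xi\delta(x-y)$, while the $q$-integral produces $\hbar^{3/2}(f\star f)((x-y)/\sqrt\hbar)$ with $(f\star f)(0)=1$ and $\nabla(f\star f)(0)=0$. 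After integration by parts the two derivatives fall on either the autocorrelation (producing a $\hbar\|\nabla f\|_2^2$ times $\mathcal N$ contribution) or on the operator kernel (producing $2\mathcal K/\hbar^2$), giving an identity of the form
\[
\iint dq\,dp\,|p|^2 m^{(1)}_{N,t}(q,p) = (2\pi)^3\Bigl[\hbar\|\nabla f\|_2^2 + 2\langle\psi_{N,t},\tfrac{\mathcal K}{N}\psi_{N,t}\rangle\Bigr].
\]
Lemma \ref{kinetic_finite} then yields $\int|p_1|^2 m^{(1)}_{N,t}\leq C(1+t^2)$.

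For the spatial moment, I would run a Gronwall argument on $t\mapsto\int|q_1|m^{(1)}_{N,t}$ using the BBGKY equation \eqref{BBGKY_k1}. The transport term gives $\int(q_1/|q_1|)\cdot p_1\,m^{(1)}_{N,t}\leq\int|p_1|m^{(1)}_{N,t}$, bounded by Cauchy--Schwarz and the previous step by $C(1+t)$. The force term vanishes after integration by parts in $p_1$ since $|q_1|$ is $p_1$-independent. The remainder contributions $\int(q_1/|q_1|)\cdot\mathcal R_1$ and the $\widetilde{\mathcal R}_1$-term carry an extra $\hbar$ prefactor and are controlled by $\langle\psi_{N,t},\mathcal N/N\,\psi_{N,t}\rangle$ and $\langle\psi_{N,t},\mathcal K/N\,\psi_{N,t}\rangle$ (Lemmas \ref{estimate_NumOpk} and \ref{kinetic_finite}); hence they are $o(1)$ as $N\to\infty$. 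Integrating in time produces $\int|q_1|m^{(1)}_{N,t}\leq C(1+t^2)$. Combined with the momentum bound this gives $C(1+t^3)$ after collecting the worst power.

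The main obstacle is Step 2: balancing the $(2\pi\hbar)^3$ prefactors in the coherent-state integral so that the second $p$-moment is identified, modulo a genuinely lower-order $\hbar\|\nabla f\|_2^2$ correction, with the quantity controlled by Lemma \ref{kinetic_finite}. A secondary technical point is the use of the non-smooth weight $|q|$ in the time-derivative argument; this I would handle by a smooth truncation $|q|\wedge M$ together with a monotone convergence argument, using the uniform $L^1$-mass bound from Lemma \ref{prop_kHusimi}(2) to pass to the limit $M\to\infty$.
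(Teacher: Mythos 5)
Your proposal follows essentially the same route as the paper: the second $p$-moment is identified with the kinetic energy per particle via the coherent-state resolution of the identity up to an $O(\hbar)\,\|\nabla f\|_2^2$ correction and then controlled by Lemma \ref{kinetic_finite}, the first $q$-moment is handled by a Gronwall-type argument on \eqref{BBGKY_k1} in which the $\nabla_{p_1}$-divergence terms drop out against the $p_1$-independent weight $|q_1|$ and $\mathcal{R}_1$ is $O(\sqrt{\hbar})$, and the general $k$ case is reduced to $k=1$ by symmetry and Lemma \ref{prop_kHusimi}(3). The only discrepancy is the sign of the correction term --- the paper's computation shows $\frac{1}{(2\pi)^3}\iint |p_1|^2 m^{(1)}_{N,t} = \langle\psi_{N,t},\tfrac{\mathcal K}{N}\psi_{N,t}\rangle - \hbar\|\nabla f\|_2^2$, so the moment is bounded by the kinetic energy directly --- but since the correction is $O(\hbar)$ either way, this does not affect the conclusion.
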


\begin{proof}
	We first consider the case where $k=1$. Observe that we may rewrite the kinetic energy {as follows}
	\begin{align*}
	&\frac{1}{N} \left<\psi_{N,t}, \mathcal{K} \psi_{N,t} \right>= \frac{\hbar^2}{N} \int \mathrm{d}w\  \left<\psi_{N,t}, \nabla_w a^*_w \nabla_w a_w \psi_{N,t} \right> \\
	= & \frac{\hbar^2}{N} (2\pi \hbar)^{-3} \iint \mathrm{d}q_1\mathrm{d}p_1\iint \mathrm{d}w\mathrm{d}u\ f^\hbar_{q_1,p_1} (w) \overline{f^\hbar_{q_1,p_1} (u)} \left<\psi_{N,t}, \nabla_w a^*_w \nabla_u a_u \psi_{N,t} \right>\\
	= & \frac{\hbar^2}{(2\pi)^3}  \iint \mathrm{d}q_1\mathrm{d}p_1\iint \mathrm{d}w\mathrm{d}u\ \nabla_w f^\hbar_{q_1,p_1} (w) \overline{\nabla_u f^\hbar_{q_1,p_1} (u)} \left<\psi_{N,t},  a^*_w  a_u \psi_{N,t}\right>\\
	=&   \frac{\hbar^2}{(2\pi)^3} \iint \mathrm{d}q_1\mathrm{d}p_1\iint \mathrm{d}w\mathrm{d}u\ ( - \nabla_{q_1} + i \hbar^{-1} p_1) f^\hbar_{q_1,p_1} (w)\cdot( - \nabla_{q_1} - i \hbar^{-1} p_1) \overline{ f^\hbar_{q_1,p_1} (u)} \left<\psi_{N,t},  a^*_w  a_u \psi_{N,t}\right>,
	\end{align*}
	where we used the fact that
	\[
	\nabla_w f\left(\frac{w-q_1}{\sqrt{\hbar}} \right) = - \nabla_{q_1} f\left(\frac{w-q_1}{\sqrt{\hbar}} \right).
	\]
	To continue, we have
	\begin{equation}\label{moment_momentum_0}
	\begin{aligned}
	\frac{1}{N} \left<\psi_{N,t}, \mathcal{K} \psi_{N,t} \right> = & \frac{1}{(2\pi)^3}\iint \mathrm{d}q_1\mathrm{d}p_1\ |p_1|^2  m^{(1)}_{N,t}(q_1, p_1)\\ 
	& +\frac{\hbar^2}{(2\pi)^3}  \iint \mathrm{d}q_1\mathrm{d}p_1\iint \mathrm{d}w\mathrm{d}u\ \nabla_{q_1} f^\hbar_{q_1,p_1} (w)\cdot  \nabla_{q_1}\overline{ f^\hbar_{q_1,p_1} (u)}\left<\psi_{N,t},  a^*_w  a_u \psi_{N,t}\right>\\
	& + \hbar \frac{2i}{(2\pi)^3} \Im  \iint \mathrm{d}q_1\mathrm{d}p_1\iint \mathrm{d}w\mathrm{d}u\ p_1 \cdot \nabla_{q_1}f^\hbar_{q_1,p_1} (w) \overline{ f^\hbar_{q_1,p_1} (u)}\left<\psi_{N,t},  a^*_w  a_u \psi_{N,t}\right>.\\
	\end{aligned}
	\end{equation}
	Since kinetic energy is real-valued, if we take the real part of \eqref{moment_momentum_0}, the last term in the right hand side vanishes since it is purely imaginary, yielding
	\begin{align*}
	\frac{1}{N} \left<\psi_{N,t}, \mathcal{K} \psi_{N,t} \right> = & \frac{1}{(2\pi)^3}\iint \mathrm{d}q_1\mathrm{d}p_1\ |p_1|^2  m^{(1)}_{N,t}(q_1, p_1)\\ 
	& +\frac{\hbar^2}{(2\pi)^3} \Re \iint \mathrm{d}q_1\mathrm{d}p_1\iint \mathrm{d}w\mathrm{d}u\ \nabla_{q_1} f^\hbar_{q_1,p_1} (w)\cdot  \nabla_{q_1}\overline{ f^\hbar_{q_1,p_1} (u)}\left<\psi_{N,t},  a^*_w  a_u \psi_{N,t}\right>.
	\end{align*}
	Note that by \eqref{hbar_fourier}, we have
	\begin{align}
	& \nonumber\frac{\hbar^2}{(2\pi)^3} \iint \mathrm{d}q_1\mathrm{d}p_1\iint \mathrm{d}w\mathrm{d}u\  \nabla_{q_1}f^\hbar_{q_1,p_1} (w)\cdot \nabla_{q_1}  \overline{ f^\hbar_{q_1,p_1} (u)}\left<\psi_{N,t},  a^*_w  a_u \psi_{N,t}\right> \\
	\label{moment_momentum_2}=& \hbar^{2+3} \iint \mathrm{d}q_1\mathrm{d}w\ \hbar^{-\frac{3}{2}} \left| \nabla_{q_1} f\left(\frac{w-q_1}{\sqrt{\hbar}}\right) \right|^2 \left<\psi_{N,t},  a^*_w  a_w \psi_{N,t}\right> \\
	\nonumber = & \hbar \int \mathrm{d}\widetilde{q}  \left| \nabla f\left(\widetilde{q}\right) \right|^2 \left<\psi_{N,t},  \frac{\mathcal{N}}{N} \psi_{N,t}\right>\\
	\nonumber= & \hbar \int \mathrm{d}\widetilde{q}  \left| \nabla f\left(\widetilde{q}\right) \right|^2,
	\end{align}
	where we recall that $\hbar^3 = N^{-1}$. Thus, taking the real part of \eqref{moment_momentum_0}, we have that
	\begin{equation}
	\begin{aligned}
	\left<\psi_{N,t},\frac{\mathcal{K}}{N}\psi_{N,t} \right> =  \frac{1}{(2\pi)^3}\iint \mathrm{d}q_1\mathrm{d}p_1\ |p_1|^2  m^{(1)}_{N,t}(q_1, p_1)
	+ \hbar \int \mathrm{d}q \left| \nabla f\left(q\right) \right|^2,
	\end{aligned}
	\end{equation}
	which means, 
	\begin{equation}\label{moment_momentum_1}
	\begin{aligned}
	\frac{1}{(2\pi)^3}\iint \mathrm{d}q_1\mathrm{d}p_1\ |p_1|^2  m^{(1)}_{N,t}(q_1, p_1) 
	\leq  \left<\psi_{N,t},\frac{\mathcal{K}}{N} \psi_{N,t} \right> .
	\end{aligned}
	\end{equation}
	
	Therefore, \eqref{moment_momentum_1} tells us that the second moment of the $1$-particle Husimi measure in momentum space is finite if the kinetic energy is finite. 
	
	Now, we turn our focus on the moment with respect to position space. From \eqref{BBGKY_k1}, we get
	\begin{align*}
	&\p_t \iint \mathrm{d}q_1 \mathrm{d}p_1\ |q_1| m^{(1)}_{N,t}(q_1,p_1) =  \iint |q_1|\p_tm^{(1)}_{N,t}(q_1,p_1)\\
	= &\iint \mathrm{d}q_1 \mathrm{d}p_1\ |q_1| \bigg( - p_1 \cdot \nabla_{q_1} m^{(1)}_{N,t}(q_1,p_1) + \frac{1}{(2\pi)^3} \nabla_{p_1} \cdot \iint \mathrm{d}w\mathrm{d}u \iint \mathrm{d}x\mathrm{d}y \iint \mathrm{d}q_2 \mathrm{d}p_2 \int_0^1 \mathrm{d}s\\
	&  \nabla V\big(su+(1-s)w - x \big) f_{q_1,p_1}^\hbar (w) \overline{f_{q_1,p_1}^\hbar (u)} f_{q_2,p_2}^\hbar (x) \overline{f_{q_2,p_2}^\hbar (y)}  \left< a_x a_w \psi_{N,t}, a_y a_u \psi_{N,t} \right> + \nabla_{q_1} \cdot \mathcal{R}_1 \bigg).
	\intertext{Then, using intergration by parts with respect to $p_1$,}
	&=  \iint \mathrm{d}q_1 \mathrm{d}p_1\ \nabla_{q_1}|q_1| \cdot  \left( p_1 m^{(1)}_{N,t}(q_1,p_1) +  \mathcal{R}_1 \right)\\
	&= \iint \mathrm{d}q_1 \mathrm{d}p_1\ \frac{q_1}{|q_1|} \cdot  \left( p_1 m^{(1)}_{N,t}(q_1,p_1) +  \mathcal{R}_1 \right)\\
	& \leq \iint \mathrm{d}q_1 \mathrm{d}p_1  \left( |p_1| m^{(1)}_{N,t}(q_1,p_1) +  |\mathcal{R}_1| \right),
	\end{align*}
	where $R_1$ is the remainder term in \eqref{bbgky_remainder_1}.
	
	Note that by Young's product inequality, we have
	\begin{align*}
	\iint \mathrm{d}q_1 \mathrm{d}p_1 \ | p_1| m^{(1)}_{N,t}(q_1,p_1) \leq & \iint \mathrm{d}q_1 \mathrm{d}p_1  \left( 1+ |p_1|^2 \right) m^{(1)}_{N,t}(q_1,p_1) \\
	&\leq (2\pi)^3  \left(1 + 2\left<\psi_{N},\frac{\mathcal{K}}{N} \psi_{N} \right> + Ct^2\right),
	\end{align*}
	where we used \eqref{moment_momentum_1} and Lemma \ref{kinetic_finite} in the last inequality. Next, we want to bound the term associated with $\mathcal{R}_1$,
	\begin{align*}
	\iint \mathrm{d}q_1 \mathrm{d}p_1 \  |\mathcal{R}_1| \leq & \hbar  \iint \mathrm{d}q_1 \mathrm{d}p_1 \  | \left< \nabla_{q_1} a (f^\hbar_{q_1,p_1}) \psi_{N,t}, a (f^\hbar_{q_1,p_1}) \psi_{N,t} \right>|.
	\end{align*}
	
	Observer that we have,
	\begin{align*}
	&  \hbar  \iint \mathrm{d}q_1 \mathrm{d}p_1 \  \Big|\left< \nabla_{q_1} a (f^\hbar_{q_1,p_1}) \psi_{N,t},  a (f^\hbar_{q_1,p_1}) \psi_{N,t} \right>  \Big| \leq \hbar  \iint \mathrm{d}q_1 \mathrm{d}p_1 \norm{\nabla_{q_1} a (f^\hbar_{q_1,p_1}) \psi_{N,t},}\norm{ a (f^\hbar_{q_1,p_1}) \psi_{N,t} }\\
	\leq & \hbar  \left[\iint \mathrm{d}q_1 \mathrm{d}p_1  \left< \nabla_{q_1} a (f^\hbar_{q_1,p_1}) \psi_{N,t},   \nabla_{q_1} a (f^\hbar_{q_1,p_1}) \psi_{N,t} \right> \right]^\frac{1}{2} \left[\iint \mathrm{d}q_1 \mathrm{d}p_1\ \left<\psi_{N,t}, a^*(f^\hbar_{q_1,p_1})  a (f^\hbar_{q_1,p_1}) \psi_{N,t}\ \right> \right]^\frac{1}{2}\\
	=& \left[\hbar^2 \iint \mathrm{d}q_1\mathrm{d}p_1\iint \mathrm{d}w\mathrm{d}u\  \nabla_{q_1}f^\hbar_{q_1,p_1} (w)\cdot \nabla_{q_1}  \overline{ f^\hbar_{q_1,p_1} (u)}\left<\psi_{N,t},  a^*_w  a_u \psi_{N,t}\right> \right]^\frac{1}{2} (2\pi)^\frac{3}{2}\\
	\leq & (2\pi)^3  \sqrt{\hbar} \left[  \int \mathrm{d}\widetilde{q}  \left| \nabla f\left(\widetilde{q}\right) \right|^2\right]^\frac{1}{2},
	\end{align*}
	where we used \eqref{moment_momentum_2}, Lemma \ref{prop_kHusimi}. Thus, we have that
	\begin{equation}
	\p_t \iint \mathrm{d}q_1 \mathrm{d}p_1\ |q_1| m^{(1)}_{N,t}(q_1,p_1) \leq (2\pi)^3  \left(1 + 2\left<\psi_{N},\frac{\mathcal{K}}{N} \psi_{N} \right> + Ct^2 + C \sqrt{\hbar} \right)\leq C(1+t^2).
	\end{equation}
	which gives the estimate for first moment after integrating with respect to time $t$.
	
	We now consider the case of $2 \leq k \leq N$. In this computation, we make use of the properties of $k$-particle Husimi measure. Namely, that the $m^{(k)}_{N,t}$ is symmetric and satisfies the following equation
	
	\begin{equation}\label{moment_momentum_3}
	\begin{aligned}
	\frac{1}{(2\pi)^3} \iint \mathrm{d}q_k\mathrm{d}p_k\  m^{(k)}_{N,t} (q_1,p_1,\dots,q_k,p_k)  =& \frac{(N-k+1)}{N}  m^{(k-1)}_{N,t} (q_1,p_1,\dots,q_{k-1},p_{k-1})\\
	\leq & m^{(k-1)}_{N,t} (q_1,p_1,\dots,q_{k-1},p_{k-1}).
	\end{aligned}
	\end{equation}
	Observe that for fixed $1\leq k \leq N$.
	\begin{align*}
	&\dotsint ( \mathrm{d}q\mathrm{d}p)^{\otimes k}  \sum_{j=1}^k |p_j|^2 m^{(k)}_{N,t} (q_1,p_1,\dots,q_k,p_k)\\
	= & \sum_{j=1}^k \iint  \mathrm{d}q_j\mathrm{d}p_j\ |p_j|^2 \dotsint  \mathrm{d}q_1\mathrm{d}p_1 \cdots \widehat{\mathrm{d}q_j}\widehat{\mathrm{d}p_j} \cdots \mathrm{d}q_k \mathrm{d}p_k\ m^{(k)}_{N,t} (q_1,p_1,\dots,q_k,p_k).
	\intertext{Then, by using the symmetricity of $m^{(k)}_{N,t}$ and change of variables, we get }
	&
	= k \iint  \mathrm{d}q\mathrm{d}p\ |p|^2 \dotsint  ( \mathrm{d}q\mathrm{d}p)^{\otimes k-1}  m^{(k)}_{N,t} (q,p, q_1,p_1\dots,q_{k-1},p_{k-1})\\
	&
	= (2\pi)^{3(k-1)}k \frac{(N-1)\cdots(N-k+1)}{N^{k-1}} \iint \mathrm{d}q\mathrm{d}p\ |p|^2 m^{(1)}_{N,t} (q,p)\\
	&
	\leq (2\pi)^{3k} k \left(1 + 2\left<\psi_{N},\frac{\mathcal{K}}{N} \psi_{N} \right> + Ct^2\right)\leq C(1+t^2),
	\end{align*}
	where we denoted  $(\mathrm{d}q\mathrm{d}p)^{\otimes k-1} = \mathrm{d}q_1\mathrm{d}p_1\cdots \mathrm{d}q_{k-1}\mathrm{d}p_{k-1}$.
	
	Similar strategy is used to obtain the first moment with respect to $\vec{q}_k$. That is
	\begin{align*}
	&\dotsint ( \mathrm{d}q\mathrm{d}p)^{\otimes k}  \sum_{j=1}^k |q_j| m^{(k)}_{N,t} (q_1,p_1,\dots,q_k,p_k)\\
	= &  (2\pi)^{3(k-1)}k \frac{(N-1)\cdots(N-k+1)}{N^{k-1}} \iint \mathrm{d}q\mathrm{d}p\ |q| m^{(1)}_{N,t} (q,p)\\
	\leq & (2\pi)^{3(k-1)}k  \iint \mathrm{d}q\mathrm{d}p\ |q| m^{(1)}_{N,t} (q,p)\leq  C(1+t^3).
	\end{align*}
	This yields the desired conclusion.
\end{proof}

\subsection{Uniform estimates for the remainder terms}
\label{sec2.3}
In this subsection, we give uniform estimates for the error terms that appear in \eqref{BBGKY_k1} and \eqref{BBGKY_k}. They are all bounded of order $\hbar^{\frac{1}{2}-\delta}$ for arbitrary small $\delta>0$. The proofs of all the following propositions will be provided in section \ref{proof_of_estimations}.

\begin{Proposition}\label{LemRk}
	Let Assumption \ref{assume_f_compact} holds, then for $1 \leq k \leq N$, we have the following bound for $\mathcal{R}_k$ in \eqref{BBGKY_k1} and \eqref{BBGKY_k}. For arbitrary small $\delta >0$, the following estimate holds for any test function $\Phi\in C^\infty_0(\R^{6k})$,
	\[
	\left|\dotsint (\mathrm{d}q \mathrm{d}p)^{\otimes k}\Phi(q_1, p_1,\dots,q_k,p_k)\nabla_{\vec{q}_k}\cdot \mathcal{R}_k \right| \leq C \hbar^{\frac{1}{2}-\delta},
	\]
	where $C$ depends on $\|D^{s(\delta)}\Phi\|_{\infty}$ and $k$.
\end{Proposition}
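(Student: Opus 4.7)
The strategy is to move the divergence in $\nabla_{\vec{q}_k}\cdot\mathcal{R}_k$ onto the test function $\Phi$ via integration by parts (no boundary terms since $\Phi$ is compactly supported), and then to bound $\int|\mathcal{R}_k|\,(\mathrm{d}q\mathrm{d}p)^{\otimes k}$ by Cauchy--Schwarz. The $\hbar^{1/2}$ smallness is essentially built in: the explicit $\hbar$ in the definition of $\mathcal{R}_k$ combines with a factor $\hbar^{-1/2}$ produced each time $\nabla_{q_j}$ hits the Gaussian envelope of a coherent state, since
\[
\nabla_{q_j} f^\hbar_{q_j,p_j}(y) = -\hbar^{-1/2}(\nabla f)^\hbar_{q_j,p_j}(y), \qquad \nabla_{q_j} a(f^\hbar_{q_j,p_j}) = -\hbar^{-1/2}\,a\bigl((\nabla f)^\hbar_{q_j,p_j}\bigr).
\]
This identity lets me rewrite each component as
\[
(\mathcal{R}_k)_{j,\alpha} = -\hbar^{1/2}\,\Im\bigl\langle a(f^\hbar_{q_k,p_k})\cdots a\bigl((\partial_\alpha f)^\hbar_{q_j,p_j}\bigr)\cdots a(f^\hbar_{q_1,p_1})\psi_{N,t},\ a(f^\hbar_{q_k,p_k})\cdots a(f^\hbar_{q_1,p_1})\psi_{N,t}\bigr\rangle.
\]

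I would then apply Cauchy--Schwarz pointwise, followed by Cauchy--Schwarz in the $L^1((\mathrm{d}q\mathrm{d}p)^{\otimes k})$-integral, to obtain
\[
\int |(\mathcal{R}_k)_{j,\alpha}|\,(\mathrm{d}q\mathrm{d}p)^{\otimes k} \leq \hbar^{1/2}\,I_1^{1/2}\,I_2^{1/2},
\]
where $I_2 = \int m^{(k)}_{N,t}\,(\mathrm{d}q\mathrm{d}p)^{\otimes k} \leq (2\pi)^{3k}$ by property~2 of Lemma~\ref{prop_kHusimi}, and $I_1$ is the same Husimi-type integral but with $f$ replaced by $\partial_\alpha f$ in the $j$-th slot (both on the bra and the ket). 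Controlling $I_1$ is the main technical step.

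To bound $I_1$, I would use the fermionic CAR to anti-commute $a((\partial_\alpha f)^\hbar_{q_j,p_j})$ and its adjoint to the rightmost position on each side, the induced signs cancelling in the modulus squared. The integrations over $(q_i,p_i)_{i\neq j}$ are then carried out iteratively by means of the identity
\[
\int a^*(f^\hbar_{q,p})\,O\,a(f^\hbar_{q,p})\,\mathrm{d}q\,\mathrm{d}p = (2\pi\hbar)^3\!\int a^*_x\,O\,a_x\,\mathrm{d}x,
\]
a direct consequence of $\norm{f}_2 = 1$ and the coherent-state resolution of identity (Lemma~\ref{coherent_projections}); this collapses the surrounding creation/annihilation pairs into the operator $(2\pi\hbar)^{3(k-1)}\mathcal{N}(\mathcal{N}-1)\cdots(\mathcal{N}-k+2)$ sandwiched around $a((\partial_\alpha f)^\hbar_{q_j,p_j})\psi_{N,t}\in\mathcal{F}_a^{(N-1)}$, on which it acts as the scalar $(N-1)\cdots(N-k+1)$. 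The remaining $(q_j,p_j)$-integration is handled by the analogous identity for the unnormalized function $\partial_\alpha f$, producing a factor $(2\pi\hbar)^3\norm{\partial_\alpha f}_2^2\,\langle\psi_{N,t},\mathcal{N}\psi_{N,t}\rangle = (2\pi\hbar)^3\norm{\partial_\alpha f}_2^2\,N$ via Lemma~\ref{estimate_NumOpk}. Combining these with $\hbar^3 N = 1$ gives $I_1 \leq (2\pi)^{3k}\norm{\partial_\alpha f}_2^2$, uniformly in $N$ and $t$.

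Summing over $j,\alpha$ and performing the integration by parts against $\Phi$ yields
\[
\Bigl|\dotsint \Phi\,\nabla_{\vec{q}_k}\cdot \mathcal{R}_k\,(\mathrm{d}q\mathrm{d}p)^{\otimes k}\Bigr| \leq C_k\,\norm{\nabla_{\vec{q}_k}\Phi}_\infty\,\norm{\nabla f}_2\,\hbar^{1/2},
\]
which in particular implies the stated bound with $\delta = 0$ and $s(\delta)=1$. The main obstacle is purely combinatorial: tracking the sign produced by the fermionic CAR when moving several annihilation operators past each other, and carefully iterating the coherent-state resolution identity with one non-normalized function $\partial_\alpha f$ so that the extra energy factor $\norm{\nabla f}_2$ can be extracted cleanly. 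Once these operator manipulations are set up, all remaining estimates reduce to Lemmas~\ref{estimate_NumOpk} and~\ref{prop_kHusimi}.
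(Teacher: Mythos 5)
Your proposal is correct, and it takes a genuinely different route from the paper. The paper never integrates $\|\cdot\|\,\|\cdot\|$ globally; instead it writes $\mathcal{R}_k$ out in the kernels $a_{w}$, $a_u$, splits the $p$-integration according to whether $w_n-u_n$ lies in the small box $\Omega_\hbar$ of side $\hbar^\alpha$ or its complement, kills the complement by the non-stationary-phase Lemma \ref{estimate_oscillation} (which is where the higher derivatives $D^{s(\delta)}\Phi$ and the loss $\hbar^{-\delta}$ come from), and controls the $\Omega_\hbar$ part by the localized number operator bound of Lemma \ref{N_hbar} together with H\"older and the $H^1\hookrightarrow L^6$ embedding, arriving at $\hbar^{\frac12+(\alpha-1)k}$. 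Your argument --- pull $\nabla_{\vec q_k}$ onto $\Phi$, convert $\nabla_{q_j}a(f^\hbar_{q_j,p_j})=-\hbar^{-1/2}a((\partial_\alpha f)^\hbar_{q_j,p_j})$, apply Cauchy--Schwarz pointwise and then in $(\mathrm{d}q\mathrm{d}p)^{\otimes k}$, and evaluate both resulting integrals by iterating the coherent-state resolution of identity (the unnormalized version contributing $\|\partial_\alpha f\|_2^2$) down to $\langle\psi_{N,t},\mathcal{N}(\mathcal{N}-1)\cdots(\mathcal{N}-k+1)\psi_{N,t}\rangle\le N^k=(2\pi\hbar)^{-3k}(2\pi)^{3k}$ --- is sound; the operator bookkeeping works whether you anticommute the $j$-th pair outward or simply integrate the nested pairs from the outside in. It is more elementary (no oscillation lemma, no localized number operator, no compact support of $f$), it requires only $\|\nabla_{\vec q_k}\Phi\|_\infty$ rather than $\|D^{s(\delta)}\Phi\|_\infty$, and it yields the sharper rate $\hbar^{1/2}$ with no $\delta$-loss, hence a strictly stronger statement than the Proposition. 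What the paper's heavier machinery buys is uniformity of method: the same $\Omega_\hbar$-splitting is what handles $\widetilde{\mathcal{R}}_k$ and $\widehat{\mathcal{R}}_k$, where the potential-difference structure blocks your direct Cauchy--Schwarz; for $\mathcal{R}_k$ alone your argument suffices and is preferable.
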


\begin{Proposition} \label{lemma_vla_k1_convg}
	Let Assumption \ref{assume_f_compact} and \ref{assume_V} hold, then we have the following bound for $\widetilde{\mathcal{R}}_1$ in \eqref{bbgky_remainder_1}. For arbitrary small $\delta >0$, the following estimate holds for any test function $\Phi\in C^\infty_0(\R^{6})$,
	\begin{equation} \label{lemma_vla_k1_convg_0}
	\begin{aligned}
	\bigg| &  \iint \mathrm{d}q_1 \mathrm{d}p_1 \Phi(q_1, p_1) \nabla_{p_1} \cdot \widetilde{\mathcal{R}}_1 \bigg| \leq C \hbar^{\frac{1}{2}-\delta},
	\end{aligned}
	\end{equation}
	where $C$ depends on $\|D^{s(\delta)}\Phi\|_{\infty}$.
\end{Proposition}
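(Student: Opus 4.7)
The plan is a Taylor-expansion argument. First I recognize that the second summand of $\widetilde{\mathcal{R}}_1$ in \eqref{bbgky_remainder_1} coincides exactly with the first summand under the replacement $\int_0^1 ds\,\nabla V(su+(1-s)w-y)\mapsto\nabla V(q_1-q_2)$. Indeed, substituting the definition \eqref{husimi_def_1} of $m^{(2)}_{N,t}$ (with $w=w_1,u=u_1,y=w_2,v=u_2$) and noting that the Husimi measure is real-valued (so the $\Re$ passes through trivially after the replacement), one gets
\begin{equation*}
\widetilde{\mathcal{R}}_1=\frac{1}{(2\pi)^3}\Re\iint \mathrm{d}q_2 \mathrm{d}p_2\iint \mathrm{d}w \mathrm{d}u \mathrm{d}y \mathrm{d}v\,\mathcal{E}\, f^\hbar_{q_1,p_1}(w)\overline{f^\hbar_{q_1,p_1}(u)}f^\hbar_{q_2,p_2}(y)\overline{f^\hbar_{q_2,p_2}(v)}\left\langle\psi_{N,t},a^*_wa^*_ya_va_u\psi_{N,t}\right\rangle,
\end{equation*}
with $\mathcal{E}:=\int_0^1 ds\,\nabla V(su+(1-s)w-y)-\nabla V(q_1-q_2)$. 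By Assumption \ref{assume_f_compact} the coherent states vanish outside balls of radius $\sqrt{\hbar}R$, so $|(su+(1-s)w-y)-(q_1-q_2)|\leq 2\sqrt{\hbar}R$ on the support, and $V\in W^{2,\infty}$ combined with Taylor's theorem with integral remainder yields the uniform pointwise bound $|\mathcal{E}|\leq 2R\|D^2V\|_\infty\sqrt{\hbar}$.

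Next I integrate by parts in $p_1$ (legitimate because $\Phi\in C^\infty_0$) to rewrite the quantity of interest as $-\iint \mathrm{d}q_1\,\mathrm{d}p_1\,\nabla_{p_1}\Phi\cdot\widetilde{\mathcal{R}}_1$, so that $(q_1,p_1)$ is effectively restricted to the compact set $\mathrm{supp}(\nabla\Phi)$. For the matrix element I use Cauchy-Schwarz $|\langle\psi_{N,t},a^*_wa^*_ya_va_u\psi_{N,t}\rangle|\leq\|a_ya_w\psi_{N,t}\|\|a_va_u\psi_{N,t}\|$, group the coherent-state moduli as $|f^\hbar_{q_1,p_1}(w)f^\hbar_{q_2,p_2}(y)|\cdot|f^\hbar_{q_1,p_1}(u)f^\hbar_{q_2,p_2}(v)|$, and apply Cauchy-Schwarz separately in the $(w,y)$ and $(u,v)$ variables. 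After averaging in $q_1,q_2$ the resulting local $L^2$-norms $\iint_{|w-q_1|,|y-q_2|\leq\sqrt{\hbar}R}\|a_ya_w\psi_{N,t}\|^2\,\mathrm{d}w\,\mathrm{d}y$ are controlled by the localized number-operator inequality Lemma \ref{N_hbar} together with the kinetic-energy estimate of Lemma \ref{kinetic_finite}.

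The main obstacle is the unrestricted momentum integration: $|f^\hbar_{q,p}|$ carries no pointwise decay in $p$, so a naive absolute-value estimate in $p_1$ or $p_2$ diverges. I resolve this by invoking the oscillation estimate Lemma \ref{estimate_oscillation} applied to the phase $e^{\frac{\mathrm{i}}{\hbar}p_1(w-u)}$ of $f^\hbar_{q_1,p_1}(w)\overline{f^\hbar_{q_1,p_1}(u)}$, with $\varphi:=\nabla_{p_1}\Phi(q_1,\cdot)$ supplying the required compactly supported test function. On the region $w-u\in\Omega_\hbar$ (of volume $\sim\hbar^{3\alpha}$ in the $w-u$ direction) I use the crude volume bound combined with the $\sqrt{\hbar}$ Taylor factor; on its complement, Lemma \ref{estimate_oscillation} yields the decay $\hbar^{(1-\alpha)s}$ at the cost of $\|D^s\Phi\|_\infty$. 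A symmetric treatment of the $p_2$-integration through the phase $e^{\frac{\mathrm{i}}{\hbar}p_2(y-v)}$ of $f^\hbar_{q_2,p_2}(y)\overline{f^\hbar_{q_2,p_2}(v)}$ (after insertion of a smooth momentum cutoff in $p_2$) completes the argument. Optimizing $\alpha\in(0,1)$ and $s=s(\delta)\in\mathbb{N}$ delivers the announced bound $C\|D^{s(\delta)}\Phi\|_\infty\hbar^{1/2-\delta}$ for arbitrarily small $\delta>0$.
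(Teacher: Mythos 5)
Your overall strategy is the paper's: identify $\widetilde{\mathcal{R}}_1$ as a single integral against the difference $\mathcal{E}=\int_0^1\mathrm{d}s\,\nabla V(su+(1-s)w-y)-\nabla V(q_1-q_2)$, extract a factor $\sqrt{\hbar}$ from the Lipschitz continuity of $\nabla V$ together with the $\sqrt{\hbar}$-localization of the coherent states, control the matrix elements by Cauchy--Schwarz and the localized number operator of Lemma \ref{N_hbar}, and tame the unbounded $p_1$-integration by splitting $w-u$ into $\Omega_\hbar$ and its complement and invoking Lemma \ref{estimate_oscillation} with $\varphi=\nabla_{p_1}\Phi(q_1,\cdot)$, finally optimizing over $\alpha$ and $s$. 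All of this matches the paper's proof (which, rather than bounding $\mathcal{E}$ in one step, inserts the intermediate term $\nabla V(q_1-y)$ and treats the two resulting differences $I_3$ and $\mathit{II}_3$ separately, but this is cosmetic).

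The one genuine problem is your treatment of the $(p_2,v)$-integration. You propose a ``symmetric treatment of the $p_2$-integration through the phase $e^{\frac{\mathrm{i}}{\hbar}p_2\cdot(y-v)}$ \dots after insertion of a smooth momentum cutoff in $p_2$.'' This cannot work as stated: unlike the $p_1$-integral, the $p_2$-integral carries no test function (the test function $\Phi$ depends only on $(q_1,p_1)$), so Lemma \ref{estimate_oscillation} has nothing compactly supported to act on; and since $|f^\hbar_{q_2,p_2}(y)\overline{f^\hbar_{q_2,p_2}(v)}|$ is independent of $p_2$, the part of the $p_2$-integral outside any cutoff is not absolutely convergent and the cutoff error cannot be controlled. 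The correct step --- and the one the paper takes --- is the exact $\hbar$-weighted Fourier inversion \eqref{hbar_fourier}: $\frac{1}{(2\pi\hbar)^3}\int\mathrm{d}p_2\,e^{\frac{\mathrm{i}}{\hbar}p_2\cdot(y-v)}=\delta_y(v)$, which collapses $v=y$ identically, replaces the $(y,v)$ coherent-state pair by $\hbar^{-3/2}\big|f\big(\tfrac{y-q_2}{\sqrt{\hbar}}\big)\big|^2$, and turns the matrix element into $\langle a_wa_y\psi_{N,t},a_ua_y\psi_{N,t}\rangle$; no oscillation estimate and no cutoff are needed in $p_2$. After this exact reduction (and the change of variables $\sqrt{\hbar}\widetilde{q}_2=y-q_2$, which is where the second Taylor increment $|\sqrt{\hbar}\widetilde{q}_2|$ comes from in the paper's term $\mathit{II}_3$), the rest of your argument goes through as you describe.
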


\begin{Proposition} \label{mk_BBFKY_to_infty}
	Suppose that Assumption \ref{assume_f_compact} and \ref{assume_V} hold. Denote the remainders terms $\widetilde{\mathcal{R}}_k$ and $\widehat{\mathcal{R}}_k$ as in \eqref{bbgky_remainder_k}. Then for  $1 \leq k \leq N$ and arbitrary small $\delta >0$, the following estimates hold for any test function $\Phi\in C^\infty_0(\R^{6k})$,
	\begin{equation}\label{mk_BBFKY_to_infty_1}
	\left| \dotsint  (\mathrm{d}q \mathrm{d}p)^{\otimes k} \Phi (q_1, p_1,\dots,q_k,p_k) \cdot \widehat{\mathcal{R}}_k \right|\leq C \hbar^{3-\delta} ,
	\end{equation}
	and
	\begin{equation}\label{mk_BBFKY_to_infty_2}
	\left| \dotsint  (\mathrm{d}q \mathrm{d}p)^{\otimes k} \Phi (q_1, p_1,\dots,q_k,p_k) \nabla_{\vec{p}_k} \cdot \widetilde{\mathcal{R}}_k \right|\leq C \hbar^{\frac{1}{2}-\delta},
	\end{equation}
	where $C$ depends on $\|D^{s(\delta)}\Phi\|_{\infty}$ and $k$.
\end{Proposition}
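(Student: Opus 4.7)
The plan is to treat the two estimates separately, both relying on the compact support of $f$ (which localizes the coherent states to width $\sqrt{\hbar}$) together with $V\in W^{2,\infty}$ (which allows a second-order Taylor expansion of the potential). Throughout, Lemmas~\ref{estimate_NumOpk} and~\ref{N_hbar} will provide the bounds on the quantum factors, while Lemma~\ref{prop_kHusimi} gives $\|m^{(k)}_{N,t}\|_\infty\leq 1$ and finite total mass; Lemma~\ref{estimate_oscillation} may be invoked to localize the $p_j$-integrations against the oscillatory phase $e^{\mathrm{i}p_j\cdot(w_j-u_j)/\hbar}$ produced by the coherent states.

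For the bound on $\widehat{\mathcal R}_k$, the key observation is that $V(u_j-u_i) - V(w_j-w_i)$ vanishes at $u = w$, and on the support of the coherent state products one has $|u_j - q_j|,|w_j - q_j|\leq\sqrt{\hbar}R$. I would Taylor expand this difference around $q_j - q_i$ into a linear piece $[(u_j - w_j) - (u_i - w_i)]\cdot\nabla V(q_j - q_i)$ plus an $O(\hbar\|D^2V\|_\infty)$ remainder. The remainder, multiplied by the prefactor $\hbar^2$, already produces $\hbar^3$ times a bounded integral against $m^{(k)}_{N,t}$. For the linear piece I would integrate by parts in $p_j$: since $(u_j - w_j)\,e^{\mathrm{i}p_j\cdot(w_j - u_j)/\hbar} = \mathrm{i}\hbar\nabla_{p_j}e^{\mathrm{i}p_j\cdot(w_j - u_j)/\hbar}$, the IBP transfers the factor $(u_j - w_j)$ onto $\Phi$, producing another $\hbar^3$ contribution after the residual integrals are controlled by Lemmas~\ref{estimate_NumOpk} and~\ref{N_hbar}. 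It is essential that this IBP be performed \emph{before} any Cauchy--Schwarz step, otherwise the extra $\hbar$-factor is lost.

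For $\widetilde{\mathcal R}_k$, the strategy is to combine both pieces of $(\widetilde{\mathcal R}_k)_j$ into a single, manifestly small expression. In the first piece I would carry out the $\tilde p$-integration using the identity $\int\mathrm{d}\tilde p\,f^\hbar_{\tilde q,\tilde p}(y)\overline{f^\hbar_{\tilde q,\tilde p}(v)} \propto \delta(y-v)$, which collapses $v = y$; in parallel, I would unfold $m^{(k+1)}_{N,t}$ in the subtracted piece via its coherent-state definition and perform the analogous $p_{k+1}$-integration. Both pieces then take the common form $\hbar^3\int\mathrm{d}y\int\mathrm{d}\tilde q\,\cdots$ with the quantum factor $\langle a_{w_k}\cdots a_{w_1}a_y\psi_{N,t}, a_{u_k}\cdots a_{u_1}a_y\psi_{N,t}\rangle$, and the only difference lies in the argument of $\nabla V$. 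Since all arguments coincide with $q_j - y$ up to $O(\sqrt{\hbar})$ on the coherent support, $V\in W^{2,\infty}$ bounds this difference by $C\sqrt{\hbar}\|D^2V\|_\infty$. Combined with Cauchy--Schwarz, the identity $\int\mathrm{d}y\,a^*_y a_y = \mathcal{N}$, and Lemmas~\ref{estimate_NumOpk}--\ref{N_hbar}, this should yield the claimed $\hbar^{1/2-\delta}$ estimate.

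The main obstacle I anticipate is making the final Cauchy--Schwarz sharp enough to preserve the decay rate uniformly in $k$: a naive bound $|\langle a_w\cdots a_y\psi, a_u\cdots a_y\psi\rangle|\leq\|a_w\cdots a_y\psi\|\|a_u\cdots a_y\psi\|$ can lose powers of $\hbar^{-3k}$ that overwhelm the Taylor gain. The tight bound should exploit that the coherent-state weights essentially reproduce the Husimi measure (pointwise bounded by $1$ via Lemma~\ref{prop_kHusimi}) and combine this with Lemma~\ref{N_hbar} to localize the number operator to $\hbar^{-3k/2}$ on compact supports; the $\hbar^{-\delta}$ slack in the statement is precisely there to absorb such mild $k$-dependent losses, potentially together with an application of Lemma~\ref{estimate_oscillation} to suppress residual oscillatory phases in the regime $|w_j - u_j|\geq \hbar^\alpha$.
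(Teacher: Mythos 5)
Your treatment of $\widetilde{\mathcal{R}}_k$ in \eqref{mk_BBFKY_to_infty_2} is essentially the paper's proof: collapse $v=y$ via the $\hbar$-weighted delta identity \eqref{hbar_fourier}, write the subtracted $m^{(k+1)}_{N,t}$-term in the same coherent-state form, insert the intermediate term $\nabla V(q_j-y)$, use the Lipschitz bound on $\nabla V$ to gain $O(\sqrt{\hbar})$ from $|su_j+(1-s)w_j-q_j|$ and from $|y-q_{k+1}|=\sqrt{\hbar}|\widetilde{q}_{k+1}|$, and then control the residual integrals with the $\Omega_\hbar$/$\Omega_\hbar^c$ splitting, Lemma \ref{estimate_oscillation}, H\"older, and Lemma \ref{N_hbar} exactly as in the proofs of Propositions \ref{LemRk} and \ref{lemma_vla_k1_convg}. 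You also correctly identify the main pitfall (a naive Cauchy--Schwarz losing $\hbar^{-3k/2}$ per localized number operator) and its remedy.

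For $\widehat{\mathcal{R}}_k$ in \eqref{mk_BBFKY_to_infty_1} you take a genuinely different route. The paper extracts no smallness whatsoever from $V(u_j-u_i)-V(w_j-w_i)$: it bounds this difference crudely by $2\norm{V}_\infty$ and lets the mean-field prefactor (written as $\tfrac{1}{2N}=\tfrac{\hbar^3}{2}$ in the proof) carry the entire $\hbar^3$, with the oscillation/localization machinery only serving to keep the residual integrals $O(\hbar^{-\delta})$. You instead Taylor-expand the difference to second order around $q_j-q_i$, observe that the quadratic remainder is $O(\hbar)$ on the coherent support, and convert the linear piece $[(u_j-w_j)-(u_i-w_i)]\cdot\nabla V(q_j-q_i)$ into an extra factor of $\hbar$ by integrating by parts in $p_j$ against the phase $e^{\mathrm{i}p_j\cdot(w_j-u_j)/\hbar}$ before any Cauchy--Schwarz. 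This is correct and in fact buys you a full additional power of $\hbar$ beyond what the crude $\norm{V}_\infty$ bound yields, which makes your estimate insensitive to whether the prefactor of $\widehat{\mathcal{R}}_k$ is read as $\hbar^2/2$ (as in \eqref{bbgky_remainder_k}) or $1/(2N)$ (as in the paper's proof); the price is that you genuinely use $V\in W^{2,\infty}$ for this term, whereas the paper's argument only needs $V$ bounded here. Do note that after the Taylor step you still cannot avoid the $\Omega_\hbar$/$\Omega_\hbar^c$ decomposition and Lemmas \ref{estimate_oscillation}, \ref{N_hbar}: taking absolute values inside the $(\mathrm{d}w\,\mathrm{d}u)^{\otimes k}$ integrals destroys the identification with $m^{(k)}_{N,t}$, so the "bounded integral against $m^{(k)}_{N,t}$" you invoke for the quadratic remainder must in practice be replaced by the same H\"older--plus--localized-number-operator bookkeeping as in Proposition \ref{LemRk}; your final paragraph shows you are aware of this, so I regard the plan as sound.
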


\subsection{Convergence to infinite hierarchy}

In this subsection, we prove that the $k$-particle Husimi measure $m_{N,t}^{(k)}$ has subsequence that converges weakly (as $N\to \infty$) to a limit $m_{t}^{(k)}$ in $L^1$, which is a solution of the infinite hierarchy in the sense of distribution.

The weak compactness of $k$-particle Husimi measure $m_{N,t}^{(k)}$ can be proved by the use of Dunford-Pettis theorem.\footnote{See \cite{diestel_uniinter} for the treatment of uniform integerability.} In particular, we have the following result.
\begin{Proposition}
	\label{propweaklimit}
	Let $\{m_{N,t}^{(k)}\}_{N\in \N}$ be the $k$-particle Husimi measure, then there exists a subsequence $\{m_{N_j,t}^{(k)}\}_{j\in \N}$ that converges weakly in $L^1(\R^{6k})$ to a function $ (2\pi)^{3k}m_{t}^{(k)}$, i.e. for all $ \varphi\in L^\infty (\R^{6k})$, it holds
	\[
	\frac{1}{(2\pi)^{3k}}\dotsint (\mathrm{d}q\mathrm{d}p)^{\otimes k}  m_{N_j,t}^{(k)} \varphi \rightarrow \dotsint (\mathrm{d}q\mathrm{d}p)^{\otimes k}  m_{t}^{(k)} \varphi,
	\]
	when $j \to \infty$ for arbitrary fixed $k\geq 1$.
\end{Proposition}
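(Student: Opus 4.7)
The plan is to invoke the Dunford--Pettis theorem, which characterizes relatively weakly compact subsets of $L^1(\R^{6k})$ as those that are (i) bounded in $L^1$, (ii) equi-integrable (no concentration on small sets), and (iii) tight (no mass escape to infinity). All three inputs are already assembled in the preceding subsection, so the proof reduces to collecting them and checking each hypothesis for the fixed index $k$; a standard diagonal extraction could upgrade this to simultaneous convergence across $k$ if needed later.

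First I would record the uniform $L^1$ bound. By property 2 of Lemma \ref{prop_kHusimi} (extended to positive times via Remark \ref{prop_kHusimi_t}),
\[
\|m_{N,t}^{(k)}\|_{L^1(\R^{6k})} = (2\pi)^{3k}\,\frac{N(N-1)\cdots(N-k+1)}{N^k}\leq (2\pi)^{3k},
\]
uniformly in $N$ and $t$. Next, for equi-integrability I would use the a.e.\ pointwise bound $0\leq m_{N,t}^{(k)}\leq 1$ supplied by \eqref{Linftykestimate}: for any measurable $A\subset\R^{6k}$ one has $\int_A m_{N,t}^{(k)}\leq |A|$, so given $\epsilon>0$ the choice $\delta=\epsilon$ verifies the uniform no-concentration condition independent of $N$ and $t$.

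The tightness condition is the only one requiring a nontrivial estimate, and it is furnished by the moment bound of Proposition \ref{2nd_moment_finite}. Setting
\[
E_R := \bigl\{(q_1,p_1,\dots,q_k,p_k)\in\R^{6k} : |\vec{q}_k|+|\vec{p}_k|^2\leq R\bigr\},
\]
a Chebyshev-type argument gives
\[
\int_{E_R^c} m_{N,t}^{(k)}\,(\mathrm{d}q\mathrm{d}p)^{\otimes k}\;\leq\;\frac{1}{R}\dotsint\bigl(|\vec{q}_k|+|\vec{p}_k|^2\bigr)\,m_{N,t}^{(k)}\,(\mathrm{d}q\mathrm{d}p)^{\otimes k}\;\leq\;\frac{C(1+t^3)}{R},
\]
which can be made uniformly small in $N$ by taking $R$ large.

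With (i)--(iii) in hand, Dunford--Pettis produces a subsequence $\{m_{N_j,t}^{(k)}\}_{j\in\N}$ and a function in $L^1(\R^{6k})$ that we denote by $(2\pi)^{3k}m_t^{(k)}$ (the prefactor is chosen to normalize by the constant $(2\pi)^{-3k}$ appearing in the statement) such that for every test function $\varphi\in L^\infty(\R^{6k})$,
\[
\frac{1}{(2\pi)^{3k}}\dotsint (\mathrm{d}q\mathrm{d}p)^{\otimes k}\,m_{N_j,t}^{(k)}\,\varphi\;\longrightarrow\;\dotsint (\mathrm{d}q\mathrm{d}p)^{\otimes k}\,m_{t}^{(k)}\,\varphi,
\]
as $j\to\infty$. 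There is essentially no technical obstacle here: the real work was done in establishing the $L^\infty$-bound on $m_{N,t}^{(k)}$ (which trivializes equi-integrability) and the moment propagation of Proposition \ref{2nd_moment_finite} (which supplies tightness); once these are in place, Dunford--Pettis applies almost mechanically.
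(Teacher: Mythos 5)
Your proposal is correct and follows essentially the same route as the paper: both verify the Dunford--Pettis hypotheses using the uniform $L^1$ bound from Lemma \ref{prop_kHusimi}, the pointwise bound $0\leq m_{N,t}^{(k)}\leq 1$ for no-concentration, and the moment estimate of Proposition \ref{2nd_moment_finite} for tightness via a Chebyshev argument. The only cosmetic difference is that the paper phrases tightness with the weight $|\vec{q}_k|+|\vec{p}_k|$ while you use $|\vec{q}_k|+|\vec{p}_k|^2$, both of which are controlled by the same moment bound.
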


\begin{proof}
	To apply Dunford-Pettis theorem, we need to check that it is uniformly integrable and bounded. 
	From the previous uniform estimates that we have obtained for $m_{N,t}^{(k)}$ from \eqref{Linftykestimate} and its second finite moment in Proposition \ref{2nd_moment_finite} imply
	\begin{align*}
	\norm{m_{N,t}^{(k)}}_{L^\infty}\leq 1,\quad  \norm{(|\vec{q}_k|+|\vec{p}_k|)m_{N,t}^{(k)}}_{L^1} \leq C(t).
	\end{align*}
	where $\vec{q}_k := (q_1,\dots,q_k)$, $\vec{p}_k := (p_1,\dots,p_k)$ and $C(t)$ is a time-dependent constant, we can check the uniform integrability. More precisely, for any $\varepsilon > 0$, by taking $r = {\varepsilon}^{-1}{(2\pi)^{3k}C(t)}$ we have that
	\begin{equation}
	\frac{1}{(2\pi)^{3k}}\dotsint_{|\vec{q}_k|+|\vec{p}_k| \geq r} (\mathrm{d}q\mathrm{d}p)^{\otimes k}  m_{N,t}^{(k)} \leq \frac{1}{r}  \frac{1}{(2\pi)^{3k}}\dotsint (\mathrm{d}q\mathrm{d}p)^{\otimes k} \left( |\vec{q}_k|+|\vec{p}_k| \right) m_{N,t}^{(k)} \leq \varepsilon.
	\end{equation}
	Furthermore, for arbitrary $\varepsilon >0 $, by taking $\delta = \varepsilon$, we have that for all $E\subset\R^{6k}$ with $\text{Vol}(E) \leq \delta$, it holds
	\[
	\dotsint_E m^{(k)}_{N,t} \leq \norm{m^{(k)}_{N,t}}_\infty \text{Vol}(E) \leq \varepsilon, 
	\]
	which means that there is no concentration for the $k$-particle Husimi measure.

	It is shown in \eqref{MNkL1} that the boundedness of $k$-particle Husimi measure in $L^1$, i.e.
	\[
	\norm{m_{N,t}^{(k)}}_{L^1} \leq (2\pi)^{3k}.
	\]
	Then applying directly Dunford-Pettis Theorem one obtain that $k$-particle Husimi measure is weakly compact in $L^1$. 
\end{proof}

\begin{proof}[\bf Proof of Theorem \ref{thmmain} and Corollary \ref{cor:1-Wasserstein}]
	Cantor's diagonal procedure shows that we can take the same convergent subsequence of $m_{N,t}^{(k)}$ for all $k\geq 1$. Then by the error estimates obtained in Propositions \ref{LemRk}, \ref{lemma_vla_k1_convg}, 
	and \ref{mk_BBFKY_to_infty}, we can obtain that the limit satisfies the infinite hierarchy \eqref{infinitehierarchy} in the sense of distribution, by directly taking the limit in the weak formulation of \eqref{BBGKY_k1} and \eqref{BBGKY_k}.
	
	Observe that the estimates for the remainder terms also show that any convergent subsequence of $m_{N,t}^{(k)}$ converges weakly in $L^1$ to the solution of the infinite hierarchy. Therefore, if furthermore, the infinite hierarchy has a unique solution, then the sequence $m_{N,t}^{(k)}$ itself { converges} weakly to the solution of the infinite hierarchy.
	
	As for Corollary \ref{cor:1-Wasserstein}, one only need to combine the facts that the infinite hierarchy has a unique solution and that the tensor products of the solution of the Vlasov equation \eqref{classical_Vlasov}, $m_t^{\otimes k}$ is a solution of the infinite hierarchy.
	
	{Lastly, by Theorem 7.12 in \cite{Villani2003}, we would obtain the convergence in $1$-Wasserstein metric.}
\end{proof}

\section{Completion of the reformulation and estimates in the proof}\label{section_proof}

\subsection{Proof of the reformulation in section \ref{sec2.1}}\label{proof_of_reformulation}
In this subsection we supply the proofs for the reformulation of Schr\"odinger equation into a hierarchy of $k$ $(1\leq k\leq N)$ particle Husimi measure. The reformulation shares similar structure to the classical BBGKY hierarchy.

\begin{proof}[Proof of Proposition \ref{lemma_vla_k1}]
	First, observe that taking the time derivative on the Husimi measure, we have
	\begin{align*}
	&2\mathrm{i}\hbar \p_t m^{(1)}_{N,t}(q_1,p_1) \\
	= &\bigg(\hbar^2 \iiint \mathrm{d}w\mathrm{d}u\mathrm{d}x\ f_{q_1,p_1}^\hbar (w) \overline{f_{q_1,p_1}^\hbar (u)} \left<\psi_{N,t}, a^*_w a_u \nabla_x a_x^* \nabla_x a_x \psi_{N,t} \right> \\
	& - \hbar^2  \iiint \mathrm{d}w\mathrm{d}u\mathrm{d}x\ \overline{f_{q_1,p_1}^\hbar (w)}f_{q_1,p_1}^\hbar (u)\left< \psi_{N,t}, \nabla_x a_x^* \nabla_x a_x  a^*_u a_w \psi_{N,t} \right> \bigg)\\
	&+ \bigg(  \frac{1}{N}  \iint \mathrm{d}w\mathrm{d}u\iint \mathrm{d}x\mathrm{d}y\  f_{q_1,p_1}^\hbar (w) \overline{f_{q_1,p_1}^\hbar (u)} \left< \psi_{N,t},  V(x-y) a^*_w a_u a^*_x a^*_y a_y a_x \psi_{N,t} \right>\\
	& -  \frac{1}{N}   \iint \mathrm{d}w\mathrm{d}u\iint \mathrm{d}x\mathrm{d}y\  \overline{f_{q_1,p_1}^\hbar (w)} f_{q_1,p_1}^\hbar (u) \left< \psi_{N,t},  V(x-y) a^*_x a^*_y a_y a_x a^*_u a_w \psi_{N,t} \right> \bigg)\\
	= &:  I_1 + \mathit{II}_1.
	\end{align*}
	Now, focus on $I_1$, we have
	\begin{align*}
	I_1 = & \hbar^2 \iiint \mathrm{d}w\mathrm{d}u\mathrm{d}x\ f_{q_1,p_1}^\hbar (w) \overline{f_{q_1,p_1}^\hbar (u)} \left< \psi_{N,t}, a^*_w a_u \nabla_x a_x^* \nabla_x a_x \psi_{N,t} \right> \\
	& - \hbar^2 \iiint \mathrm{d}w\mathrm{d}u\mathrm{d}x\ f_{q_1,p_1}^\hbar (w) \overline{f_{q_1,p_1}^\hbar (u)}\left<\psi_{N,t}, \nabla_x a_x^* \nabla_x a_x  a^*_w a_u \psi_{N,t} \right>,
	\end{align*}
	where the last equality is just change of variable on the complex conjugate term. Then, from CAR, observe we have that
	\begin{align*}
	- a^*_w a_u a^*_x \Delta_x a_x = & a^*_w a^*_x a_u \Delta_x a_x - \delta_{u=x}a^*_w \Delta_x a_x \\
	= & a^*_x a^*_w \Delta_x a_x a_u -  \delta_{u=x}a^*_w \Delta_x a_x \\
	= & \Delta_x a_x^* a_w^* a_x a_u -  \delta_{u=x}a^*_w \Delta_x a_x \\
	= & - \Delta_x a_x^* a_x a^*_w a_u + \delta_{w=x} \Delta_x a_x^* a_u -  \delta_{u=x}a^*_w \Delta_x a_x,
	\end{align*}
	where integration by parts and CAR of the operator have been used several times. Putting this back, we cancel out the the second term and get
	\begin{equation}\label{Fock_Kinectic_Schro_2}
	\begin{aligned}
	I_1 = & \hbar^2 \iiint \mathrm{d}w\mathrm{d}u\mathrm{d}x\ f_{q_1,p_1}^\hbar (w) \overline{f_{q_1,p_1}^\hbar (u)} \left< \psi_{N,t},  \big( \delta_{w=x} \Delta_x a_x^* a_u -  \delta_{u=x}a^*_w \Delta_x a_x \big) \psi_{N,t} \right> \\
	= &  \hbar^2 \iint \mathrm{d}w\mathrm{d}u\ \bigg( \Delta_w f_{q_1,p_1}^\hbar (w) \bigg) \overline{f_{q_1,p_1}^\hbar (u)} \left< \psi_{N,t},a_w^* a_u \psi_{N,t} \right> \\
	& -  \hbar^2 \iint \mathrm{d}w\mathrm{d}u\  f_{q_1,p_1}^\hbar (w)  \bigg( \Delta_u \overline{f_{q_1,p_1}^\hbar (u)} \bigg) \left<\psi_{N,t} ,a_w^* a_u \psi_{N,t} \right>.
	\end{aligned}
	\end{equation}
	Now, observe the following
	\begin{align*}
	\nabla_u \overline{f^\hbar_{q_1,p_1} (u)} = & \nabla_u \left( \hbar^{-\frac{3}{4}} f \left( \frac{u-q_1}{\sqrt{\hbar}}\right) e^{-\frac{\mathrm{i}}{\hbar} p_1 \cdot u }  \right)\\
	= & \hbar^{-\frac{3}{4}} \nabla_u f \left( \frac{u-q_1}{\sqrt{\hbar}}\right) e^{-\frac{\mathrm{i}}{\hbar} p_1 \cdot u } +  \hbar^{-\frac{3}{4}} f \left( \frac{u-q_1}{\sqrt{\hbar}}\right) \nabla_u e^{-\frac{\mathrm{i}}{\hbar} p_1 \cdot u } \\
	= & - \hbar^{-\frac{3}{4}} \nabla_{q_1} f \left( \frac{u-q_1}{\sqrt{\hbar}}\right) e^{-\frac{\mathrm{i}}{\hbar} p_1 \cdot u } -\mathrm{i} \hbar^{-1} p_1 \cdot \hbar^{-\frac{3}{4}} f \left( \frac{u-q_1}{\sqrt{\hbar}}\right) e^{-\frac{\mathrm{i}}{\hbar} p_1 \cdot u }\\
	= & (- \nabla_{q_1} - \mathrm{i} \hbar^{-1} p_1)  \overline{f^\hbar_{q_1,p_1} (u)},
	\end{align*}
	and furthermore,
	\begin{equation}\label{Fock_Kinectic_Schro_lapcoherent1}
	\begin{aligned}
	\Delta_u  \overline{f^\hbar_{q_1,p_1} (u)} = & \nabla_u \cdot \nabla_u  \overline{f^\hbar_{q_1,p_1} (u)} \\
	=&\nabla_u \cdot (- \nabla_{q_1} - \mathrm{i} \hbar^{-1} p_1)  \overline{f^\hbar_{q_1,p_1} (u)} \\
	=& (- \nabla_{q_1} - \mathrm{i} \hbar^{-1} p_1)\cdot (- \nabla_{q_1} - \mathrm{i} \hbar^{-1} p_1)  \overline{f^\hbar_{q_1,p_1} (u)}\\
	= & \bigg( \Delta_{q_1} + 2 \mathrm{i} \hbar^{-1} p_1 \cdot \nabla_{q_1}  - \hbar^{-2} p_1^2\bigg) \overline{f^\hbar_{q_1,p_1} (u)}.
	\end{aligned}
	\end{equation}
	and similarly
	\begin{equation}\label{Fock_Kinectic_Schro_lapcoherent2}
	\begin{aligned}
	\Delta_w f_{q_1,p_1}^\hbar (w) = \bigg( \Delta_{q_1} - 2 \mathrm{i} \hbar^{-1} p_1 \cdot \nabla_{q_1}  - \hbar^{-2} p_1^2\bigg) f_{q_1,p_1}^\hbar (w),
	\end{aligned}
	\end{equation}
	we obtain by putting these back into \eqref{Fock_Kinectic_Schro_2},
	\begin{equation}\label{Fock_Kinectic_Schro_3}
	\begin{aligned}
	I_1 = & \hbar^2\bigg[ \left< \Delta_{q_1} \int \mathrm{d}w\ \overline{f^\hbar_{q_1,p_1} (w)} a_w \psi_{N,t}, \int \mathrm{d}u\ \overline{f^\hbar_{q_1,p_1} (u)} a_u \psi_{N,t} \right>\\
	& - \left< \int \mathrm{d}w\ \overline{f^\hbar_{q_1,p_1} (w)} a_w \psi_{N,t} ,\Delta_{q_1} \int \mathrm{d}u\ \overline{f^\hbar_{q_1,p_1} (u)}a_u \psi_{N,t} \right> \bigg]\\
	& -  2\mathrm{i} \hbar p_1 \cdot \bigg[ \left< \nabla_{q_1} \int \mathrm{d}w\ \overline{f^\hbar_{q_1,p_1} (w)} a_w \psi_{N,t}, \int \mathrm{d}u\ \overline{f^\hbar_{q_1,p_1} (u)}a_u \psi_{N,t} \right>\\
	& + \left< \int \mathrm{d}w\ \overline{f^\hbar_{q_1,p_1} (w)} a_w \psi_{N,t},\nabla_{q_1} \int \mathrm{d}u\ \overline{f^\hbar_{q_1,p_1} (u)}a_u \psi_{N,t} \right> \bigg]\\
	= &  2 \mathrm{i} \hbar^2 \Im \left< \Delta_{q_1} a(f^\hbar_{q_1,p_1}) \psi_{N,t}, a(f^\hbar_{q_1,p_1}) \psi_{N,t} \right> - 2\mathrm{i} \hbar p_1 \cdot \nabla_{q_1} m_{N,t}^{(1)}(q_1,p_1).
	\end{aligned}
	\end{equation}
	Since the Husimi measure is actually a real-valued function, we have that
	\begin{equation}\label{Fock_Kinectic_Schro_4}
	\p_t m^{(1)}_{N,t}(q_1,p_1) + p_1 \cdot \nabla_{q_1} m^{(1)}_{N,t}(q_1,p_1)= \text{Re} \left( \frac{\mathit{II}_1}{2\mathrm{i}\hbar}\right) +  \hbar \Im \left< \Delta_{q_1} a(f^\hbar_{q_1,p_1}) \psi_{N,t}, a(f^\hbar_{q_1,p_1}) \psi_{N,t} \right>.
	\end{equation}
	
	Now, we turn our focus on $\mathit{II}_1$, i.e.,
	\begin{align*}
	\mathit{II}_1 & = \frac{1}{N} \iint \mathrm{d}w\mathrm{d}u\iint \mathrm{d}x\mathrm{d}y\  f_{q_1,p_1}^\hbar (w) \overline{f_{q_1,p_1}^\hbar (u)} \left< \psi_{N,t},  V(x-y) a^*_w a_u a^*_x a^*_y a_y a_x \psi_{N,t} \right>\\
	& -  \frac{1}{N} \iint \mathrm{d}w\mathrm{d}u\iint \mathrm{d}x\mathrm{d}y\  \overline{f_{q_1,p_1}^\hbar (w)} f_{q_1,p_1}^\hbar (u) \left< \psi_{N,t},  V(x-y) a^*_x a^*_y a_y a_x a^*_u a_w \psi_{N,t} \right>.
	\end{align*}
	Observe that
	\begin{align*}
	a^*_w a_u a^*_x a^*_y a_y a_x = & a^*_x a^*_y a_y a_x a^*_w a_u\\
	& + \delta_{w=y} a^*_x a^*_y a_x a_u - \delta_{w= x} a^*_x a^*_y a_y a_u\\
	&+ \delta_{u=x}a^*_w a^*_y a_y a_x - \delta_{u = y} a^*_w a^*_x a_y a_x.
	\end{align*}
	The first term and the complex conjugate term vanishes under changes of variable, $u$ to $w$ and $w$ to $u$. Therefore, since from assumption $V(x) = V(-x)$, we have
	\begin{equation}\label{Fock_Potential_V_1}
	\begin{aligned}
	\mathit{II}_1 =&  \frac{1}{N} \iiint \mathrm{d}w\mathrm{d}u\mathrm{d}x\  f_{q_1,p_1}^\hbar (w) \overline{f_{q_1,p_1}^\hbar (u)}\left< \psi_{N,t},  V(x-w)  a^*_x a^*_w a_x a_u \psi_{N,t} \right> \\
	& -  \frac{1}{N} \iiint \mathrm{d}w\mathrm{d}u\mathrm{d}x\  f_{q_1,p_1}^\hbar (w) \overline{f_{q_1,p_1}^\hbar (u)} \left< \psi_{N,t},  V(x-u)  a^*_w a^*_x a_u a_x \psi_{N,t} \right>\\
	& +  \frac{1}{N}\iiint \mathrm{d}w\mathrm{d}u\mathrm{d}y\ f_{q_1,p_1}^\hbar (w) \overline{f_{q_1,p_1}^\hbar (u)}\left< \psi_{N,t},  V(u-y)  a^*_w a^*_y a_y a_u \psi_{N,t} \right>\\
	& -   \frac{1}{N}\iiint \mathrm{d}w\mathrm{d}u\mathrm{d}y\ f_{q_1,p_1}^\hbar (w) \overline{f_{q_1,p_1}^\hbar (u)} \left< \psi_{N,t},  V(w-y)   a^*_w a^*_y a_y a_u \psi_{N,t} \right>\\
	= &  \frac{1}{N} \iiint \mathrm{d}w\mathrm{d}u\mathrm{d}x\  f_{q_1,p_1}^\hbar (w) \overline{f_{q_1,p_1}^\hbar (u)} \bigg( V(u-x)- V(w-x) \bigg)\left< \psi_{N,t}, a^*_w a^*_x  a_x a_u \psi_{N,t} \right> \\
	& +  \frac{1}{N}\iiint \mathrm{d}w\mathrm{d}u\mathrm{d}y\ f_{q_1,p_1}^\hbar (w) \overline{f_{q_1,p_1}^\hbar (u)}\bigg( V(u-y)- V(w-y) \bigg)\left< \psi_{N,t},    a^*_w a^*_y a_y a_u\psi_{N,t} \right> \\
	= & \frac{2}{N} \iiint \mathrm{d}w\mathrm{d}u\mathrm{d}y\ f_{q_1,p_1}^\hbar (w) \overline{f_{q_1,p_1}^\hbar (u)}\bigg( V(u-y)- V(w-y) \bigg)\left< \psi_{N,t},    a^*_w a^*_y a_y a_u\psi_{N,t} \right>. 
	\end{aligned}
	\end{equation}
	Now, note that mean value theorem gives
	\begin{equation}\label{MeanValueTheorem_V}
	V(u-y)- V(w-y) = \int_0^1 \mathrm{d}s \nabla V\big(s(u-y)+(1-s)(w-y) \big)\cdot (u-w),
	\end{equation}
	and observe that since, $V\big(s(u-y)+(1-s)(w-y) \big) = V\big(su+(1-s)w - y \big)$, we can have from  \eqref{Fock_Potential_V_1} the following
	\begin{equation}\label{Fock_Potential_V_2}
	\begin{aligned}
	\mathit{II}_1 = & \frac{2}{N} \iiint \mathrm{d}w\mathrm{d}u\mathrm{d}y\ f_{q_1,p_1}^\hbar (w) \overline{f_{q_1,p_1}^\hbar (u)}\left( \int_0^1 \mathrm{d}s \nabla V\big(su+(1-s)w - y \big) \right)\cdot (u-w)\cdot\\
	&\hspace{1cm}\left< \psi_{N,t}, a^*_w a^*_y a_y a_u\psi_{N,t} \right> \\
	= & \frac{\mathrm{2 \mathrm{i}}\hbar}{N} \iiint \mathrm{d}w\mathrm{d}u\mathrm{d}y \int_0^1 \dd{s} \nabla V\big(su+(1-s)w - y \big) \cdot \nabla_{p_1} \left( f_{q_1,p_1}^\hbar (w) \overline{f_{q_1,p_1}^\hbar (u)} \right)\left< \psi_{N,t}, a^*_w a^*_y a_y a_u\psi_{N,t} \right> \\
	= & \frac{\mathrm{2 \mathrm{i}}\hbar}{N} \iiint \mathrm{d}w\mathrm{d}u\mathrm{d}y \int_0^1 \dd{s} \nabla V\big(su+(1-s)w - y \big) \cdot \nabla_{p_1} \left(f_{q_1,p_1}^\hbar (w) \overline{f_{q_1,p_1}^\hbar (u)}\right)\left<a_w a_y \psi_{N,t},  a_u a_y\psi_{N,t} \right>,
	\end{aligned}
	\end{equation}
	where we use the fact that
	\begin{equation}\label{grad_p_fhbar}
	\nabla_{p_1} \left(f_{q_1,p_1}^\hbar (w) \overline{f_{q_1,p_1}^\hbar (u)}\right) = \frac{ \mathrm{i}}{\hbar} (w-u)\cdot f_{q_1,p_1}^\hbar (w) \overline{f_{q_1,p_1}^\hbar (u)}.
	\end{equation}
	Then we get
	\begin{equation}\label{Fock_Potential_V_3}
	\mathit{II}_1  =  \frac{2 \mathrm{i}\hbar}{N} \iiint \mathrm{d}w\mathrm{d}u\mathrm{d}y \int_0^1 \dd{s} \nabla V\big(su+(1-s)w - y \big) \cdot \nabla_{p_1} \left( f_{q_1,p_1}^\hbar (w) \overline{f_{q_1,p_1}^\hbar (u)} \right)\left<a_w a_y \psi_{N,t},  a_u a_y\psi_{N,t} \right>.
	\end{equation}
	Applying the following projection
	\begin{equation} \label{coherent_projection}
	\frac{1}{(2\pi \hbar)^3} \iint \mathrm{d}q_2 \mathrm{d}p_2 \ket{f^\hbar_{q_2,p_2}}\bra{f^\hbar_{q_2,p_2}} =  \mathds{1},
	\end{equation}
	onto $a_y \psi_{N,t}$, we get
	\begin{align*}
	a_y \psi_{N,t} = \frac{1}{(2\pi \hbar)^3} \iint \mathrm{d}q_2 \mathrm{d}p_2\ f^\hbar_{q_2,p_2}(y) \int \mathrm{d}v\ \overline{f^\hbar_{q_2,p_2}(v)} a_v \psi_{N,t} .
	\end{align*}
	Putting this back into \eqref{Fock_Potential_V_3}, we get the following
	\begin{equation}
	\begin{aligned}
	\mathit{II}_1 = & \frac{2 \mathrm{i}\hbar}{N} \frac{1}{(2\pi \hbar)^3} \iint \mathrm{d}w\mathrm{d}u \iint \mathrm{d}y\mathrm{d}v \iint \mathrm{d}q_2 \mathrm{d}p_2 \int_0^1 \mathrm{d}s\ \nabla V\big(su+(1-s)w - y \big) \\ 
	& \hspace{1cm}\cdot \nabla_{p_1} \left( f_{q_1,p_1}^\hbar (w) \overline{f_{q_1,p_1}^\hbar (u)} \right) f_{q_2,p_2}^\hbar (y) \overline{f_{q_2,p_2}^\hbar (v)} \left< a_w a_y \psi_{N,t},  a_u a_v \psi_{N,t} \right>.
	\end{aligned}
	\end{equation}
	Recall that $\hbar^3 = N^{-1}$, we have
	\begin{equation}\label{Fock_Potential_V_4}
	\begin{aligned}
	\mathit{II}_1 = & \frac{2 \mathrm{i}\hbar}{(2\pi)^3} \iint \mathrm{d}w\mathrm{d}u \iint \mathrm{d}y\mathrm{d}v \iint \mathrm{d}q_2 \mathrm{d}p_2 \int_0^1 \mathrm{d}s\ \nabla V\big(su+(1-s)w - y \big) \\ 
	& \hspace{1cm}\cdot \nabla_{p_1} \left( f_{q_1,p_1}^\hbar (w) \overline{f_{q_1,p_1}^\hbar (u)} \right) f_{q_2,p_2}^\hbar (y) \overline{f_{q_2,p_2}^\hbar (v)} \left< a_w a_y \psi_{N,t},  a_u a_v \psi_{N,t} \right>.
	\end{aligned}
	\end{equation}
	Therefore, we have the last term in \eqref{Fock_Kinectic_Schro_4} as
	\begin{align*}
	\Re \frac{\mathit{II}_1}{2i\hbar} =&   \frac{1}{(2\pi)^3} \Re \iint \mathrm{d}w\mathrm{d}u \iint \mathrm{d}y\mathrm{d}v \iint \mathrm{d}q_2 \mathrm{d}p_2 \int_0^1 \mathrm{d}s\ \nabla V\big(su+(1-s)w - y \big) \\ 
	& \hspace{1cm}\cdot \nabla_{p_1} \left( f_{q_1,p_1}^\hbar (w) \overline{f_{q_1,p_1}^\hbar (u)} \right) f_{q_2,p_2}^\hbar (y) \overline{f_{q_2,p_2}^\hbar (v)} \left< a_w a_y \psi_{N,t},  a_u a_v \psi_{N,t} \right>,
	\end{align*}
	thus we have derived the equation for $m_{N,t}^{(1)}(q_1,p_1)$.
\end{proof}

{We have proved the reformulation from Schr\"odinger equation into $1$-particle Husimi measure.
	We also observed that it contains a resemblance to the classical Vlasov equation. Next we want to prove the similar result for $2 \leq k \leq N$.}

\begin{proof}[Proof of Proposition \ref{lemma_vla_bbgky_hierarchy}]
	Now we focus on the case where $2\leq k \leq N$. As in the proof for the case of $k=1$, we first observe that for every $k \in \N$,
	\begin{align} 
	&2 \mathrm{i} \hbar \p_t m_{N,t}^{(k)}(q_1,p_1,\dots,q_k,p_k) \notag\\
	=& \bigg( -\hbar^2 \dotsint (\mathrm{d}w\mathrm{d}u)^{\otimes k} \int \mathrm{d}x \left( f_{q,p}^\hbar (w) \overline{f_{q,p}^\hbar (u)} \right)^{\otimes k} \Delta_x  \left< \psi_{N,t}, a^*_{w_1} \cdots a^*_{w_k} a_{u_k}\cdots a_{u_1} a^*_x a_x \psi_{N,t} \right> \notag\\
	&+ \hbar^2 \dotsint (\mathrm{d}w\mathrm{d}u)^{\otimes k} \int \mathrm{d}x \left( f_{q,p}^\hbar (w) \overline{f_{q,p}^\hbar (u)} \right)^{\otimes k} \Delta_x  \left< \psi_{N,t}, a^*_x a_x a^*_{w_1} \cdots a^*_{w_k} a_{u_k}\cdots a_{u_1} \psi_{N,t} \right> \bigg) \notag\\
	&+ \bigg( \frac{1}{N} \dotsint (\mathrm{d}w\mathrm{d}u)^{\otimes k} \iint \mathrm{d}x\mathrm{d}y V(x-y) \left( f_{q,p}^\hbar (w) \overline{f_{q,p}^\hbar (u)} \right)^{\otimes k}  \left< \psi_{N,t}, a^*_{w_1} \cdots a^*_{w_k} a_{u_k}\cdots a_{u_1} a^*_x a^*_y a_y a_x \psi_{N,t} \right> \notag\\
	&-  \frac{1}{N} \dotsint (\mathrm{d}w\mathrm{d}u)^{\otimes k} \iint \mathrm{d}x\mathrm{d}y V(x-y) \left( f_{q,p}^\hbar (w) \overline{f_{q,p}^\hbar (u)} \right)^{\otimes k}  \left< \psi_{N,t}, a^*_x a^*_y a_y a_x a^*_{w_1} \cdots a^*_{w_k} a_{u_k}\cdots a_{u_1} \psi_{N,t} \right> \bigg) \notag\\
	=: &\  I_2 + \mathit{II}_2,\label{mk_proof_main}
	\end{align}
	where the tensor product denotes $(\mathrm{d}w\mathrm{d}u)^{\otimes k} = \mathrm{d}w_1 \cdots \mathrm{d}w_k \mathrm{d}u_1 \cdots \mathrm{d}u_k$.
	
	We first focus on the $I_2$ part of \eqref{mk_proof_main}, i.e.,
	\begin{equation} \label{mk_kinetic_1}
	\begin{aligned}
	I_2 = & -\hbar^2 \dotsint (\mathrm{d}w\mathrm{d}u)^{\otimes k} \int \mathrm{d}x \left( f_{q,p}^\hbar (w) \overline{f_{q,p}^\hbar (u)} \right)^{\otimes k} \Delta_x  \left< \psi_{N,t}, a^*_{w_1} \cdots a^*_{w_k} a_{u_k}\cdots a_{u_1} a^*_x a_x \psi_{N,t} \right> \\
	& + \hbar^2 \dotsint (\mathrm{d}w\mathrm{d}u)^{\otimes k} \int \mathrm{d}x \left( f_{q,p}^\hbar (w) \overline{f_{q,p}^\hbar (u)} \right)^{\otimes k} \Delta_x \left< \psi_{N,t}, a^*_x a_x a^*_{w_1} \cdots a^*_{w_k} a_{u_k}\cdots a_{u_1} \psi_{N,t} \right>.
	\end{aligned}
	\end{equation}
	Observe that we have
	\begin{equation} \label{anil_magic_K}
	\begin{aligned}
	a^*_{w_1} \cdots a^*_{w_k} a_{u_k}\cdots a_{u_1} a^*_x a_x =&  (-1)^{4k} a^*_x a_x a^*_{w_1} \cdots a^*_{w_k} a_{u_k}\cdots a_{u_1}\\ 
	&+ a^*_x \left( \sum_{j=1}^k (-1)^j \delta_{x=w_j} a^*_{w_1} \cdots \widehat{a^*_{w_j}} \cdots a^*_{w_k} \right) a_{u_k}\cdots a_{u_1} \\
	&- a^*_{w_1} \cdots a^*_{w_k}\left( \sum_{j=1}^k (-1)^j \delta_{x=u_j}  a_{u_k}\cdots \widehat{a_{u_j}}\cdots a_{u_1}  \right) a_x,
	\end{aligned}
	\end{equation}
	where the \textit{hat} indicates exclusion of that element. 
	
	Putting this back into \eqref{mk_kinetic_1}, we obtain
	\begin{equation} \label{mk_kinetic_2_aa}
	\begin{aligned}
	I_2 = & \hbar^2 \dotsint (\mathrm{d}w\mathrm{d}u)^{\otimes k} \int \mathrm{d}x \left( f_{q,p}^\hbar (w) \overline{f_{q,p}^\hbar (u)} \right)^{\otimes k}\\
	&\hspace{2cm} \cdot\Delta_x   \left<  \psi_{N,t}, a^*_{w_1} \cdots a^*_{w_k}\left( \sum_{j=1}^k (-1)^j \delta_{x=u_j}  a_{u_k}\cdots \widehat{a_{u_j}}\cdots a_{u_1}  \right) a_x  \psi_{N,t} \right>\\
	& - \hbar^2 \dotsint (\mathrm{d}w\mathrm{d}u)^{\otimes k} \int \mathrm{d}x \left( f_{q,p}^\hbar (w) \overline{f_{q,p}^\hbar (u)} \right)^{\otimes k}\\
	&\hspace{2cm} \cdot \Delta_x  \left<  \psi_{N,t},a^*_x \left( \sum_{j=1}^k (-1)^j \delta_{x=w_j} a^*_{w_1} \cdots \widehat{a^*_{w_j}} \cdots a^*_{w_k} \right) a_{u_k}\cdots a_{u_1}  \psi_{N,t} \right>\\
	= &   \hbar^2 \sum_{j=1}^k (-1)^j \dotsint (\mathrm{d}w\mathrm{d}u)^{\otimes k} \left( f_{q,p}^\hbar (w) \overline{f_{q,p}^\hbar (u)} \right)^{\otimes k}\\
	&\hspace{2cm}\cdot \bigg( \Delta_{u_j}  \left<  \psi_{N,t}, a^*_{w_1} \cdots a^*_{w_k} \left( a_{u_k}\cdots \widehat{a_{u_j}}\cdots a_{u_1} \right) a_{u_j}  \psi_{N,t} \right>\\
	&\hspace{3cm} - \Delta_{w_j} \left<  \psi_{N,t}, a^*_{w_j} \left(a^*_{w_1} \cdots \widehat{a^*_{w_j}} \cdots a^*_{w_k}\right)  a_{u_k}\cdots a_{u_1}  \psi_{N,t} \right> \bigg).
	\end{aligned}
	\end{equation}
	
	Note that, if we want to move the missing $a_{u_j}$ or $a^*_{w_j}$ back to their original position after applying the delta function, we have for fixed $j$
	\begin{align*}
	(-1)^j  a^*_{w_1} \cdots a^*_{w_k} \left[ a_{u_k}\cdots \widehat{a_{u_j}}\cdots a_{u_1} \right] a_{u_j} = & \frac{(-1)^j }{(-1)^{j-1}}  a^*_{w_1} \cdots a^*_{w_k}  a_{u_k}\cdots a_{u_1} \\
	= & (-1)^1 a^*_{w_1} \cdots a^*_{w_k}  a_{u_k}\cdots a_{u_1}, \\
	(-1)^j a^*_{w_j} \left[a^*_{w_1} \cdots \widehat{a^*_{w_j}} \cdots a^*_{w_k}\right]  a_{u_k}\cdots a_{u_1} =& (-1)^1 a^*_{w_1} \cdots  a^*_{w_k}  a_{u_k}\cdots a_{u_1}.
	\end{align*}
	Therefore, continuing from \eqref{mk_kinetic_2_aa}, we have 
	\begin{equation} \label{mk_kinetic_2}
	\begin{aligned}
	I_2 =  - \hbar^2 \sum_{j=1}^k \dotsint (\mathrm{d}w\mathrm{d}u)^{\otimes k} \left( f_{q,p}^\hbar (w) \overline{f_{q,p}^\hbar (u)} \right)^{\otimes k} \left[ \Delta_{u_j} - \Delta_{w_j} \right]   \left< \psi_{N,t}, a^*_{w_1} \cdots  a^*_{w_k}  a_{u_k}\cdots a_{u_1} \psi_{N,t} \right>.
	\end{aligned}
	\end{equation}
	
	Now, by integration by parts on \eqref{mk_kinetic_2} and note that the Laplacian acting on the coherent state would be similar to  \eqref{Fock_Kinectic_Schro_lapcoherent1} and \eqref{Fock_Kinectic_Schro_lapcoherent2}, i.e., for fixed $j$ where $1\leq j \leq k$
	\begin{align*}
	\Delta_{u_j}  \left(\overline{f_{q,p}^\hbar (u)} \right)^{\otimes k} =& \left( \Delta_{q_j} + 2\mathrm{i}\hbar^{-1} p_j \cdot \nabla_{q_j} - \hbar^{-2} p_j^2 \right)  \left(  \overline{f_{q,p}^\hbar (u)} \right)^{\otimes k},\\
	\Delta_{w_j}  \left( f_{q,p}^\hbar (w)  \right)^{\otimes k} =& \left( \Delta_{q_j} - 2\mathrm{i}\hbar^{-1} p_j \cdot \nabla_{q_j} - \hbar^{-2} p_j^2 \right)  \left( f_{q,p}^\hbar (w) \right)^{\otimes k}.
	\end{align*}
	Thus, we have similar for when $k=1$, the kinetic part as
	\begin{equation} \label{mk_kinetic_3}
	\begin{aligned}
	I_2 =& -  2\mathrm{i}\hbar \sum_{j=1}^k p_j \cdot \nabla_{q_j} \dotsint (\mathrm{d}w\mathrm{d}u)^{\otimes k}  \left( f_{q,p}^\hbar (w) \overline{f_{q,p}^\hbar (u)} \right)^{\otimes k}  \left< \psi_{N,t}, a^*_{w_1} \cdots a^*_{w_k}a_{u_k} \cdots a_{u_1}\psi_{N,t} \right>\\
	& + 2 \hbar^2 \Im  \sum_{j=1}^k \left< \Delta_{q_j} a\left(f_{q_k,p_k}^\hbar\right)\cdots a\left(f_{q_1,p_1}^\hbar\right) \psi_{N,t},  a\left(f_{q_k,p_k}^\hbar\right)\cdots a\left(f_{q_1,p_1}^\hbar\right) \psi_{N,t} \right>\\
	= &- 2 \mathrm{i}\hbar \vec{p}_k \cdot \nabla_{\vec{q}_k} \left< a\left(f_{q_k,p_k}^\hbar\right)\cdots a\left(f_{q_1,p_1}^\hbar\right) \psi_{N,t},  a\left(f_{q_k,p_k}^\hbar\right)\cdots a\left(f_{q_1,p_1}^\hbar\right) \psi_{N,t} \right>\\
	& + 2 \mathrm{i} \hbar^2 \Im  \sum_{j=1}^k \left< \Delta_{q_j} a\left(f_{q_k,p_k}^\hbar\right)\cdots a\left(f_{q_1,p_1}^\hbar\right) \psi_{N,t},  a\left(f_{q_k,p_k}^\hbar\right)\cdots a\left(f_{q_1,p_1}^\hbar\right) \psi_{N,t} \right>.
	\end{aligned}
	\end{equation}
	Therefore it follows that
	\begin{equation}\label{mk_kinetic_Vlasov_1}
	\begin{aligned}
	I_2  &= - 2 \mathrm{i}\hbar \vec{p}_k \cdot \nabla_{\vec{q}_k} m^{(k)}_{N,t}(q_1,p_1,\dots,q_k,p_k)\\
	&\qquad + 2 \mathrm{i} \hbar^2 \Im  \sum_{j=1}^k \left< \Delta_{q_j} a\left(f_{q_k,p_k}^\hbar\right)\cdots a\left(f_{q_1,p_1}^\hbar\right) \psi_{N,t},  a\left(f_{q_k,p_k}^\hbar\right)\cdots a\left(f_{q_1,p_1}^\hbar\right) \psi_{N,t} \right>.
	\end{aligned}
	\end{equation}
	Now, we turn our focus on part $\mathit{II}_2$ of \eqref{mk_proof_main},
	\begin{equation} \label{mk_potential_1}
	\begin{aligned}
	&\mathit{II}_2 \\
	= & \frac{1}{N} \dotsint (\mathrm{d}w\mathrm{d}u)^{\otimes k}  \left( f_{q,p}^\hbar (w) \overline{f_{q,p}^\hbar (u)} \right)^{\otimes k} \iint \mathrm{d}x\mathrm{d}y\ V(x-y)  \left< \psi, a^*_{w_1} \cdots a^*_{w_k} a_{u_k}\cdots a_{u_1} a^*_x a^*_y a_y a_x \psi \right> \\
	& -  \frac{1}{N} \dotsint (\mathrm{d}w\mathrm{d}u)^{\otimes k} \left( f_{q,p}^\hbar (w) \overline{f_{q,p}^\hbar (u)} \right)^{\otimes k} \iint \mathrm{d}x\mathrm{d}y\ V(x-y)  \left< \psi, a^*_x a^*_y a_y a_x a^*_{w_1} \cdots a^*_{w_k} a_{u_k}\cdots a_{u_1} \psi \right>.
	\end{aligned}
	\end{equation}
	For $1\leq k \leq N$, observe that from the CAR, we have
	\begin{equation} \label{mk_potential_CAR}
	\begin{aligned}
	& a^*_{w_1} \cdots  a^*_{w_k} a_{u_k}\cdots a_{u_1} a^*_x a^*_y a_y a_x - (-1)^{8k} a^*_x a^*_y a_y a_x  a^*_{w_1} \cdots a^*_{w_k} a_{u_k}\cdots a_{u_1} \\
	&= - a^*_{w_1} \cdots a^*_{w_k} \left( \sum_{j=1}^k (-1)^j \delta_{x=u_j} a_{u_k}\cdots \widehat{a_{u_j}}\cdots a_{u_1} \right)a^*_y a_y a_x  \\
	&\quad  - a^*_x a^*_{w_1} \cdots a^*_{w_k}\left( \sum_{j=1}^k (-1)^j \delta_{y=u_j} a_{u_k}\cdots \widehat{a_{u_j}}\cdots a_{u_1} \right) a_y a_x \\
	&\quad + a^*_x a^*_y \left( \sum_{j=1}^k (-1)^j  \delta_{y=w_j}  a^*_{w_1} \cdots \widehat{a^*_{w_j}} \cdots a^*_{w_k} \right) a_{u_k}\cdots a_{u_1} a_x \\
	&\quad + a^*_x a^*_y a_y \left(\sum_{j=1}^k (-1)^j  \delta_{x=w_j}  a^*_{w_1} \cdots \widehat{a^*_{w_j}} \cdots a^*_{w_k} \right) a_{u_k}\cdots a_{u_1}.
	\end{aligned}
	\end{equation}
	From \eqref{mk_potential_1}, we have that
	\begin{align*}
	& \iint \mathrm{d}x \mathrm{d}y\ V(x-y) \left(a^*_{w_1} \cdots  a^*_{w_k} a_{u_k}\cdots a_{u_1} a^*_x a^*_y a_y a_x - a^*_x a^*_y a_y a_x  a^*_{w_1} \cdots a^*_{w_k} a_{u_k}\cdots a_{u_1} \right) \\
	&
	=\iint \mathrm{d}x\mathrm{d}y V(x-y) \bigg[- a^*_{w_1} \cdots a^*_{w_k} \left( \sum_{j=1}^k (-1)^j \delta_{x=u_j} a_{u_k}\cdots \widehat{a_{u_j}}\cdots a_{u_1} \right) a^*_y a_y a_x \\
	&\quad - a^*_x a^*_{w_1} \cdots a^*_{w_k} \left( \sum_{j=1}^k (-1)^j \delta_{y=u_j} a_{u_k}\cdots \widehat{a_{u_j}}\cdots a_{u_1} \right) a_y a_x \\
	&\quad + a^*_x a^*_y a_y \left( \sum_{j=1}^k (-1)^j  \delta_{x=w_j}  a^*_{w_1} \cdots \widehat{a^*_{w_j}} \cdots a^*_{w_k} \right) a_{u_k}\cdots a_{u_1}\\
	&\quad + a^*_x a^*_y a_y \left(\sum_{j=1}^k (-1)^j  \delta_{x=w_j}  a^*_{w_1} \cdots \widehat{a^*_{w_j}} \cdots a^*_{w_k} \right) a_{u_k}\cdots a_{u_1} \bigg]\\
	&
	=: J_1 + J_2 + J_3 + J_4.
	\end{align*}
	Note that summing $J_1$ and $J_4$, we have
	\begin{align*}
	J_1 + J_4 & = - \sum_{j=1}^k (-1)^j  \int \mathrm{d}y \bigg[ \left(V(u_j -y)  a^*_{w_1} \cdots a^*_{w_k} a_{u_k}\cdots \widehat{a_{u_j}}\cdots a_{u_1} a^*_y a_y a_{u_j}  \right)\\
	&\quad - \left(V(w_j- y) a^*_{w_j} a^*_y a_y a^*_{w_1} \cdots \widehat{a^*_{w_j}} \cdots a^*_{w_k} a_{u_k}\cdots a_{u_1} \right) \bigg]\\
	&
	=  \sum_{j=1}^k \left[\int \mathrm{d}y V(u_j -y)  a^*_{w_1} \cdots a^*_{w_k} a_{u_k}\cdots a_{u_1} a^*_y a_y - V(0) a^*_{w_1} \cdots a^*_{w_k} a_{u_k}\cdots a_{u_1} \right]\\
	&\quad -  \sum_{j=1}^k \left[\int \mathrm{d}y V(w_j -y)  a^*_y a_y a^*_{w_1} \cdots a^*_{w_k} a_{u_k}\cdots a_{u_1} - V(0) a^*_{w_1} \cdots a^*_{w_k} a_{u_k}\cdots a_{u_1} \right],
	\intertext{where the terms with $V(0)$ cancel one another. For the remaining term, we use again CAR to obtain}
	&
	= \sum_{j=1}^k \int \mathrm{d}y \big(V(u_j -y)- V(w_j -y) \big)a^*_y a^*_{w_1} \cdots a^*_{w_k} a_{u_k}\cdots a_{u_1}a_y \\
	&\quad+ \sum_{j=1}^k \sum_{i=1}^k (-1)^i \int \mathrm{d}y\ V(u_j -y) \delta_{u_i = y} a^*_{w_1} \cdots a^*_{w_k} a_{u_k}\cdots \widehat{a_{u_i}}\cdots a_{u_1} a_y\\
	&\quad - \sum_{j=1}^k \sum_{i=1}^k (-1)^i \int \mathrm{d}y\ V(w_j -y) \delta_{w_i = y} a^*_y a^*_{w_1} \cdots \widehat{a^*_{w_i}} \cdots a^*_{w_k} a_{u_k}\cdots a_{u_1}\\
	&
	= \sum_{j=1}^k \int \mathrm{d}y \big(V(u_j -y)- V(w_j -y) \big)a^*_y a^*_{w_1} \cdots a^*_{w_k} a_{u_k}\cdots a_{u_1}a_y \\
	&\quad- \sum_{j=1}^k \sum_{i=1}^k \big( V(u_j - u_i) - V(w_j - w_i)\big) a^*_{w_1} \cdots a^*_{w_k} a_{u_k}\cdots a_{u_1}.
	\end{align*}
	On the other hand, the sum of $J_2$ and $J_2$ yield
	\begin{align*}
	J_2 + J_3 &= \sum_{j=1}^k \int \mathrm{d}x \big(V(x-u_j) - V(x-w_j) \big) a^*_x a^*_{w_1} \cdots a^*_{w_k} a_{u_k}\cdots a_{u_1} a_x.
	\end{align*}
	By change of variable and using the fact that $V(-x) = V(x)$, we have from \eqref{mk_potential_1} that
	\begin{equation} \label{mk_potential_2}
	\begin{aligned}
	\mathit{II}_2 =& \frac{2}{N}  \dotsint (\mathrm{d}w\mathrm{d}u)^{\otimes k} \int \mathrm{d}y\  \sum_{j=1}^k \bigg[ V(y-u_j) - V(w_j -y) \bigg] \left( f_{q,p}^\hbar (w) \overline{f_{q,p}^\hbar (u)} \right)^{\otimes k}\\
	&\quad\cdot \left<a_{w_k} \cdots a_{w_1} a_y \psi_{N,t}, a_{u_k} \cdots a_{u_1} a_y \psi_{N,t} \right> \\
	& - \frac{1}{N}  \dotsint (\mathrm{d}w\mathrm{d}u)^{\otimes k} \sum_{j\neq i}^k  \bigg[ V(u_j-u_i) - V(w_j -w_i) \bigg] \left( f_{q,p}^\hbar (w) \overline{f_{q,p}^\hbar (u)} \right)^{\otimes k}\\
	&\quad\cdot \left< a_{w_k} \cdots a_{w_1} \psi_{N,t}, a_{u_k} \cdots a_{u_1} \psi_{N,t} \right> \\
	\end{aligned}
	\end{equation}
	Applying mean value theorem on the first term on right hand side, we have that
	\begin{equation} \label{mk_potential_3}
	\begin{aligned}
	\frac{2}{N} &\sum_{j=1}^k  \dotsint (\mathrm{d}w\mathrm{d}u)^{\otimes k} \int \mathrm{d}y \left( V(y-u_j) - V(w_j -y) \right) \left( f_{q,p}^\hbar (w) \overline{f_{q,p}^\hbar (u)} \right)^{\otimes k}\\
	&\quad\cdot\left< a_{w_k} \cdots a_{w_1} a_y \psi_{N,t}, a_{u_k} \cdots a_{u_1} a_y \psi_{N,t} \right> \\
	= &\frac{2}{N} \sum_{j=1}^k  \dotsint (\mathrm{d}w\mathrm{d}u)^{\otimes k} \int \mathrm{d}y \left[ \int_0^1 \mathrm{d}s\ \nabla V(su_j + (1-s)w_j -y) \right]\cdot (u_j - w_j) \left( f_{q,p}^\hbar (w) \overline{f_{q,p}^\hbar (u)} \right)^{\otimes k}\\
	&\quad\cdot \left< a_{w_k} \cdots a_{w_1} a_y \psi_{N,t}, a_{u_k} \cdots a_{u_1} a_y \psi_{N,t} \right> \\
	= & \frac{2 \mathrm{i} \hbar}{N} \sum_{j=1}^k  \dotsint (\mathrm{d}w\mathrm{d}u)^{\otimes k} \int \mathrm{d}y \left[ \int_0^1 \mathrm{d}s\ \nabla V(su_j + (1-s)w_j -y) \right]\cdot\nabla_{p_j} \left( f_{q,p}^\hbar (w) \overline{f_{q,p}^\hbar (u)} \right)^{\otimes k}\\
	&\quad\cdot \left< a_{w_k} \cdots a_{w_1} a_y \psi_{N,t}, a_{u_k} \cdots a_{u_1} a_y \psi_{N,t} \right>.
	\end{aligned}
	\end{equation}
	As in the case of $k=1$, we apply the projection \eqref{coherent_projection} onto $a_y \psi_{N,t}$ and get further
	\begin{equation} \label{mk_potential_projection}
	\begin{aligned}
	&\frac{2\mathrm{i}\hbar}{N} \sum_{j=1}^k\dotsint (\mathrm{d}w\mathrm{d}u)^{\otimes k} \int \mathrm{d}y \left[ \int_0^1 \mathrm{d}s \nabla  V(su_j + (1-s)w_j -y) \right]\cdot\nabla_{p_j} \left( f_{q,p}^\hbar (w) \overline{f_{q,p}^\hbar (u)} \right)^{\otimes k}\\
	&\quad\cdot \left< a_{w_k} \cdots a_{w_1} a_y \psi_{N,t} , a_{u_k} \cdots a_{u_1} \mathds{1} a_y \psi_{N,t}  \right>\\
	= & \frac{2\mathrm{i}\hbar}{N} \frac{1}{(2\pi \hbar)^3} \sum_{j=1}^k  \dotsint (\mathrm{d}w\mathrm{d}u)^{\otimes k} \int \mathrm{d}y \left[ \int_0^1 \mathrm{d}s \nabla V(su_j + (1-s)w_j -y) \right]\cdot\nabla_{p_j} \left( f_{q,p}^\hbar (w) \overline{f_{q,p}^\hbar (u)} \right)^{\otimes k} \\
	&\quad\cdot \iint \mathrm{d}\widetilde{q} d \widetilde{p}\ f^\hbar_{\widetilde{q}, \widetilde{p}} (y)\int \mathrm{d}v\ \overline{f^\hbar_{\widetilde{q}, \widetilde{p}}(v)} \left< a_{w_k} \cdots a_{w_1} a_y \psi_{N,t} , a_{u_k} \cdots a_{u_1} a_v \psi_{N,t}  \right>.
	\end{aligned}
	\end{equation}		
	Therefore, dividing both equations by $2\mathrm{i}\hbar$,  we have the following equation
	\begin{align}
	&\p_t m_{N,t}^{(k)} (q_1,p_1,\dots,q_k,p_k) + \vec{p}_k \cdot \nabla_{\vec{q}_k} m_{N,t}^{(k)} (q_1,p_1,\dots,q_k,p_k) \notag\\
	=&\hbar \Im  \sum_{j=1}^k \left< \Delta_{q_j} a\left(f_{q_k,p_k}^\hbar\right)\cdots a\left(f_{q_1,p_1}^\hbar\right) \psi_{N,t},  a\left(f_{q_k,p_k}^\hbar\right)\cdots a\left(f_{q_1,p_1}^\hbar\right) \psi_{N,t} \right>\notag\\
	&\quad+\frac{1}{(2\pi)^3} \sum_{j=1}^k  \dotsint (\mathrm{d}w\mathrm{d}u)^{\otimes k} \int \mathrm{d}y \left[ \int_0^1 \mathrm{d}s \nabla V(su_j + (1-s)w_j -y) \right]\cdot\nabla_{p_j} \left( f_{q,p}^\hbar (w) \overline{f_{q,p}^\hbar (u)} \right)^{\otimes k} \notag\\
	&\quad\quad\cdot \iint \mathrm{d}\widetilde{q} d \widetilde{p}\ f^\hbar_{\widetilde{q}, \widetilde{p}} (y)\int \mathrm{d}v\ \overline{f^\hbar_{\widetilde{q}, \widetilde{p}}(v)} \left< a_{w_k} \cdots a_{w_1} a_y \psi_{N,t} , a_{u_k} \cdots a_{u_1} a_v \psi_{N,t}  \right>\notag\\
	&\quad + \frac{\mathrm{i}\hbar^2}{2}  \dotsint (\mathrm{d}w\mathrm{d}u)^{\otimes k} \sum_{j\neq i}^k  \bigg[ V(u_j-u_i) - V(w_j -w_i) \bigg] \left( f_{q,p}^\hbar (w) \overline{f_{q,p}^\hbar (u)} \right)^{\otimes k}\notag\\
	&\quad\quad\cdot \left< a_{w_k} \cdots a_{w_1} \psi_{N,t}, a_{u_k} \cdots a_{u_1} \psi_{N,t} \right>.\label{BBGKY}
	\end{align}
	for $1\leq k \leq N$, $\vec{p}_k = (p_1,\dots,p_k)$ and recalling $\hbar^3 = N^{-1}$. At this point we finish the computation of the hierarchy for Husimi measure.
\end{proof}

\subsection{Proof of the uniform estimates in section \ref{sec2.3}}\label{proof_of_estimations}
This subsection provide the proof of estimates for the error terms that appeared in the equations for $m^{(k)}_{N,t}$. Note that in all the proofs below, we suppose, without loss of generality, that the test function $\Phi \in C^\infty_0(\R^{6k})$ is factorized in phase-space by family of test functions in $C^\infty_0(\R^{3})$ space.
\subsubsection{Proof of Proposition \ref{LemRk}}
\begin{proof}
	For fixed $k$, we denote the vector $\vec{x}_k = (x_1,\cdots, x_k)$ for each $x_j \in \R^3$ with $j=1,\cdots,k$.
	Then we estimate the integral as follows
	\begin{align}
	&\Bigg| \dotsint  (\mathrm{d}q\mathrm{d}p)^{\otimes k}  \nabla_{\vec{q}_k}\Phi(q_1,p_1,\dots,q_k,p_k) \cdot \mathcal{R}_k \Bigg|\notag\\
	\leq &\hbar \Bigg| \sum_{j=1}^k \dotsint (\mathrm{d}q\mathrm{d}p)^{\otimes k}\  \nabla_{q_j} \Phi(q_1,p_1,\dots,q_k,p_k) \notag\\
	&\qquad\cdot \left< \nabla_{q_j}\big( a (f^\hbar_{q_k,p_k}) \cdots a (f^\hbar_{q_1,p_1})\big) \psi_{N,t},  a (f^\hbar_{q_k,p_k}) \cdots a (f^\hbar_{q_1,p_1}) \psi_{N,t} \right> \Bigg|\notag\\
	= & \hbar^{1-\frac{3}{2}k} \bigg|  \sum_{j=1}^k \dotsint (\mathrm{d}q\mathrm{d}p)^{\otimes k} \  \nabla_{q_j} \Phi(q_1,p_1,\dots,q_k,p_k) \cdot \dotsint (\mathrm{d}w\mathrm{d}u)^{\otimes k} \prod_{n=1}^k\left(\rchi_{(w_n-u_n)\in ({\Omega_\hbar^\alpha})^c} +\rchi_{(w_n-u_n)\in {\Omega_\hbar^\alpha}} \right)\notag\\
	&\quad\cdot  \nabla_{q_j} f\left( \frac{w_n-q_n}{\sqrt{\hbar}}\right) f\left( \frac{u_n-q_n}{\sqrt{\hbar}}\right) e^{\frac{i}{\hbar}p_n \cdot(w_n-u_n)}\left<a_{w_k}\cdots a_{w_1} \psi_{N,t}, a_{u_k}\cdots a_{u_1} \psi_{N,t} \right> \bigg|\notag\\
	\leq &  \hbar^{1-\frac{3}{2}k}  \sum_{j=1}^k  \dotsint (\mathrm{d}q\mathrm{d}w\mathrm{d}u)^{\otimes k} \Bigg|\dotsint (\mathrm{d}p)^{\otimes k}  \prod_{n=1}^k \left(\rchi_{(w_n-u_n)\in {\Omega_\hbar^\alpha}} +\rchi_{(w_n-u_n)\in ({\Omega_\hbar^\alpha})^c} \right)   \nabla_{q_j} \Phi\cdot e^{\frac{i}{\hbar}p_n \cdot (w_n -u_n)}\Bigg|\notag\\
	&\quad\cdot \left|\nabla_{q_j}f\left( \frac{w_n-q_n}{\sqrt{\hbar}}\right) \right|\left| f\left( \frac{u_n-q_n}{\sqrt{\hbar}}\right) \right| \norm{a_{w_k}\cdots a_{w_1} \psi_{N,t}}\norm{ a_{u_k}\cdots a_{u_1} \psi_{N,t}}\notag\\
	=&  \hbar^{\frac{1}{2}-\frac{3}{2}k}  \sum_{j=1}^k  \dotsint (\mathrm{d}q\mathrm{d}w\mathrm{d}u)^{\otimes k} \Bigg|\dotsint (\mathrm{d}p)^{\otimes k}  \prod_{n=1}^k \left(\rchi_{(w_n-u_n)\in {\Omega_\hbar^\alpha}} +\rchi_{(w_n-u_n)\in ({\Omega_\hbar^\alpha})^c} \right)\nabla_{q_j} \Phi\cdot e^{\frac{i}{\hbar}p_n \cdot (w_n -u_n)}\Bigg|\notag\\
	&\quad\cdot\prod_{n\neq j}^k  \left|f\left( \frac{w_n-q_n}{\sqrt{\hbar}}\right)  f\left( \frac{u_n-q_n}{\sqrt{\hbar}}\right) \right|\left|\nabla f\left( \frac{w_j-q_j}{\sqrt{\hbar}}\right) \right|\left| f\left( \frac{u_j-q_j}{\sqrt{\hbar}}\right) \right|\notag\\
	&\qquad \cdot \norm{a_{w_k}\cdots a_{w_1} \psi_{N,t}}\norm{ a_{u_k}\cdots a_{u_1} \psi_{N,t}},\label{LemRk_proof_0}
	\end{align}
	where ${\Omega_\hbar^\alpha}$ is defined as in \eqref{estimate_oscillation_omega} and used the fact that
	\[
	\nabla_{q_j} f\left( \frac{w_j-q_j}{\sqrt{\hbar}}\right) = -\frac{1}{\sqrt{\hbar}} \nabla f\left( \frac{w_j-q_j}{\sqrt{\hbar}}\right).
	\]
	Now, the product term $\prod_{n=1}^k \left(\rchi_{(w_n-u_n)\in {\Omega_\hbar^\alpha}} +\rchi_{(w_n-u_n)\in ({\Omega_\hbar^\alpha})^c} \right)$ in \eqref{LemRk_proof_0} includes a summation of $C(k)$ terms of the following type
	\begin{equation}\label{LemRk_proof_1}
	\rchi_{(w_1-u_1)\in {\Omega_\hbar^\alpha}} \cdots  \rchi_{(w_\ell-u_\ell)\in {\Omega_\hbar^\alpha}} \rchi_{(w_{\ell+1}-u_{\ell+1})\in ({\Omega_\hbar^\alpha})^c}\cdots\rchi_{(w_{k}-u_{k})\in ({\Omega_\hbar^\alpha})^c},
	\end{equation}
	where $\ell \in \{1,\dots, k\}$. Thus, to continue from \eqref{LemRk_proof_0}, we have
	\begin{align*}
	&\left| \dotsint  (\mathrm{d}q\mathrm{d}p)^{\otimes k}  \nabla_{\vec{q}_k}\Phi(q_1,p_1,\dots,q_k,p_k) \cdot \mathcal{R}_k \right|\\
	\leq& C \hbar^{\frac{1}{2}-\frac{3}{2}k}  \sum_{j=1}^k \max_{0\leq \ell \leq k } \dotsint (\mathrm{d}q\mathrm{d}w\mathrm{d}u)^{\otimes k} \prod_{n\neq j}^k  \left|f\left( \frac{w_n-q_n}{\sqrt{\hbar}}\right)  f\left( \frac{u_n-q_n}{\sqrt{\hbar}}\right) \right| \left|\nabla f\left( \frac{w_j-q_j}{\sqrt{\hbar}}\right) \right|\left| f\left( \frac{u_j-q_j}{\sqrt{\hbar}}\right) \right|\\
	&\quad\cdot\bigg|\dotsint (\mathrm{d}p)^{\otimes k}  \left( \rchi_{(w_1-u_1)\in {\Omega_\hbar^\alpha}} \cdots  \rchi_{(w_\ell-u_\ell)\in {\Omega_\hbar^\alpha}} \rchi_{(w_{\ell+1}-u_{\ell+1})\in ({\Omega_\hbar^\alpha})^c}\cdots\rchi_{(w_{k}-u_{k})\in ({\Omega_\hbar^\alpha})^c} \right)  \nabla_{q_j} \Phi\cdot e^{\frac{i}{\hbar}\vec{p}_k \cdot (\vec{w}_k -\vec{u}_k)}\bigg| \\
	&\quad\cdot \norm{a_{w_k}\cdots a_{w_1} \psi_{N,t}}\norm{ a_{u_k}\cdots a_{u_1} \psi_{N,t}}\\
	\leq & C \hbar^{\frac{1}{2}-\frac{3}{2}k}  \sum_{j=1}^k \max_{0\leq \ell \leq k }  \dotsint (\mathrm{d}q\mathrm{d}w\mathrm{d}u)^{\otimes k} \prod_{n\neq j}^k  \left|f\left( \frac{w_n-q_n}{\sqrt{\hbar}}\right)  f\left( \frac{u_n-q_n}{\sqrt{\hbar}}\right) \right| \left|\nabla f\left( \frac{w_j-q_j}{\sqrt{\hbar}}\right) \right|\left| f\left( \frac{u_j-q_j}{\sqrt{\hbar}}\right) \right| \\
	& \quad\cdot\bigg| \dotsint (\mathrm{d}p)^{\otimes \ell} \rchi_{(w_1-u_1)\in {\Omega_\hbar^\alpha}} \cdots  \rchi_{(w_\ell-u_\ell)\in {\Omega_\hbar^\alpha}} e^{\frac{i}{\hbar} \sum_{m=1}^\ell {p}_{m} \cdot ({w}_{m} -{u}_{m})} \\
	&\qquad\cdot \dotsint (\mathrm{d}p)^{\otimes (k-\ell)}  \rchi_{(w_{\ell+1}-u_{\ell+1})\in ({\Omega_\hbar^\alpha})^c}\cdots\rchi_{(w_{k}-u_{k})\in ({\Omega_\hbar^\alpha})^c} e^{\frac{i}{\hbar} \sum_{m=k-\ell}^\ell {p}_{m} \cdot ({w}_{m} -{u}_{m})} \nabla_{q_j} \Phi \bigg| \\
	&\quad\cdot \norm{a_{w_k}\cdots a_{w_1} \psi_{N,t}}\norm{ a_{u_k}\cdots a_{u_1} \psi_{N,t}}
	\end{align*}
	Applying Lemma \ref{estimate_oscillation} onto the $(k-\ell)$ terms, we have
	\begin{align*}
	\leq &  C  \sum_{j=1}^k \max_{0\leq \ell \leq k } \hbar^{\frac{1}{2}-\frac{3}{2}k + (1-\alpha)(k-\ell)s}  \dotsint (\mathrm{d}q\mathrm{d}w\mathrm{d}u)^{\otimes k}  \left(\rchi_{(w_1-u_1)\in {\Omega_\hbar^\alpha}} \cdots  \rchi_{(w_\ell-u_\ell)\in {\Omega_\hbar^\alpha}} \right)\\
	&\qquad\cdot \prod_{n\neq j}^k \left|f\left( \frac{w_n-q_n}{\sqrt{\hbar}}\right)  f\left( \frac{u_n-q_n}{\sqrt{\hbar}}\right) \right|\cdot \left|\nabla f\left( \frac{w_j-q_j}{\sqrt{\hbar}}\right) \right|\left| f\left( \frac{u_j-q_j}{\sqrt{\hbar}}\right) \right|\\
	&\qquad\cdot \norm{a_{w_k}\cdots a_{w_1} \psi_{N,t}}\norm{ a_{u_k}\cdots a_{u_1} \psi_{N,t}}.\\
	\end{align*}
	For a fixed $\ell$, observe that since $f$ is compact supported, by using H\"older's inequality in $w$ and $u$ variables, we have
	\begin{align*}
	& \dotsint (\mathrm{d}q\mathrm{d}w\mathrm{d}u)^{\otimes k}  \left(\rchi_{(w_1-u_1)\in {\Omega_\hbar^\alpha}} \cdots  \rchi_{(w_\ell-u_\ell)\in {\Omega_\hbar^\alpha}} \right) \prod_{n\neq j}^k \left|f\left( \frac{w_n-q_n}{\sqrt{\hbar}}\right)  f\left( \frac{u_n-q_n}{\sqrt{\hbar}}\right) \right|\\
	&\quad\cdot\left|\nabla f\left( \frac{w_j-q_j}{\sqrt{\hbar}}\right) \right|\left| f\left( \frac{u_j-q_j}{\sqrt{\hbar}}\right) \right| \norm{a_{w_k}\cdots a_{w_1} \psi_{N,t}}\norm{ a_{u_k}\cdots a_{u_1} \psi_{N,t}}\\
	=& \dotsint (\mathrm{d}q\mathrm{d}w\mathrm{d}u)^{\otimes k}   \left(\rchi_{(w_1-u_1)\in {\Omega_\hbar^\alpha}} \cdots  \rchi_{(w_\ell-u_\ell)\in {\Omega_\hbar^\alpha}} \right) \prod_{n\neq j}^k \left|f\left( \frac{w_n-q_n}{\sqrt{\hbar}}\right)  f\left( \frac{u_n-q_n}{\sqrt{\hbar}}\right) \right|\\
	&\quad\cdot \left|\nabla f\left( \frac{w_j-q_j}{\sqrt{\hbar}}\right) \right|\left| f\left( \frac{u_j-q_j}{\sqrt{\hbar}}\right)\right| \prod_{m=1}^k \rchi_{|w_m - q_m| \leq \sqrt{\hbar} R} \rchi_{|u_m - q_m| \leq \sqrt{\hbar} R} \norm{a_{w_k}\cdots a_{w_1} \psi_{N,t}}\norm{ a_{u_k}\cdots a_{u_1} \psi_{N,t}}
	\end{align*}
Since $2\norm{a_{w_k}\cdots a_{w_1} \psi_{N,t}}\norm{ a_{u_k}\cdots a_{u_1} \psi_{N,t}}\leq \norm{a_{w_k}\cdots a_{w_1} \psi_{N,t}}^2+\norm{ a_{u_k}\cdots a_{u_1} \psi_{N,t}}^2$, we only need to estimate the term with one of these. Namely,	
\begin{align*}	
	&\dotsint (\mathrm{d}q\mathrm{d}w\mathrm{d}u)^{\otimes k}   \left(\rchi_{(w_1-u_1)\in {\Omega_\hbar^\alpha}} \cdots  \rchi_{(w_\ell-u_\ell)\in {\Omega_\hbar^\alpha}} \right) \prod_{n\neq j}^k \left|f\left( \frac{w_n-q_n}{\sqrt{\hbar}}\right)  f\left( \frac{u_n-q_n}{\sqrt{\hbar}}\right) \right|\\
	&\quad\cdot \left|\nabla f\left( \frac{w_j-q_j}{\sqrt{\hbar}}\right) \right|\left| f\left( \frac{u_j-q_j}{\sqrt{\hbar}}\right)\right|  \norm{a_{w_k}\cdots a_{w_1} \psi_{N,t}}^2\\
	\leq & \norm{f}_{L^\infty}^k \hbar^{3\alpha l}\dotsint (\mathrm{d}q\mathrm{d}w)^{\otimes k}    \prod_{n\neq j}^k \left|f\left( \frac{w_n-q_n}{\sqrt{\hbar}}\right) \right|\cdot \left|\nabla f\left( \frac{w_j-q_j}{\sqrt{\hbar}}\right) \right| \norm{a_{w_k}\cdots a_{w_1} \psi_{N,t}}^2\\
	\leq & C \hbar^{3\alpha l+\frac{3}{2}k-3k}=C \hbar^{3\alpha l-\frac{3}{2}k}.
	\end{align*}
	Then, from \eqref{LemRk_proof_1}, we have
	\begin{equation}\label{LemRk_proof_3}
	\begin{aligned}
	\left| \dotsint  (\mathrm{d}q\mathrm{d}p)^{\otimes k}  \nabla_{\vec{q}_k}\Phi(q_1,p_1,\dots,q_k,p_k) \cdot \mathcal{R}_k \right|
	&\leq  C  \sum_{j=1}^k  \max_{0\leq \ell \leq k } \hbar^{\frac{1}{2}-\frac{3}{2}k + (1-\alpha)(k-\ell)s+{3\alpha \ell - \frac{3}{2}k}} \\
	& = C k  \max_{0\leq \ell \leq k }
	\hbar^{\frac{1}{2}{-3k + (1-\alpha)sk+(3\alpha-(1-\alpha) s) \ell}}.
	\end{aligned}
	\end{equation}
	Therefore, by picking $s = {\left\lceil \frac{3\alpha}{1-\alpha} \right\rceil}$ we arrive immediately that
	\begin{align*}
	\left| \dotsint  (\mathrm{d}q\mathrm{d}p)^{\otimes k}  \nabla_{\vec{q}_k}\Phi(q_1,p_1,\dots,q_k,p_k) \cdot \mathcal{R}_k \right|\leq {C \hbar^{\frac{1}{2}-3k+3\alpha k}=C \hbar^{\frac{1}{2}-3k(1-\alpha)}}.
	\end{align*}
	Therefore, for all $\delta\ll 1$, we choose $\frac{1}{2}<\alpha <1$ such that ${\hbar^{-3k(1-\alpha)}\leq \hbar^{- \delta}}$. 
	
\end{proof}

\subsubsection{Proof of Proposition \ref{lemma_vla_k1_convg}}
\begin{proof}
	Let $\Phi$ be an arbitrary test function, then the remainder term $\widetilde{\mathcal{R}}_1$ can be written explicitly into
	\begin{align*}
	& \bigg|  \iint \mathrm{d}q_1 \mathrm{d}p_1 \nabla_{p_1} \Phi(q_1, p_1) \cdot \widetilde{\mathcal{R}}_1\bigg|\\
	=&\bigg|  \iint \mathrm{d}q_1 \mathrm{d}p_1 \nabla_{p_1} \Phi(q_1, p_1) \cdot \bigg(\iint \mathrm{d}w\mathrm{d}u \iint \mathrm{d}y\mathrm{d}v \iint \mathrm{d}q_2 \mathrm{d}p_2 \\
	&\qquad\cdot\Big[\int_0^1 \mathrm{d}s  \nabla V\big(su+(1-s)w - y \big)- \nabla V(q_1-q_2)\Big]\\
	&\qquad\cdot f_{q_1,p_1}^\hbar (w) \overline{f_{q_1,p_1}^\hbar (u)}  f_{q_2,p_2}^\hbar (y) \overline{f_{q_2,p_2}^\hbar (v)} \left< a_w a_y \psi_{N,t}, a_u a_v \psi_{N,t} \right> \bigg) \bigg|\\
	=& \frac{1}{\hbar^3}  \bigg|\iint \mathrm{d}q_1 \mathrm{d}p_1 \nabla_{p_1} \Phi(q_1, p_1) \cdot \iint \mathrm{d}w\mathrm{d}u \iint \mathrm{d}y\mathrm{d}v \iint \mathrm{d}q_2 \mathrm{d}p_2  \\
	&\qquad\cdot\Big[\int_0^1 \mathrm{d}s  \nabla V\big(su+(1-s)w - y \big)- \nabla V(q_1-q_2)\Big]e^{\frac{\mathrm{i}}{\hbar}p_1\cdot(w-u)}e^{\frac{\mathrm{i}}{\hbar}p_2\cdot(y-v)}\\
	& \qquad \cdot f \left( \frac{w-q_1}{\sqrt{\hbar}} \right) \overline{f\left(\frac{u-q_1}{\sqrt{\hbar}}\right)} f \left( \frac{y-q_2}{\sqrt{\hbar}} \right) \overline{f\left(\frac{v-q_2}{\sqrt{\hbar}}\right)}  \left< a_w a_y \psi_{N,t}, a_u a_v \psi_{N,t} \right> \bigg|.
	\end{align*}
	Then, utilizing \eqref{hbar_fourier}, we may get
	\begin{align*}
	&(2\pi)^3  \bigg|\iint \mathrm{d}q_1 \mathrm{d}p_1 \nabla_{p_1} \Phi(q_1, p_1) \cdot  \iint \mathrm{d}w\mathrm{d}u \iint \mathrm{d}y \mathrm{d}q_2 \\
	&\qquad \cdot \left[ \int_0^1 \mathrm{d}s \nabla V (su+(1-s)w-y) - \nabla V(q_1-q_2) \right] \\
	&\qquad\cdot f \left( \frac{w-q_1}{\sqrt{\hbar}} \right) \overline{f\left(\frac{u-q_1}{\sqrt{\hbar}}\right)} e^{\frac{\mathrm{i}}{\hbar}p_1\cdot(w-u)} \left| f \left( \frac{y-q_2}{\sqrt{\hbar}} \right)\right|^2 \left< a_w a_y \psi_{N,t}, a_u a_y \psi_{N,t} \right> \bigg|\\
	= &  (2\pi)^{3} \hbar^{\frac{3}{2}} \bigg| \iint \mathrm{d}q_1 \mathrm{d}p_1 \nabla_{p_1} \Phi(q_1, p_1) \cdot \iint \mathrm{d}w\mathrm{d}u \iint \mathrm{d}y \mathrm{d}\widetilde{q}_2 \\
	&\qquad \cdot \left[ \int_0^1 \mathrm{d}s \nabla V (su+(1-s)w-y) - \nabla V(q_1-y +\sqrt{\hbar}\widetilde{q}_2) \right] \\
	&\qquad\cdot f \left( \frac{w-q_1}{\sqrt{\hbar}} \right) \overline{f\left(\frac{u-q_1}{\sqrt{\hbar}}\right)} e^{\frac{\mathrm{i}}{\hbar}p_1\cdot(w-u)} \left| f \left( \widetilde{q}_2 \right)\right|^2  \left< a_w a_y \psi_{N,t}, a_u a_y \psi_{N,t} \right> \bigg|.
	\end{align*}
	Then, we insert a term, namely $\nabla V(q_1-y)$ and use triangle inequality to obtain
	\begin{align*}
	\leq &  (2\pi)^3\hbar^{\frac{3}{2}}  \bigg| \iint \mathrm{d}q_1 \mathrm{d}p_1 \nabla_{p_1} \Phi(q_1, p_1) \cdot \iint \mathrm{d}w\mathrm{d}u \iint \mathrm{d}y \mathrm{d}\widetilde{q}_2 \\
	&\qquad\cdot \int_0^1 \mathrm{d}s\bigg( \nabla V (su+(1-s)w-y) - \nabla V(q_1-y) \bigg)\\
	&\qquad\cdot f \left( \frac{w-q_1}{\sqrt{\hbar}} \right) \overline{f\left(\frac{u-q_1}{\sqrt{\hbar}}\right)} e^{\frac{\mathrm{i}}{\hbar}p_1\cdot(w-u)} \big| f \left(\widetilde{q}_2 \right)\big|^2 \left< a_w a_y \psi_{N,t}, a_u a_y \psi_{N,t} \right> \bigg| \\
	&+ (2\pi)^3\hbar^{\frac{3}{2}}  \bigg| \iint \mathrm{d}q_1 \mathrm{d}p_1 \nabla_{p_1} \Phi(q_1, p_1) \cdot \iint \mathrm{d}w\mathrm{d}u \iint \mathrm{d}y \mathrm{d}\widetilde{q}_2 \bigg(\nabla V(q_1-y) - \nabla V(q_1-y+\sqrt{\hbar}\widetilde{q}_2)\bigg)\\
	&\qquad\cdot f \left( \frac{w-q_1}{\sqrt{\hbar}} \right)  \overline{f\left(\frac{u-q_1}{\sqrt{\hbar}}\right)}e^{\frac{\mathrm{i}}{\hbar}p_1\cdot(w-u)} \big| f \left( \widetilde{q}_2 \right)\big|^2\left< a_w a_y \psi_{N,t}, a_u a_y \psi_{N,t} \right> \bigg| \\
	=:& I_3 + \mathit{II}_3,
	\end{align*}
	where we have used change of variable $\sqrt{\hbar}\widetilde{q}_2 = (y-q_2)$ in the second term above. 
	
	We first focus on $\mathit{II}_3$. We begin by splitting the integral on momentum, by using Lemma \ref{estimate_oscillation}, it follows
	\begin{align} 
	\mathit{II}_3 
	=&  (2\pi)^3\hbar^{\frac{3}{2}} \bigg| \iint \mathrm{d}q_1 \mathrm{d}p_1 \nabla_{p_1} \Phi(q_1, p_1) \cdot \iint \mathrm{d}w\mathrm{d}u \iint \mathrm{d}y \mathrm{d}\widetilde{q}_2 \left(\rchi_{(w-u)\in ({\Omega_\hbar^\alpha})^c} +\rchi_{(w-u)\in {\Omega_\hbar^\alpha}}\right)\notag\\
	&\qquad\cdot  \bigg( \nabla V(q_1-y) - \nabla V(q_1-y+\sqrt{\hbar}\widetilde{q}_2) \bigg) f \left( \frac{w-q_1}{\sqrt{\hbar}} \right) \overline{f\left(\frac{u-q_1}{\sqrt{\hbar}}\right)}\notag\\
	&\qquad\cdot e^{\frac{\mathrm{i}}{\hbar}p_1\cdot(w-u)} \big| f \left( \widetilde{q}_2 \right)\big|^2 \left< a_w a_y \psi_{N,t}, a_u a_y \psi_{N,t} \right> \bigg|\notag\\
	\leq&
	(2\pi)^3\hbar^{\frac{3}{2}+\frac{1}{2}}  \int \mathrm{d}q_1 \iint \mathrm{d}w\mathrm{d}u \int \mathrm{d}y \bigg(\left|\int \mathrm{d}p_1\ e^{\frac{\mathrm{i}}{\hbar}p_1\cdot(w-u)} \rchi_{(w-u)\in ({\Omega_\hbar^\alpha})^c} \nabla_{p_1} \Phi(q_1, p_1)  \right|\notag\\
	&\qquad +\left|\int \mathrm{d}p_1\ e^{\frac{\mathrm{i}}{\hbar}p_1\cdot(w-u)} \rchi_{(w-u)\in {\Omega_\hbar^\alpha}} \nabla_{p_1} \Phi(q_1, p_1)  \right| \bigg)\cdot \bigg| f \left( \frac{w-q_1}{\sqrt{\hbar}} \right) \overline{f\left(\frac{u-q_1}{\sqrt{\hbar}}\right)}\bigg|\notag \\
	&\qquad \cdot\left( \int \mathrm{d}\widetilde{q}_2 |\widetilde{q}_2| \big| f \left( \widetilde{q}_2 \right) 
	\big|^2\right) |\left< a_w a_y \psi_{N,t}, a_u a_y \psi_{N,t} \right>|\notag\\
	\leq&
	C \hbar^{\frac{3}{2}+\frac{1}{2}}  \int \mathrm{d}q_1 \iint \mathrm{d}w\mathrm{d}u \int \mathrm{d}y \bigg(\left|\int \mathrm{d}p_1\ e^{\frac{\mathrm{i}}{\hbar}p_1\cdot(w-u)} \rchi_{(w-u)\in ({\Omega_\hbar^\alpha})^c} \nabla_{p_1} \Phi(q_1, p_1)  \right|\notag\\
	&\qquad +\left|\int \mathrm{d}p_1\ e^{\frac{\mathrm{i}}{\hbar}p_1\cdot(w-u)} \rchi_{(w-u)\in {\Omega_\hbar^\alpha}} \nabla_{p_1} \Phi(q_1, p_1)  \right| \bigg)\cdot \bigg| f \left( \frac{w-q_1}{\sqrt{\hbar}} \right) \overline{f\left(\frac{u-q_1}{\sqrt{\hbar}}\right)}\bigg|\notag\\
	&\qquad\cdot|\left< a_w a_y \psi_{N,t}, a_u a_y \psi_{N,t} \right>|\notag\\
	=:& i_{31} + \mathit{ii}_{31},\label{proof_II_3}
	\end{align}
	where we used the fact that $\nabla V$ is Lipschitz continuous, $f$ has compact support, and the definition of ${\Omega_\hbar^\alpha}$ in \eqref{estimate_oscillation_omega}.
	
	The next step is to use Lemmata \ref{N_hbar} and \ref{estimate_oscillation} to bound the terms $i_{31}$ and $\mathit{ii}_{31}$. Then we examine what the appropriate terms $\alpha$ and $s$ should be. By Lemma \ref{estimate_oscillation}, we may bound the term $i_{31}$, i.e.,
	\begin{align*}
	i_{31} &\leq  C \hbar^{\frac{3}{2} +\frac{1}{2}+ (1-\alpha)s}  \int \mathrm{d}q_1 \iint \mathrm{d}w\mathrm{d}u \int \mathrm{d}y \cdot \bigg| f \left( \frac{w-q_1}{\sqrt{\hbar}} \right) \overline{f\left(\frac{u-q_1}{\sqrt{\hbar}}\right)}\bigg||\left< a_w a_y \psi_{N,t}, a_u a_y \psi_{N,t} \right>|\\	
	&
	\leq C \hbar^{\frac{3}{2} +\frac{1}{2} + (1-\alpha)s} \int \mathrm{d}q_1 \iint \mathrm{d}w \mathrm{d}u \int \mathrm{d}y \bigg|f \left( \frac{w-q_1}{\sqrt{\hbar}} \right) \overline{f\left(\frac{u-q_1}{\sqrt{\hbar}}\right)}\bigg| \norm{ a_w a_y \psi_{N,t}} \norm{ a_u a_y \psi_{N,t} }.
	\end{align*}
	Since we assume that $f$ is compactly supported, by H\"older inequality with respect to $w$ and $u$, we have we have that
	\begin{align*}
	i_{31}
	\leq &  C \hbar^{\frac{3}{2} +\frac{1}{2} + (1-\alpha)s}   \int \mathrm{d}q_1 \bigg( \iint \mathrm{d}w\mathrm{d}u \left|f\left(\frac{w-q_1}{\sqrt{\hbar}} \right) \right|^2 \left|f\left(\frac{u-q_1}{\sqrt{\hbar}} \right) \right|^2 \bigg)^\frac{1}{2}\\
	&\qquad\cdot \left( \iint \mathrm{d}w\mathrm{d}u\ \rchi_{|w-q_1|\leq \sqrt{\hbar}R}  \rchi_{|u-q_1|\leq \sqrt{\hbar}R} \left( \int \mathrm{d}y\ \norm{a_w a_y \psi_{N,t}} \norm{a_u a_y \psi_{N,t}} \right)^2 \right)^\frac{1}{2}\\
	= & C \hbar^{3 +\frac{1}{2} + (1-\alpha)s}  \int \mathrm{d}q_1  \bigg( \int \mathrm{d}\widetilde{w}  \left|f\left(\widetilde{w}\right) \right|^2  \bigg)\\
	&\qquad\cdot  \left( \iint \mathrm{d}w\mathrm{d}u\ \rchi_{|w-q_1|\leq \sqrt{\hbar}R} \rchi_{|u-q_1|\leq \sqrt{\hbar}R} \left( \int \mathrm{d}y\ \norm{a_w a_y \psi_{N,t}} \norm{a_u a_y \psi_{N,t}} \right)^2 \right)^\frac{1}{2},
	\end{align*}
	where we used the change of variable $\sqrt{\hbar}\widetilde{w} = w - q_1$ in the last inequality. Now, since $\norm{f}_2$ is normalized, we continue to have
	\begin{align*}
	&
	\leq  C \hbar^{3 +\frac{1}{2} + (1-\alpha)s} \int \mathrm{d}q_1 \left( \iint \mathrm{d}w\mathrm{d}u\ \rchi_{|w-q_1|\leq \sqrt{\hbar}R}  \rchi_{|u-q_1|\leq \sqrt{\hbar}R} \left( \int \mathrm{d}y\ \norm{a_w a_y \psi_{N,t}} \norm{a_u a_y \psi_{N,t}} \right)^2 \right)^\frac{1}{2}\\
	&
	\leq  C \hbar^{3 +\frac{1}{2} + (1-\alpha)s} \int \mathrm{d}q_1  \left( \iint \mathrm{d}w\mathrm{d}u\ \rchi_{|w-q_1|\leq \sqrt{\hbar}R}  \rchi_{|u-q_1|\leq\sqrt{\hbar}R} \left( \int \mathrm{d}y\ \norm{a_w a_y \psi_{N,t}}^2\right)\left( \int \mathrm{d}y\ \norm{a_u a_y \psi_{N,t}}^2 \right) \right)^\frac{1}{2} \\
	&
	=   C \hbar^{3 +\frac{1}{2} + (1-\alpha)s}\int \mathrm{d}q_1  \iint \mathrm{d}y \mathrm{d}w\ \rchi_{|w-q_1|\leq \sqrt{\hbar}R}\norm{a_w a_y \psi_{N,t}}^2\\
	&
	=  C \hbar^{3 +\frac{1}{2} + (1-\alpha)s}  \int \mathrm{d}y  \iint \mathrm{d}q_1 \mathrm{d}w \left<a_y \psi_{N,t}, \rchi_{|w-q_1|\leq \sqrt{\hbar}R} a^*_w a_w a_y \psi_{N,t} \right>
	\end{align*}
	by Lemma \ref{N_hbar}
	\begin{align}
	i_{31} 
	&\leq C \hbar^{3 - \frac{3}{2} +\frac{1}{2} + (1-\alpha)s}   \int \mathrm{d}y \left<a_y \psi_{N,t},  a_y \psi_{N,t} \right>\notag\\
	&
	=  C  \hbar^{(1-\alpha)s-1} \left<\psi_{N,t},\frac{\mathcal{N}}{N} \psi_{N,t} \right> \leq  C  \hbar^{(1-\alpha)s-1}.\label{i_31_0}
	\end{align}
	
	On the other hand, from $\mathit{ii}_{31}$ we have
	\begin{align*}
	\mathit{ii}_{31}  \leq & C \hbar^{\frac{3}{2}+\frac{1}{2}}  \int \mathrm{d}q_1 \iint \mathrm{d}w\mathrm{d}u \int \mathrm{d}y \left|\int \mathrm{d}p_1\ e^{\frac{\mathrm{i}}{\hbar}p_1\cdot(w-u)} \rchi_{(w-u)\in {\Omega_\hbar^\alpha}} \nabla_{p_1} \Phi(q_1, p_1)  \right| \\
	&\qquad\cdot\bigg| f \left( \frac{w-q_1}{\sqrt{\hbar}} \right) \overline{f\left(\frac{u-q_1}{\sqrt{\hbar}}\right)}\bigg||\left< a_w a_y \psi_{N,t}, a_u a_y \psi_{N,t} \right>|\\
	\leq & C \hbar^{\frac{3}{2}+\frac{1}{2}}  \int \mathrm{d}q_1 \iint \mathrm{d}w\mathrm{d}u \int \mathrm{d}y\int \mathrm{d}p_1\  \rchi_{(w-u)\in {\Omega_\hbar^\alpha}}  \left| \nabla_{p_1} \Phi(q_1, p_1)  \right| \\
	&\qquad\cdot \bigg| f \left( \frac{w-q_1}{\sqrt{\hbar}} \right) \overline{f\left(\frac{u-q_1}{\sqrt{\hbar}}\right)}\bigg||\left< a_w a_y \psi_{N,t}, a_u a_y \psi_{N,t} \right>|\\
	\leq & C \hbar^{\frac{3}{2}+\frac{1}{2}}  \int \mathrm{d}q_1 \iint \mathrm{d}w\mathrm{d}u \int \mathrm{d}y\  \rchi_{(w-u)\in {\Omega_\hbar^\alpha}} \cdot \bigg| f \left( \frac{w-q_1}{\sqrt{\hbar}} \right) \overline{f\left(\frac{u-q_1}{\sqrt{\hbar}}\right)}\bigg| |\left< a_w a_y \psi_{N,t}, a_u a_y \psi_{N,t} \right>|
	\end{align*}
	Since $f$ is assumed to be compactly supported, we have
	\begin{equation*}
	\begin{aligned}
	\leq & C \hbar^{\frac{3}{2}+\frac{1}{2}} 
	\int \mathrm{d}q_1 \iint  \mathrm{d}w\mathrm{d}u 
	\rchi_{(w-u)\in {\Omega_\hbar^\alpha}} 
	\left|f \left(\frac{w-q_1}{\sqrt{\hbar}}\right) 
	f \left(\frac{u-q_1}{\sqrt{\hbar}}\right) \right|
	\\
	&\qquad \left(\int \mathrm{d}y \norm{ a_w a_y \psi_{N,t}}^2\right)^\frac{1}{2} \cdot \left(\int \mathrm{d}y \norm{ a_u a_y \psi_{N,t}}^2\right)^\frac{1}{2}\\
	\leq & C \hbar^{\frac{3}{2}+\frac{1}{2}}\int \mathrm{d}q_1 \iint  \mathrm{d}w\mathrm{d}u 
	\rchi_{(w-u)\in {\Omega_\hbar^\alpha}} 
	\left|f \left(\frac{w-q_1}{\sqrt{\hbar}}\right) 
	f \left(\frac{u-q_1}{\sqrt{\hbar}}\right) \right|	
	 \rchi_{|w-q_1|\leq\sqrt{\hbar}R}\int \mathrm{d}y \norm{ a_w a_y \psi_{N,t}}^2\\
	 \leq & C \hbar^{\frac{3}{2}+\frac{1}{2}}\norm{f}_{L^\infty}\hbar^{3\alpha}\int \mathrm{d}q_1\int \mathrm{d}w \left|f \left(\frac{w-q_1}{\sqrt{\hbar}}\right) 
	 \right| \rchi_{|w-q_1|\leq\sqrt{\hbar}R}\norm{ a_w \mathcal{N}^{\frac{1}{2}}\psi_{N,t}}^2\\
	 \leq &	C\hbar^{\frac{3}{2}+\frac{1}{2}+3\alpha+\frac{3}{2}-6}= C\hbar^{-1+3(\alpha-\frac{1}{2})}.
	\end{aligned}
	\end{equation*}
where we have used the fact that the test function has compact support in the $q$ variable.

	Now we compare power of $\hbar$ with the one in \eqref{i_31_0} and choose 
	\begin{eqnarray*}\label{s}
	s = \left\lceil \frac{3(\alpha-\frac{1}{2})}{1-\alpha} \right\rceil
	\end{eqnarray*}
	 such that $\mathit{II}_3$ is of order $\hbar^{\frac{1}{2}+3(\alpha-1)}$.

	Now, focus on $I_3$, we use similar strategy as with $\mathit{II}_3$.
	\begin{align}
	I_3  \leq & C \hbar^{\frac{3}{2}}  \int \mathrm{d}q_1 \iint \mathrm{d}w\mathrm{d}u \int \mathrm{d}y  \int_0^1 \mathrm{d}s\bigg| \nabla V (su+(1-s)w-y) - \nabla V(q_1-y) \bigg|\notag\\
	&\qquad\cdot \bigg(\left|\int \mathrm{d}p_1\ e^{\frac{\mathrm{i}}{\hbar}p_1\cdot(w-u)} \rchi_{(w-u)\in ({\Omega_\hbar^\alpha})^c} \nabla_{p_1} \Phi(q_1, p_1)  \right| +\left|\int \mathrm{d}p_1\ e^{\frac{\mathrm{i}}{\hbar}p_1\cdot(w-u)} \rchi_{(w-u)\in {\Omega_\hbar^\alpha}} \nabla_{p_1} \Phi(q_1, p_1)  \right| \bigg)\notag\\
	&\qquad\cdot \bigg| f \left( \frac{w-q_1}{\sqrt{\hbar}} \right) \overline{f\left(\frac{u-q_1}{\sqrt{\hbar}}\right)}\bigg|\left(\int \mathrm{d}\widetilde{q}_2 |f(\widetilde{q}_2)|^2 \right)|\left< a_w a_y \psi_{N,t}, a_u a_y \psi_{N,t} \right>|\notag\\
	\leq &  C \hbar^{\frac{3}{2}}  \int \mathrm{d}q_1 \iint \mathrm{d}w\mathrm{d}u \int \mathrm{d}y  \int_0^1 \mathrm{d}s |su+(1-s)w-q_1|
	\bigg(\left|\int \mathrm{d}p_1\ e^{\frac{\mathrm{i}}{\hbar}p_1\cdot(w-u)} \rchi_{(w-u)\in ({\Omega_\hbar^\alpha})^c} \nabla_{p_1} \Phi(q_1, p_1)  \right|\notag\\
	&\qquad+\left|\int \mathrm{d}p_1\ e^{\frac{\mathrm{i}}{\hbar}p_1\cdot(w-u)}  \rchi_{(w-u)\in {\Omega_\hbar^\alpha}} \nabla_{p_1} \Phi(q_1, p_1)  \right| \bigg)\cdot \bigg| f \left( \frac{w-q_1}{\sqrt{\hbar}} \right) \overline{f\left(\frac{u-q_1}{\sqrt{\hbar}}\right)}\bigg|\notag\\
	&\qquad\cdot \rchi_{|w-q_1|\leq \sqrt{\hbar}R} \rchi_{|u-q_1|\leq \sqrt{\hbar}R}\norm{a_w a_y \psi_{N,t}} \norm{a_u a_y \psi_{N,t}}\notag\\
	&
	=:  i_{32}+\mathit{ii}_{32}  \label{proof_I_3}
	\end{align}
	Again, by Lemma \ref{estimate_oscillation} and the bounds for number operator and localized number operator, we have for $i_{32}$ that
	by using the symmetric property of the integrals
	\begin{align*}
	i_{32}  \leq & C \hbar^{\frac{3}{2}+ (1-\alpha)s}  \int \mathrm{d}q_1 \iint \mathrm{d}w\mathrm{d}u \int_0^1 \mathrm{d}s\  |su+(1-s)w-q_1| \cdot \bigg| f \left( \frac{w-q_1}{\sqrt{\hbar}} \right) \overline{f\left(\frac{u-q_1}{\sqrt{\hbar}}\right)}\bigg|\\
		& \qquad\cdot \rchi_{|w-q_1|\leq \sqrt{\hbar}R}\rchi_{|u-q_1|\leq \sqrt{\hbar}R}  \int \mathrm{d}y \norm{a_w a_y \psi_{N,t}} \norm{a_u a_y \psi_{N,t}}\\
	\leq & C \hbar^{\frac{3}{2}+ (1-\alpha)s}  \int \mathrm{d}q_1 \iint \mathrm{d}w\mathrm{d}u (|u-q_1|+|w-q_1|) \cdot \bigg| f \left( \frac{w-q_1}{\sqrt{\hbar}} \right) \overline{f\left(\frac{u-q_1}{\sqrt{\hbar}}\right)}\bigg|\\
	& \qquad\cdot \rchi_{|w-q_1|\leq \sqrt{\hbar}R}\rchi_{|u-q_1|\leq \sqrt{\hbar}R} \left( \int \mathrm{d}y \norm{a_w a_y \psi_{N,t}}^2\right)^{\frac{1}{2}} \left( \int \mathrm{d}y \norm{a_u a_y \psi_{N,t}}^2\right)^{\frac{1}{2}}\\
		\leq & C \hbar^{\frac{3}{2}+ (1-\alpha)s}  \int \mathrm{d}q_1 \iint \mathrm{d}w\mathrm{d}u |u-q_1| \cdot \bigg| f \left( \frac{w-q_1}{\sqrt{\hbar}} \right)\bigg| \bigg|f\left(\frac{u-q_1}{\sqrt{\hbar}}\right)\bigg|\\
	& \qquad\cdot \rchi_{|w-q_1|\leq \sqrt{\hbar}R}\rchi_{|u-q_1|\leq \sqrt{\hbar}R} \int \mathrm{d}y \norm{a_w a_y \psi_{N,t}}^2\\
	\leq & C \hbar^{3 +\frac{1}{2} + (1-\alpha)s} \int \mathrm{d}\tilde u |f(\tilde u)|^2 \int \mathrm{d}q_1\int \mathrm{d}w \left|f \left(\frac{w-q_1}{\sqrt{\hbar}}\right) 
	\right| \rchi_{|w-q_1|\leq\sqrt{\hbar}R}\norm{ a_w \mathcal{N}^{\frac{1}{2}}\psi_{N,t}}^2\\
	\leq & C \hbar^{3+ \frac{1}{2}+(1-\alpha)s+\frac{3}{2}-6}=C\hbar^{-1+(1-\alpha)s}.
	\end{align*}
	where we used Lemma \ref{N_hbar} and the bounds for number operator. Similarly, for $\mathit{ii}_{32}$, we have
	\begin{align*}
	\mathit{ii}_{32} \leq &  C \hbar^{\frac{3}{2}}  \int \mathrm{d}q_1 \iint \mathrm{d}w\mathrm{d}u \int \mathrm{d}y  \int_0^1 \mathrm{d}s |su+(1-s)w-q_1|\int \mathrm{d}p_1\ \left|\rchi_{(w-u)\in {\Omega_\hbar^\alpha}}  \nabla_{p_1} \Phi(q_1, p_1)  \right|\\
	&\qquad\cdot \bigg| f \left( \frac{w-q_1}{\sqrt{\hbar}} \right) \overline{f\left(\frac{u-q_1}{\sqrt{\hbar}}\right)}\bigg| \rchi_{|w-q_1|\leq \sqrt{\hbar}R} \rchi_{|u-q_1|\leq \sqrt{\hbar}R}\norm{a_w a_y \psi_{N,t}} \norm{a_u a_y \psi_{N,t}}\\
	\leq   &C \hbar^{\frac{3}{2}} \int \mathrm{d}q_1 \iint \mathrm{d}w\mathrm{d}u \int \mathrm{d}y  (|u-q_1|+|w-q_1|)\rchi_{(w-u)\in {\Omega_\hbar^\alpha}} \bigg| f \left( \frac{w-q_1}{\sqrt{\hbar}} \right) \overline{f\left(\frac{u-q_1}{\sqrt{\hbar}}\right)}\bigg|\\
	&\qquad\cdot \rchi_{|w-q_1|\leq \sqrt{\hbar}R} \rchi_{|u-q_1|\leq \sqrt{\hbar}R}\norm{a_w a_y \psi_{N,t}} \norm{a_u a_y \psi_{N,t}}\\
    \leq   &C \hbar^{\frac{3}{2}} \int \mathrm{d}q_1 \iint \mathrm{d}w\mathrm{d}u \int \mathrm{d}y  |u-q_1|\rchi_{(w-u)\in {\Omega_\hbar^\alpha}} \bigg| f \left( \frac{w-q_1}{\sqrt{\hbar}} \right)\bigg|
    \bigg|f\left(\frac{u-q_1}{\sqrt{\hbar}}\right)\bigg|\norm{a_u \mathcal{N}^{\frac{1}{2}} \psi_{N,t}}^2\\
    \leq & C \hbar^{\frac{3}{2}+\frac{1}{2}}\norm{f}_{L^\infty}\hbar^{3\alpha}\int \mathrm{d}q_1\int \mathrm{d}w \left|f \left(\frac{w-q_1}{\sqrt{\hbar}}\right) 
    \right| \rchi_{|w-q_1|\leq\sqrt{\hbar}R}\norm{ a_w \mathcal{N}^{\frac{1}{2}}\psi_{N,t}}^2\\
    \leq & C\hbar^{-1+3(\alpha-\frac{1}{2})}.
    \end{align*}
	Then, by choosing the same $s$ as in \eqref{s} we get the same estimate for $I_3$.
	Therefore, $\mathit{II}_3$ and $I_3$ together, we have the bound of order $\hbar^{\frac{1}{2}+3(\alpha-1)}$ for $\alpha\in(\frac{1}{2},1)$.
\end{proof}

\subsubsection{Proof of Proposition \ref{mk_BBFKY_to_infty}}

\begin{proof}	
	To calculate the bound in \eqref{mk_BBFKY_to_infty_1} for $\widehat{\mathcal R}_k$. It has automatically an $1/N$ as a factor, therefore, we expect it has better estimates than the other remainder terms. More precisely, we can split the integrals as before,
	\begin{align*}
	&\bigg| \frac{1}{2N} \dotsint (\mathrm{d}q \mathrm{d}p)^{\otimes k} (\mathrm{d}w\mathrm{d}u)^{\otimes k}  \Phi(q_1, p_1,\dots,q_k,p_k) \sum_{j\neq i}^k  \bigg[ V(u_j-u_i) - V(w_j -w_i) \bigg] \\
	&\qquad\cdot \left( f_{q,p}^\hbar (w) \overline{f_{q,p}^\hbar (u)} \right)^{\otimes k}   \left< a_{w_k} \cdots a_{w_1} \psi_{N,t}, a_{u_k} \cdots a_{u_1} \psi_{N,t} \right> \bigg| \\
	=  &\bigg| \frac{1}{2N} \dotsint (\mathrm{d}q \mathrm{d}p)^{\otimes k} (\mathrm{d}w\mathrm{d}u)^{\otimes k}  \Phi(q_1, p_1,\dots,q_k,p_k) \sum_{j\neq i}^k  \bigg[ V(u_j-u_i) - V(w_j -w_i) \bigg]  \left( f_{q,p}^\hbar (w) \overline{f_{q,p}^\hbar (u)} \right)^{\otimes k} \\
	&\qquad\cdot\prod_{n=1}^k \left(\rchi_{(w_n-u_n)\in ({\Omega_\hbar^\alpha})^c}+ \rchi_{(w_n-u_n)\in {\Omega_\hbar^\alpha}} \right) \left< a_{w_k} \cdots a_{w_1} \psi_{N,t}, a_{u_k} \cdots a_{u_1} \psi_{N,t} \right> \bigg|,
	\end{align*}
	where ${\Omega_\hbar^\alpha}$ is defined as in \eqref{estimate_oscillation_omega}. Since $V \in W^{2,\infty}$ and recall $\hbar^3 = N^{-1}$, we have
	\begin{align*}
	\leq &C(k)\norm{V}_\infty \hbar^{3-\frac{3}{2}k} \dotsint (\mathrm{d}q \mathrm{d}w\mathrm{d}u)^{\otimes k}\prod_{n=1}^k \left|f \left(\frac{w_n-q_n}{\sqrt{\hbar}}\right)f \left(\frac{u_n-q_n}{\sqrt{\hbar}}\right)\right|\norm{ a_{w_k} \cdots a_{w_1} \psi_{N,t}} \norm{a_{u_k} \cdots a_{u_1} \psi_{N,t} }\\
	&\qquad\cdot \Bigg|\prod_{n=1}^k \left(\rchi_{(w_n-u_n)\in ({\Omega_\hbar^\alpha})^c}+ \rchi_{(w_n-u_n)\in {\Omega_\hbar^\alpha}} \right) \dotsint (\mathrm{d}p)^{\otimes k} e^{\frac{i}{\hbar}\sum_{m=1}^k p_m \cdot (w_m -u_m)} \Phi(q_1,\dots, p_k) \Bigg| \\
	\leq & C \hbar^{3-\frac{3}{2}k}  \max_{0\leq \ell \leq k } \dotsint (\mathrm{d}q\mathrm{d}w\mathrm{d}u)^{\otimes k}\prod_{n=1}^k  \left|f\left( \frac{w_n-q_n}{\sqrt{\hbar}}\right)  f\left( \frac{u_n-q_n}{\sqrt{\hbar}}\right) \right|\norm{ a_{w_k} \cdots a_{w_1} \psi_{N,t}} \norm{a_{u_k} \cdots a_{u_1} \psi_{N,t} }\\
	&\qquad\cdot \bigg|\dotsint (\mathrm{d}p)^{\otimes k}  \left( \rchi_{(w_1-u_1)\in {\Omega_\hbar^\alpha}} \cdots  \rchi_{(w_\ell-u_\ell)\in {\Omega_\hbar^\alpha}} \rchi_{(w_{\ell+1}-u_{\ell+1})\in ({\Omega_\hbar^\alpha})^c}\cdots\rchi_{(w_{k}-u_{k})\in ({\Omega_\hbar^\alpha})^c} \right)  \nabla_{q_j} \Phi\cdot e^{\frac{i}{\hbar}\vec{p}_k \cdot (\vec{w}_k -\vec{u}_k)}\bigg| \\
	=&C \hbar^{3-\frac{3}{2}k} \max_{0\leq \ell \leq k }  \dotsint (\mathrm{d}q\mathrm{d}w\mathrm{d}u)^{\otimes k}  \bigg| \dotsint (\mathrm{d}p)^{\otimes \ell} \rchi_{(w_1-u_1)\in {\Omega_\hbar^\alpha}} \cdots  \rchi_{(w_\ell-u_\ell)\in {\Omega_\hbar^\alpha}} e^{\frac{i}{\hbar} \sum_{m=1}^\ell {p}_{m} \cdot ({w}_{m} -{u}_{m})} \\
	&\qquad\cdot\dotsint (\mathrm{d}p)^{\otimes (k-\ell)}  \rchi_{(w_{\ell+1}-u_{\ell+1})\in ({\Omega_\hbar^\alpha})^c}\cdots\rchi_{(w_{k}-u_{k})\in ({\Omega_\hbar^\alpha})^c} e^{\frac{i}{\hbar} \sum_{m=k-\ell}^\ell {p}_{m} \cdot ({w}_{m} -{u}_{m})} \nabla_{q_j} \Phi(q_1,p_1,\dots,q_k,p_k) \bigg| \\
	&\qquad\cdot\prod_{n=1}^k  \left|f\left( \frac{w_n-q_n}{\sqrt{\hbar}}\right)  f\left( \frac{u_n-q_n}{\sqrt{\hbar}}\right) \right|  \norm{a_{w_k}\cdots a_{w_1} \psi_{N,t}}\norm{ a_{u_k}\cdots a_{u_1} \psi_{N,t}},
	\end{align*}
	where we apply similar argument in \eqref{LemRk_proof_1} in the last inequality. Note here that the constant $C$ above is dependent on $k$. Applying Lemma \ref{estimate_oscillation} we have
	\begin{align*}
	\leq & C \max_{0\leq \ell \leq k }  \hbar^{3-\frac{3}{2}k+ (1-\alpha)(k-\ell)s} \dotsint (\mathrm{d}q\mathrm{d}w\mathrm{d}u)^{\otimes k}  \left(\rchi_{(w_1-u_1)\in {\Omega_\hbar^\alpha}} \cdots  \rchi_{(w_\ell-u_\ell)\in {\Omega_\hbar^\alpha}} \right) \\
	&\qquad\cdot\prod_{n = 1}^k \left|f\left( \frac{w_n-q_n}{\sqrt{\hbar}}\right)  f\left( \frac{u_n-q_n}{\sqrt{\hbar}}\right) \right|  \norm{a_{w_k}\cdots a_{w_1} \psi_{N,t}}\norm{ a_{u_k}\cdots a_{u_1} \psi_{N,t}} \\
	= & C \max_{0\leq \ell \leq k }  \hbar^{3-\frac{3}{2}k+ (1-\alpha)(k-\ell)s}  \dotsint (\mathrm{d}q\mathrm{d}w\mathrm{d}u)^{\otimes k}  \left(\rchi_{(w_1-u_1)\in {\Omega_\hbar^\alpha}} \cdots  \rchi_{(w_\ell-u_\ell)\in {\Omega_\hbar^\alpha}} \right) \\
	&\qquad\cdot\prod_{n = 1}^k \left|f\left( \frac{w_n-q_n}{\sqrt{\hbar}}\right)  f\left( \frac{u_n-q_n}{\sqrt{\hbar}}\right) \right| \rchi_{|w_n-q_n|\leq\sqrt{\hbar}R}\rchi_{|u_n-q_n|\leq\sqrt{\hbar}R}  \norm{a_{w_k}\cdots a_{w_1} \psi_{N,t}}\norm{ a_{u_k}\cdots a_{u_1} \psi_{N,t}}\\
	\leq &   C \max_{0\leq \ell \leq k }  \hbar^{3-\frac{3}{2}k+ (1-\alpha)(k-\ell)s}  \dotsint (\mathrm{d}q)^{\otimes k}  \dotsint (\mathrm{d}w\mathrm{d}u)^{\otimes k} \left(\rchi_{(w_1-u_1)\in {\Omega_\hbar^\alpha}} \cdots  \rchi_{(w_\ell-u_\ell)\in {\Omega_\hbar^\alpha}} \right)\\
	&\qquad\cdot \prod_{n = 1}^k \left|f\left( \frac{w_n-q_n}{\sqrt{\hbar}}\right)  f\left( \frac{u_n-q_n}{\sqrt{\hbar}}\right) \right|  \norm{a_{w_k}\cdots a_{w_1} \psi_{N,t}}^2 \\
	\leq  & C \max_{0\leq \ell \leq k }  \hbar^{3 -\frac{3}{2}k+ (1-\alpha)(k-\ell)s+3\alpha \ell}  \dotsint (\mathrm{d}q\mathrm{d}w)^{\otimes k}\prod_{n = 1}^k \left|f\left( \frac{w_n-q_n}{\sqrt{\hbar}}\right)\right| \norm{a_{w_k}\cdots a_{w_1} \psi_{N,t}}^2 \\
	\leq & C \max_{0\leq \ell \leq k }  \hbar^{3-\frac{3}{2}k+ (1-\alpha)(k-\ell)s +3\alpha \ell+\frac{3}{2}k-3k}\leq C \max_{0\leq \ell \leq k }  \hbar^{3-3k+(1-\alpha)sk+(3\alpha -(1-\alpha)s)\ell }.
	\end{align*}
	By choosing $s = \left\lceil \frac{3\alpha}{1-\alpha} \right\rceil$, we obtain estimate for $\widehat{\mathcal{R}}_k$.
	
	Next, we switch to estimate \eqref{mk_BBFKY_to_infty_2} for $\widetilde{\mathcal R}_k$. Repeated the steps in the proof of Proposition \ref{lemma_vla_k1_convg}, we have
	\begin{align*}
	&\left| \dotsint  (\mathrm{d}q \mathrm{d}p)^{\otimes k}\nabla_{\vec{p}_k} \Phi (q_1, p_1,\dots,q_k,p_k)  \cdot \widetilde{\mathcal{R}}_k \right|\\
	=&\bigg| \sum_{j=1}^k  \dotsint (\mathrm{d}q \mathrm{d}p)^{\otimes k} (\mathrm{d}w\mathrm{d}u)^{\otimes k} \nabla_{p_j} \Phi(q_1, p_1,\dots,q_k,p_k)  \cdot \iint \mathrm{d}y\mathrm{d}v \iint \mathrm{d}q_{k+1}\mathrm{d}p_{k+1}\\
	&\qquad\cdot \int_0^1 \mathrm{d}s \left[ \nabla V (su_j + (1-s)w_j - y) - \nabla V(q_j- y )+ \nabla V(q_j-y )- \nabla V(q_j- q_{k+1} ) \right] \\
	&\qquad\cdot\left( f^\hbar_{q,p} (w) \overline{f^\hbar_{q,p} (u)} \right)^{\otimes k} f^\hbar_{q_{k+1}, p_{k+1}} (y) \overline{f^\hbar_{q_{k+1}, p_{k+1}} (v)}  \left< a_{w_k} \cdots a_{w_1} a_y \psi_{N,t}, a_{u_k} \cdots a_{u_1} a_v \psi_{N,t} \right> \bigg|. \\
	\intertext{Appling the $\hbar$-weighted Dirac-delta function as in \eqref{hbar_fourier}, we have}
	=& (2\pi)^3 \hbar^{3-\frac{3}{2}} \bigg| \sum_{j=1}^k  \dotsint (\mathrm{d}q \mathrm{d}p)^{\otimes k} (\mathrm{d}w\mathrm{d}u)^{\otimes k} \nabla_{p_j} \Phi(q_1, p_1,\dots,q_k,p_k) \cdot \iint \mathrm{d}y\mathrm{d}q_{k+1}\\
	&\qquad\cdot \int_0^1 \mathrm{d}s \left[ \nabla V (su_j + (1-s)w_j - y) - \nabla V(q_j- y )+ \nabla V(q_j- y )- \nabla V(q_j- q_{k+1} ) \right] \\
	&\qquad\cdot\left( f^\hbar_{q,p} (w) \overline{f^\hbar_{q,p} (u)} \right)^{\otimes k} \left|f \left(\frac{y-q_{k+1}}{\sqrt{\hbar}} \right)\right|^2  \left< a_{w_k} \cdots a_{w_1} a_y \psi_{N,t}, a_{u_k} \cdots a_{u_1} a_y \psi_{N,t} \right> \bigg| \\
	\leq &  (2\pi)^3 \hbar^{3-\frac{3}{2}k} \sum_{j=1}^k \dotsint (\mathrm{d}q\mathrm{d}w\mathrm{d}u)^{\otimes k} \prod_{n=1}^k \left|\dotsint (\mathrm{d}p)^{\otimes k}  \nabla_{p_j}\Phi(q_1, p_1,\dots,q_k,p_k)  e^{\frac{\mathrm{i}}{\hbar} p_n \cdot (w_n-u_n)} \right|  \iint \mathrm{d}y\mathrm{d}\widetilde{q}_{k+1}\\
	&\qquad\cdot\left( \int_0^1 \mathrm{d}s \left| \nabla V (su_j + (1-s)w_j - y) - \nabla V(q_j- y ) ) \right|+\left|\nabla V(q_j-y) -\nabla V(q_j-y + \sqrt{\hbar}\widetilde{q}_{k+1} ) \right|\right)\\
	&\qquad\cdot\left|f \left(\frac{w_n-q_n}{\sqrt{\hbar}}\right)f \left(\frac{u_n-q_n}{\sqrt{\hbar}}\right)\right|\left|f \left(\widetilde{q}_{k+1} \right)\right|^2 |\left< a_{w_k} \cdots a_{w_1} a_y \psi_{N,t}, a_{u_k} \cdots a_{u_1} a_y \psi_{N,t} \right>|.\\
	\intertext{Using the fact that $\nabla V$ is Lipchitz continuous and that $f$ is compactly supported, we have}
	\leq & (2\pi)^3 \hbar^{3-\frac{3}{2}k} \sum_{j=1}^k \dotsint (\mathrm{d}q\mathrm{d}w\mathrm{d}u)^{\otimes k} \prod_{n=1}^k \left|\dotsint (\mathrm{d}p)^{\otimes k}  \nabla_{p_j}\Phi(q_1, p_1,\dots,q_k,p_k)  e^{\frac{\mathrm{i}}{\hbar} p_n \cdot (w_n-u_n)} \right|  \iint \mathrm{d}y\mathrm{d}\widetilde{q}_{k+1}\\
	&\qquad\cdot\left( \int_0^1 \mathrm{d}s \left|su_j + (1-s)w_j-q_j\right|+\left|\sqrt{\hbar}\widetilde{q}_{k+1} \right|\right)
	\left|f \left(\frac{w_n-q_n}{\sqrt{\hbar}}\right)f \left(\frac{u_n-q_n}{\sqrt{\hbar}}\right)\right| |f(\widetilde{q}_{k+1})|^2\\
	&\qquad\cdot \rchi_{|w_n-q_n|\leq\sqrt{\hbar}R}\rchi_{|u_n-q_n|\leq\sqrt{\hbar}R} |\norm{a_{w_k} \cdots a_{w_1} a_y \psi_{N,t}} \norm{a_{u_k} \cdots a_{u_1} a_y \psi_{N,t}}\\
	=: & I_4 + \mathit{II}_4 
	\end{align*}
	Focusing on $I_4$, we split the integral as follows
	\begin{align*}
	I_4 & =  (2\pi)^3 \hbar^{3-\frac{3}{2}k} \sum_{j=1}^k   \dotsint (\mathrm{d}q\mathrm{d}w\mathrm{d}u)^{\otimes k} \bigg|\prod_{n=1}^k  \left(\rchi_{(w_n-u_n)\in ({\Omega_\hbar^\alpha})^c}+\rchi_{(w_n-u_n)\in {\Omega_\hbar^\alpha}} \right) \dotsint (\mathrm{d}p)^{\otimes k}  \nabla_{p_j}\Phi(q_1, p_1,\dots,q_k,p_k)\\
	&\quad\cdot  e^{\frac{\mathrm{i}}{\hbar} \sum_{m=1}^k p_m \cdot (w_m-u_m)} \bigg| \iint \mathrm{d}y\mathrm{d}\widetilde{q}_{k+1} \int_0^1 \mathrm{d}s \left|su_j + (1-s)w_j-q_j\right| 
	\left|f \left(\frac{w_n-q_n}{\sqrt{\hbar}}\right)f \left(\frac{u_n-q_n}{\sqrt{\hbar}}\right)\right| |f(\widetilde{q}_{k+1})|^2\\
	&\quad\cdot \rchi_{|w_n-q_n|\leq\sqrt{\hbar}R}\rchi_{|u_n-q_n|\leq\sqrt{\hbar}R} |\norm{a_{w_k} \cdots a_{w_1} a_y \psi_{N,t}} \norm{a_{u_k} \cdots a_{u_1} a_y \psi_{N,t}}.
	\end{align*}
	where ${\Omega_\hbar^\alpha}$ is defined as in \eqref{estimate_oscillation_omega}. We do similar computations for $\mathit{II}_4$,
	\begin{align*}
	\mathit{II}_4  = & (2\pi)^3 \hbar^{3-\frac{3}{2}k} \sum_{j=1}^k    \dotsint (\mathrm{d}q\mathrm{d}w\mathrm{d}u)^{\otimes k} \bigg|\prod_{n=1}^k \dotsint (\mathrm{d}p)^{\otimes k} \left(\rchi_{(w_n-u_n)\in ({\Omega_\hbar^\alpha})^c}+\rchi_{(w_n-u_n)\in {\Omega_\hbar^\alpha}} \right) \nabla_{p_j}\Phi(q_1, p_1,\dots,q_k,p_k) \\
	&\quad\cdot e^{\frac{\mathrm{i}}{\hbar} p_n \cdot (w_n-u_n)} \bigg| \iint \mathrm{d}y\mathrm{d}\widetilde{q}_{k+1} \left|\sqrt{\hbar}\widetilde{q}_{k+1} \right|
	\left|f \left(\frac{w_n-q_n}{\sqrt{\hbar}}\right)f \left(\frac{u_n-q_n}{\sqrt{\hbar}}\right)\right| |f(\widetilde{q}_{k+1})|^2\\
	&\quad\cdot \rchi_{|w_n-q_n|\leq\sqrt{\hbar}R}\rchi_{|u_n-q_n|\leq\sqrt{\hbar}R} |\norm{a_{w_k} \cdots a_{w_1} a_y \psi_{N,t}} \norm{a_{u_k} \cdots a_{u_1} a_y \psi_{N,t}}.
	\end{align*}
	Repeating the proof of Proposition \ref{lemma_vla_k1_convg}, namely in \eqref{proof_I_3} and \eqref{proof_II_3}, as well as the proof for estimate \eqref{mk_BBFKY_to_infty_1}, we eventually obtain the desired estimates.
\end{proof}

\vspace*{\fill}

\textit{Acknowledgements:}
We are also grateful to the anonymous referee for carefully reading our manuscript and providing helpful comments.
We acknowledge support by the Deutsche Forschungsgemeinschaft through the grants CH 955/4-1.
Jinyeop Lee was partially supported by Samsung
Science and Technology Foundation (SSTF-BA1401-51) and by the National
Research Foundation of Korea(NRF) grant funded by the Korea government(MSIT)
(NRF-2019R1A5A1028324 and NRF-2020R1F1A1A01070580).
	
	\clearpage

\end{document}